\crefname{equation}{}{}
\newtheorem{theorem}{Theorem}
\newtheorem{observation}[theorem]{Observation}
\newtheorem{corollary}[theorem]{Corollary}
\newtheorem{lemma}[theorem]{Lemma}
\theoremstyle{definition}
\newtheorem{example}[theorem]{Example}
\newcommand{\R}{\mathbb{R}}
\newcommand{\N}{\mathbb{N}}
\newcommand{\set}[1]{\{#1\}}
\newcommand{\fromto}[2]{\set{#1,\dots,#2}}
\newcommand{\El}{E^\ell}
\newcommand{\Ef}{E^f}
\newcommand{\Pst}{\mathcal{P}_{st}}
\newcommand{\BSP}{\textsc{Bilevel Shortest Path}}
\newcommand{\BSPweakDir}{\textsc{BSP-Weak-Dir}}
\newcommand{\BSPweakUndir}{\textsc{BSP-Weak-Undir}}
\newcommand{\BSPstrongDir}{\textsc{BSP-Strong-Dir}}
\newcommand{\BSPstrongUndir}{\textsc{BSP-Strong-Undir}}
\newcommand{\kcycle}[1]{\textsc{Shortest-${#1}$-Cycle}}
\newcommand{\MinMaxHam}{\textsc{Min-Max-Ham}}
\newcommand{\HamPath}{\textsc{Hamiltonian Path}}
\newcommand{\SP}{\textsc{Shortest Path}}
\newcommand{\VDP}{\textsc{Vertex-Disjoint Paths}}
\newcommand{\IndSet}{\textsc{Independent Set}}
\newcommand{\SAT}{\textsc{3SAT}}
\newcommand{\QSAT}{$\exists\forall$-\textsc{DNF-3SAT}}
\DeclareMathOperator*{\argmin}{arg\,min}
\DeclareMathOperator*{\OPTw}{\textsc{OPT}_\text{weak}}
\DeclareMathOperator*{\OPTs}{\textsc{OPT}_\text{str}}
\DeclareMathOperator*{\OptFOLs}{\textsc{OPT}^{\text{\,f}}_\text{str}}
\DeclareMathOperator*{\OptFOLw}{\textsc{OPT}^{\text{\,f}}_\text{weak}}
\DeclareMathOperator*{\ProblemFOLs}{\textsc{FOL}_\text{str}}
\DeclareMathOperator*{\ProblemFOLw}{\textsc{FOL}_\text{weak}}
\begin{document}

\title{On the Complexity of the \\ Bilevel Shortest Path Problem}

\author[1]{Dorothee Henke\textsuperscript{\orcidlink{0000-0001-9190-642X}}}
\author[2]{Lasse Wulf\textsuperscript{\orcidlink{0000-0001-7139-4092}}}

\affil[1]{Chair of Business Decisions and Data Science, University of Passau, Germany. \texttt{dorothee.henke@uni-passau.de}}	
\affil[2]{Section for Algorithms, Logic and Graphs, Technical University of Denmark, Kongens Lyngby, Denmark. \texttt{lawu@dtu.dk}}

\date{\vspace{-2\baselineskip}}

\maketitle

\begin{abstract}
  We introduce a new bilevel version of the classic shortest path problem and completely characterize its computational complexity with respect to several problem variants. In our problem, the leader and the follower each control a subset of the edges of a graph and together aim at building a path between two given vertices, while each of the two players minimizes the cost of the resulting path according to their own cost function. We investigate both directed and undirected graphs, as well as the special case of directed acyclic graphs. Moreover, we distinguish two versions of the follower's problem: Either they have to complete the edge set selected by the leader such that the joint solution is exactly a path, or they have to complete the edge set selected by the leader such that the joint solution is a superset of a path. In general, the bilevel problem turns out to be much harder in the former case: We show that the follower's problem is already NP-hard here and that the leader's problem is even hard for the second level of the polynomial hierarchy, while both problems are one level easier in the latter case. Interestingly, for directed acyclic graphs, this difference turns around, as we give a polynomial-time algorithm for the first version of the bilevel problem, but it stays NP-hard in the second case. Finally, we consider restrictions that render the problem tractable. We prove that, for a constant number of leader's edges, one of our problem variants is actually equivalent to the shortest-$k$-cycle problem, which is a known combinatorial problem with partially unresolved complexity status. In particular, our problem admits a polynomial-time randomized algorithm that can be derandomized if and only if the shortest-$k$-cycle problem admits a deterministic polynomial-time algorithm.
\end{abstract}

\noindent\textbf{Keywords:} bilevel optimization; shortest path problem; computational complexity

\bigskip

\noindent\textbf{Funding:} Lasse Wulf acknowledges support by the Austrian Science Fund (FWF): W1230, and the Carlsberg Foundation CF21-0302 ``Graph Algorithms with Geometric Applications''.



\newpage


\section{Introduction} \label{sec:intro}

In \emph{bilevel optimization},
two decision makers,
called the \emph{leader} and the \emph{follower},
solve their optimization problems
in a hierarchical order:
First, the leader selects their solution
and second, the follower solves their own problem,
which is parameterized in the leader's decision.
The follower's decision variables may in turn appear in the leader's objective function
(or even in the leader's constraints).
Therefore, the leader has to anticipate the follower's reaction
already when making their own decision.
This usually makes bilevel optimization problems hard to solve.
The field of bilevel optimization currently receives a lot of interest,
see e.g.\ \cite{ColsonMarcotteSavard2007,DempeEtAl2015,Dempe2020,BeckSchmidt2021} for general overviews and introductions.

Less attention has been devoted to bilevel \emph{combinatorial} optimization problems
and the analysis of their computational complexity.
Well-known complexity results
include the fact that
bilevel optimization problems
in which both objective functions and all constraints are linear
are (strongly) NP-hard in case of continuous variables
and $\Sigma^p_2$-hard in case of binary variables~\cite{Jeroslow1985,HansenJaumardSavard1992}.
The class $\Sigma^p_2$ belongs to the polynomial-time hierarchy
and is a natural framework for problems involving two decision stages.
We refer to~\cite{Woeginger2021}
for an accessible introduction to this complexity class
and to~\cite{GrueneWulf2024}
for a study of many $\Sigma^p_2$-complete problems
and techniques to obtain $\Sigma^p_2$-completeness proofs.
An important reason for studying this complexity class is that
$\Sigma^p_2$-hard problems cannot be expected to be representable as a polynomial-size integer linear program
(unless $\mathrm{NP} = \Sigma^p_2$)
and therefore require the development of new solution techniques~\cite{Woeginger2021}.

The aim of this work is to investigate the computational complexity
of a specific bilevel combinatorial optimization problem.
It is based on the classic {\SP} problem with nonnegative edge costs.
In our {\BSP} problem,
the edge set of a directed or undirected graph
is partitioned into edges owned by the leader and edges owned by the follower,
and each of the two players has their own cost function on all edges of the graph.
Their common goal is to build a path between two designated vertices,
where first the leader selects some of their edges
and then the follower completes the solution to a path by adding some of their edges.
Each of the two players aims at minimizing the total costs
of the whole resulting path with respect to their own cost function.
For a more formal definition of the problem,
see \cref{sec:BSP-def}.
This problem has not appeared in the literature before,
but seems very natural
in view of the types of bilevel combinatorial optimization problems
current research is interested in,
see \cref{sec:literature}.

One can distinguish several variants of the {\BSP} problem:
The graph may be directed or undirected, or even directed acyclic.
Moreover,
when the follower completes the partial path selected by the leader,
one of the following two options can be seen as feasible:
Either all edges selected by the leader
have to be contained in the final path,
or the follower might be allowed to ignore some of the edges selected by the leader
if this is cheaper for them,
i.e.\ the joint solution is allowed to be a superset of a path.
We call the former the \emph{strong path completion} variant
and the latter the \emph{weak path completion} variant.
In total,
this gives six problem variants.
For each of them,
we analyze the computational complexity of the leader's problem and of the follower's problem.

\textbf{Our results.}
Our complexity results are summarized in \cref{tab:complexity-results}. We completely classify each problem variant into one of the three complexity classes P (i.e.\ solvable in polynomial time), NP-complete, or $\Sigma^p_2$-complete (i.e.\ even harder than NP-complete).
As expected in bilevel optimization, the leader's problem is always at least as difficult as the follower's problem. 
Another pattern that seems to emerge from our results is that the variant of strong path completion is usually more difficult than the variant of weak path completion.
This pattern is also not surprising, since already the follower's problem in the strong path completion variant involves finding a path through 
pre-specified edges, a task reminiscent of the {\HamPath} problem. 
Note however that, in the case of directed acyclic graphs, the pattern actually reverses, and the strong path completion variant is easier than the weak path completion variant. 
This could be deemed surprising.

We highlight in particular our result that the leader's problem 
in the strong path completion variant is even $\Sigma^p_2$-complete (\cref{thm:BSP-strong-sigma-2-complete}) because, in general, relatively few $\Sigma^p_2$-completeness proofs for combinatorial bilevel problems seem to be known.
Our proof is based on recent advances in the understanding of $\Sigma^p_2$-completeness in the context of min-max combinatorial optimization \cite{GrueneWulf2024}. 
Note however that, compared to the proofs in~\cite{GrueneWulf2024}, many nontrivial adaptions are still necessary for the bilevel setting of this paper. 
More details are given in \cref{subsec:sigma-2-completeness}.

Most of our hardness results can be further strengthened to inapproximability results, 
i.e.\ the corresponding problems are not approximable in polynomial time, unless $\mathrm{P} = \mathrm{NP}$. This is further discussed in \cref{sec:inapproximability}.

Finally, since most results in \cref{tab:complexity-results} are negative, we consider restrictions that make the {\BSP} problem tractable.
We consider the case where there are only constantly many edges that the leader can influence (denoted by $\lvert \El \rvert = O(1)$).
In this special case, we show that many variants can be solved efficiently, see \cref{tab:complexity-results-few-leaders-edges}. 
Interestingly, the strong path completion variant of the problem in undirected graphs turns out to be not so straightforward. 
We show that it is equivalent to the \kcycle{k} problem, 
which is a known combinatorial problem with partially unresolved complexity status \cite{bjorklund2012shortest}. 
This implies the following: 
For polynomially bounded edge costs, our problem admits a polynomial-time randomized algorithm 
that can be derandomized if and only if the \kcycle{k} problem can be solved in deterministic polynomial time (for constant $k$ and polynomially bounded edge weights).
For general edge weights, our problem can be efficiently solved if and only if the \kcycle{k} problem can be efficiently solved. A more detailed explanation of this equivalence is given in \cref{sec:few-leader-edges}.

\begin{table}
    \centering
    \begin{tabular}{ll|lll}
      && undirected & directed & directed acyclic\\
      \hline
      \multirow{2}{4em}{weak}
      & leader & NP (\cref{thm:BSP-weak-NP-complete}) & NP (\cref{thm:BSP-weak-NP-complete}) & NP (\cref{thm:BSP-weak-NP-complete}) \\
      & follower & P (\cref{lem:weak-followers-problem}) & P (\cref{lem:weak-followers-problem}) & P (\cref{lem:weak-followers-problem}) \\
      \hline
      \multirow{2}{4em}{strong}
      & leader & $\Sigma^p_2$ (\cref{thm:BSP-strong-sigma-2-complete}) & $\Sigma^p_2$ (\cref{thm:BSP-strong-sigma-2-complete}) & P (\cref{thm:BSP-strong-acyclic-P}) \\
      & follower & NP (\cref{thm:BSP-strong-follower-NP-complete}) & NP (\cref{thm:BSP-strong-follower-NP-complete}) & P (\cref{lem:BSP-strong-follower-acyclic}) \\

    \end{tabular}
    \caption{Complexity results obtained in this paper for the weak and the strong path completion variant of the {\BSP} problem in undirected, directed, and directed acyclic graphs. Problems are either shown to be $\Sigma^p_2$-complete, NP-complete, or solvable in polynomial time.}
    \label{tab:complexity-results}
\end{table}

\begin{table}
    \centering
    \begin{tabular}{l|ll}
      & undirected & directed \\
      \hline
      weak & P (\cref{obs:few-leader-edges-weak}) & P (\cref{obs:few-leader-edges-weak}) \\
      strong & equivalent to \kcycle{k} (\cref{thm:equivalence-k-cycle}) & NP (\cref{thm:few-leader-edges-strong-NP-complete})
    \end{tabular}
    \caption{Complexity results for the {\BSP} problem in the special case where $\lvert \El \lvert = O(1)$.}
    \label{tab:complexity-results-few-leaders-edges}
\end{table}

\subsection{Related literature} \label{sec:literature}

While the {\BSP} problem
that we study
has not been present in previous articles,
other bilevel optimization problems
that are based on the {\SP} problem
have been considered before.
They can be associated to the following three common types of bilevel combinatorial optimization problems.

A prominent example falls into the class of \emph{price-setting problems};
it has been introduced as a \emph{toll optimization problem}
and shown to be NP-hard in \cite{LabbeMarcotteSavard1998}.
Here the follower models one or several drivers in a network,
each solving a shortest path problem,
and the leader can charge tolls as additional costs for the follower on some of the edges.
The leader's goal is to maximize their profit from the tolls,
but should not discourage the follower from using the tolled edges
by making them too expensive.
This problem and similar problems
have also been studied in, e.g.\ \cite{MarcotteSavard2005,vanHoesel2008,LabbeViolin2013}.

Also in \emph{(partial) inverse problems},
the follower solves some basic problem,
such as a shortest path problem,
and the leader can influence the follower's objective function.
However,
the leader's aim is now that
some given (partial) solution to the follower's problem
becomes an optimal solution,
while minimizing the adjustments made to the objective function.
The inverse shortest path problem is solvable in polynomial time~\cite{AhujaOrlin2001},
but the partial inverse shortest path problem is NP-hard~\cite{LeyMerkert2024}.

A third class of bilevel combinatorial optimization problems
is the one of \emph{interdiction problems}.
In an interdiction problem,
the two decision makers have opposite objective functions:
The follower again solves some basic problem,
such as a shortest path problem,
and the leader can interdict some of the follower's resources,
e.g.\ by removing some edges,
in order to harm the follower as much as possible.
Interdiction versions of the shortest path problem
have been shown to be NP-hard and considered in many publications,
see e.g.\ \cite{BallGoldenVohra1989,MalikMittalGupta1989,IsraeliWood2002}.

Also the structure of our {\BSP} problem
can be seen to fall into a class of bilevel combinatorial optimization problems,
several of which
--~based on different classic combinatorial optimization problems,
but not the shortest path problem so far~--
have been investigated before.
However,
this problem class does not seem to have been given a name yet;
we propose to call it the class of \emph{partitioned-items bilevel problems}.
The characteristic feature of these problems is that
some ground set of items is partitioned into leader's items and follower's items.
Both decision makers together
build a feasible solution of some classic combinatorial optimization problem
by each selecting some items from their own set.
Moreover,
each player has their own objective function
that they optimize for the joint solution.

The simplest partitioned-items bilevel problem
is the \emph{bilevel selection problem}.
It is solvable in polynomial time
and has been studied,
together with robust versions of it,
in~\cite{Henke2024}.
A generalization of the bilevel selection problem
is a \emph{bilevel knapsack problem}.
It has been shown to be $\Sigma^p_2$-hard
and solvable in pseudopolynomial time~\cite{MansiAlvesCarvalhoHanafi2012,CapraraCarvalhoLodiWoeginger2014}.
The (partitioned-items) \emph{bilevel minimum spanning tree problem} is NP-hard,
as shown in~\cite{BuchheimHenkeHommelsheim2022},
and several variants of it have also been studied in~\cite{Gassner2002,ShiZengProkopyev2019,ShiProkopyevRalphs2023}.
In~\cite{BuchheimHenkeHommelsheim2022}, the authors also establish a connection between the special case of the bilevel minimum spanning tree problem where the follower controls only few edges and the shortest vertex-disjoint paths problem. This is similar to our equivalence result for the special case of the {\BSP} problem where the leader controls few edges and the \kcycle{k} problem.
Also the (partitioned-items) \emph{bilevel assignment problem} is set in a graph
and investigated in several variants,
all of which turn out to be NP-hard,
see~\cite{GassnerKlinz2009,FischerMulukWoeginger2022}.

\medskip

The remainder of the paper is structured as follows: In \cref{sec:BSP-def}, the formal definition of the {\BSP} problem is given. \cref{sec:BSP-weak} concerns the weak path completion variant. \cref{sec:BSP-strong} concerns the strong path completion variant. \cref{sec:few-leader-edges} discusses the special case $\lvert \El \rvert = O(1)$. Finally, we derive inapproximability results in \cref{sec:inapproximability} and conclude in \cref{sec:conclusion}.

\section{Problem formulation} \label{sec:BSP-def}

For the {\BSP} problem, we consider a simple graph $G = (V, E)$ which is either directed (in which case $E \subseteq V \times V$) or undirected (in which case $E \subseteq \binom{V}{2}$). 
The edge set~$E$ is partitioned into the \emph{leader's edges} $\El$ controlled by the leader, and the \emph{follower's edges}~$\Ef$ controlled by the follower, such that $E = \El \cup \Ef$ and $\El \cap \Ef = \emptyset$. 

We make the following assumptions w.l.o.g.: The graph~$G$ does not contain parallel edges 
(such edges can always be modeled with a simple graph by subdivision), and the leader and follower do not share any edge (a shared edge can be modeled by two parallel edges, where leader and follower control one edge each).

All paths and cycles in this paper are simple, i.e.\ they do not contain repeated vertices.
Moreover, we often use the same notation for a path $P$ and its set of edges.
Given two vertices~$s, t \in V$, we let $\Pst = \set{P \subseteq E : P \text{ is a path from $s$ to $t$}}$ denote the set of feasible solutions to the {\SP} problem. From now on, we always assume w.l.o.g.\ that $\Pst \neq \emptyset$, otherwise the problem is trivially infeasible.

The goal of the two players is now to jointly build an $s$-$t$-path by each selecting edges from their own edge set, $\El$ and $\Ef$, respectively, in a hierarchical order.
As explained in \cref{sec:intro}, we distinguish between the two problem variants where the solutions selected by the leader and by the follower either need to be combined to be precisely an $s$-$t$-path \emph{(strong path completion)} or need to only include an $s$-$t$-path as a subset \emph{(weak path completion)}.
In the first case, the follower is \enquote{forced} to use every edge from the leader's solution to build an $s$-$t$-path. In the second case, the follower may choose to ignore some edges selected by the leader in order to find an $s$-$t$-path that results in a minimal total cost for them.

An input instance of the {\BSP} problem is a tuple $I = (G,\El,\Ef, s,t,c,d)$,
consisting of a directed or undirected graph $G = (V,E)$
with $E = \El \cup \Ef$ and $\El \cap \Ef = \emptyset$,
vertices $s,t \in V$, a cost function $c : E \to \R_{\geq 0}$ for the leader, and a cost function $d : E \to \R_{\geq 0}$ for the follower. The strong path completion variant of the {\BSP} problem is then the optimization problem of finding the value

\begin{equation} \label{eq:BSP-strong}
  \begin{aligned}
    \OPTs(I) := \min\ & c(X \cup Y) \\
    \text{s.t. } & X \subseteq \El \\
    & Y \in
    \begin{aligned}[t]
      \argmin\ & d(Y') \\
      \text{s.t. } & Y' \subseteq \Ef \\
      & X \cup Y' \text{ is an $s$-$t$-path (i.e. }X \cup Y' \in \Pst),
    \end{aligned}
  \end{aligned}
\end{equation}
where we use the notation $f(Z) := \sum_{e \in Z} f(e)$, for an edge set $Z \subseteq E$ and a function $f \colon E \to \R$.

Similarly, the {\BSP} problem in the variant of weak path completion is defined as follows:
\begin{equation} \label{eq:BSP-weak}
  \begin{aligned}
    \OPTw(I) := \min\ & c(X \cup Y) \\
    \text{s.t. } & X \subseteq \El \\
    & Y \in
    \begin{aligned}[t]
      \argmin\ & d(Y') \\
      \text{s.t. } & Y' \subseteq \Ef \\
      & X \cup Y' \text{ inlcudes an $s$-$t$-path} \\
        & \text{(i.e. }\exists P \subseteq X \cup Y'\text{ with } P \in \Pst).
    \end{aligned}
  \end{aligned}
\end{equation}

For an example, which also illustrates the difference between the weak and the strong path completion variant, we refer to \cref{ex:weak-vs-strong} and \cref{fig:weak-vs-strong}.

We remark that some attention has to be paid to choices $X$ of the leader which make the follower's problem infeasible, i.e.\ the leader could theoretically  choose a set $X$ for which the follower is not able to find a set 
$Y'$ fulfilling all the constraints. 
However, if the follower's problem is infeasible, then the leader's constraint \enquote{$Y \in \argmin \dots $} cannot be satisfied. Therefore, such a choice $X$ is also infeasible for the leader's problem by definition, and the leader will never choose such a set.
Our assumption $\Pst \neq \emptyset$ implies that a feasible leader's solution always exists.

Observe that
the leader's objective function is evaluated on the joint solution $X \cup Y$,
while the follower only pays for their own edges $Y$.
However, the follower's objective $d(Y')$ could also be replaced by $d(X \cup Y')$
without changing the problem
because, from the follower's perspective,
the leader's choice $X$ is fixed and the value $d(X)$ is a constant.
Therefore,
it suffices to define the follower's costs $d$ on the follower's edges $\Ef$ instead of all edges $E$,
and we will usually assume w.l.o.g.\ that $d(e) = 0$ for all $e \in \El$.

Finally, in this paper, we focus on the \emph{optimistic setting}, which is a common assumption in bilevel optimization. In case of a tie, where the follower could choose multiple sets $Y$ which all have the same cost $d(Y)$ for the follower, but differ in their corresponding leader's cost $c(Y)$, we assume that the follower picks among them one which is cheapest for the leader.

Note that problems \cref{eq:BSP-strong} and \cref{eq:BSP-weak} are optimization problems. In order to describe their computational complexity, we introduce corresponding decision problems. 
Given an instance $I$ together with a threshold parameter $T \in \R_{\geq 0}$, the corresponding {\BSP} decision problem is to decide whether $\OPTs(I) \leq T$ (or $\OPTw(I) \leq T$, respectively) holds. 
We denote by \BSPstrongDir, \BSPstrongUndir, \BSPweakDir, {\BSPweakUndir} the four variants of the {\BSP} decision problem, each in the strong or weak path completion variant, and each restricted to directed or undirected graphs. 

Given an instance $I$ (which can be undirected or directed), we define $\OptFOLs(I, X)$ and $\OptFOLw(I, X)$ to be the optimal value of the follower's problem for this instance, given leader's solution $X \subseteq \El$. 
We also let $\ProblemFOLs(I, X)$ and $\ProblemFOLw(I, X)$ denote the computational problem to compute the follower's optimum, given the instance $I$ and the leader's solution $X$. 
We call this problem the \emph{follower's problem}, since it describes the task the follower has to solve from their perspective.

\subsection{General insights}
The following lemma states that the undirected variant of the {\BSP} problem can be seen as a special case of the directed variant. We will typically make use of the following consequence of the lemma: When the undirected case is NP-hard, the directed case is NP-hard as well. 
\begin{lemma}
    \label{lem:undir-special-case-of-dir}
    For both the strong and the weak path completion variant of the {\BSP} problem, the undirected case can be polynomially reduced to the directed case.
\end{lemma}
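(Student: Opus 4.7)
The plan is to use the standard edge-duplication construction. Given an undirected instance $I = (G, \El, \Ef, s, t, c, d)$, I would construct a directed instance $I'$ on the same vertex set by replacing every undirected edge $e = \{u,v\}$ with two antiparallel directed copies $(u,v)$ and $(v,u)$. Each copy inherits the owner and the two cost values of $e$. This construction is polynomial and preserves $\Pst \neq \emptyset$, so it suffices to show $\OPTs(I) = \OPTs(I')$ and $\OPTw(I) = \OPTw(I')$.

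The key tool I would exploit is that simple $s$-$t$-paths in $G$ and simple directed $s$-$t$-paths in $G'$ are in a natural cost-preserving bijection: a simple directed path cannot use both antiparallel copies of the same edge, so orienting an undirected $s$-$t$-path in its traversal direction gives a directed $s$-$t$-path, and forgetting orientations reverses this correspondence. Under this bijection, both the leader's and the follower's cost of any path are preserved.

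To prove equality of the optima, I would match optimal joint solutions on both sides. Starting from an optimum in $I$, I would orient each edge of the resulting joint $s$-$t$-path in its $s$-to-$t$ direction, obtaining feasible leader's and follower's sets in $I'$ of the same costs; any strictly better follower's response in $I'$ would, after projection to $G$, contradict the optimality of the chosen solution in $I$. Conversely, for an optimum in $I'$ I would first argue that the leader includes at most one of each antiparallel pair: in the strong variant, picking both makes the follower's problem infeasible, and in the weak variant, dropping a redundant copy strictly decreases the leader's cost without altering the follower's feasible set. The undirected projection of the leader's and follower's sets is then well-defined and yields a solution in $I$ of the same cost.

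The step I expect to require the most care is the optimistic tiebreaking convention: among follower's optima in $I'$, the leader-optimal one must correspond via the bijection to a leader-optimal follower's optimum in $I$. This compatibility is automatic because the bijection preserves both players' cost functions simultaneously, but it deserves an explicit line in the formal proof.
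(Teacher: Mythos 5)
Your construction is exactly the ``natural approach'' that the paper explicitly rules out, and it does not work: replacing $\set{u,v}$ by the two antiparallel arcs $(u,v)$ and $(v,u)$, each inheriting the owner of $e$, gives the leader strictly more power in the directed instance than in the undirected one. If $\set{u,v}\in\El$, the leader in $I'$ may put exactly one of the two copies into $X'$, thereby unlocking the edge in only one direction and dictating the follower's traversal direction; in the undirected instance the leader can only make the edge available, after which the follower chooses the direction. This breaks the $I'\to I$ direction of your argument: projecting such a leader's solution back to $G$ \emph{enlarges} the follower's feasible set (the reverse traversal becomes available again), so the follower's $\argmin$ can change and the leader's resulting cost can strictly increase. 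Your path bijection is fine, but bilevel equivalence requires matching the \emph{leader's strategy spaces} together with the induced follower reactions, not just matching paths, and here the strategy spaces are not in correspondence.

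A concrete failure (strong variant): take vertices $s,a,b,t$, follower edges $\set{s,a},\set{s,b}$ with $c=d=0$, follower edge $\set{b,t}$ with $c=0$, $d=5$, follower edge $\set{a,t}$ with $c=10$, $d=0$, and one leader edge $\set{a,b}$ with $c=d=0$. In the undirected instance, whether the leader picks $\set{a,b}$ or nothing, the follower's cheapest response uses $\set{a,t}$ (traversing $b$--$a$ in the first case), so $\OPTs(I)=10$. In your directed instance the leader selects only the copy $(a,b)$, forcing the unique path $s\to a\to b\to t$ of leader cost $0$, so $\OPTs(I')=0\neq\OPTs(I)$. The paper avoids this by a gadget: each undirected edge is replaced by two new vertices $w_e,w_e'$, a single middle arc $(w_e,w_e')$ that inherits ownership and both costs, and four zero-cost follower arcs $(u,w_e),(v,w_e),(w_e',u),(w_e',v)$; then the leader's decision stays binary per original edge while the follower keeps the choice of direction. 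Also note your claim that in the weak variant the leader selecting both antiparallel copies is dominated by dropping one ``without altering the follower's feasible set'' is not the issue --- the problematic case is selecting exactly one copy, which your proposal does not (and cannot) reduce away.
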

\begin{proof}
    \begin{figure}
        \centering
        \includegraphics[scale=1]{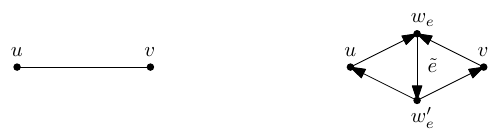}
        \caption{Illustration of the construction in the proof of \cref{lem:undir-special-case-of-dir} for the reduction of the undirected case to the directed case.}
        \label{fig:undir-spacial-case-of-dir}
    \end{figure}
    Given an instance $I = (G,\El,\Ef, s,t,c,d)$ of the undirected {\BSP} problem (in either the strong or the weak variant), consider the following modification of the graph $G$, as pictured in \cref{fig:undir-spacial-case-of-dir}: 
    For each undirected edge $e$, we introduce two new vertices $w_e$ and $w_e'$. 
    We replace the undirected edge $e = \set{u,v}$ by the five directed edges $(u, w_e)$, $(v, w_e)$, $(w_e, w'_e)$, 
    $(w'_e, u)$, and $(w'_e, v)$. 
    We define $\tilde e := (w_e, w'_e)$ and we let $\tilde e$ \enquote{inherit} the properties of $e$, i.e.\ $\tilde e$ is a leader's edge (follower's edge, respectively) if and only if $e$ is a leader's edge (follower's edge, respectively),
    and we let $c(\tilde e) := c(e)$ and $d(\tilde e) := d(e)$.
    For all the four other edges $f \neq \tilde e$, we let them be follower's edges with $c(f) = d(f) := 0$.
    It is now easily verified that the edge $\tilde e$ mimics the behavior of an undirected edge between $u$ and $v$. It can be used exactly once to go from either $u$ to $v$, or to go from $v$ to $u$. 
    Hence, the modified, directed, instance is equivalent to the original, undirected, instance.
  \end{proof}
  
As a small remark, we note that the natural approach to replace an undirected edge $\set{u, v}$ by the two edges $(u,v), (v,u)$ in the above lemma would not work: 
For example, if $\set{u,v}$ is a leader's edge, then in the new directed instance the leader would gain the ability to only unlock one of the two available directions. 
However, this ability is not present in the original instance, hence the two instances are not equivalent. 

\section{Weak path completion variant}
\label{sec:BSP-weak}

In this section, we are concerned with the weak path completion variant of the {\BSP} problem. We start with a few simple observations and then pinpoint the complexity of the two problem variants {\BSPweakDir} and {\BSPweakUndir}.

\begin{lemma}
\label{lem:weak-followers-problem}
For the weak path completion variant of the {\BSP} problem (both {\BSPweakDir} and {\BSPweakUndir}), the follower's problem is equivalent to a classic {\SP} problem and can therefore be solved in polynomial time.
\end{lemma}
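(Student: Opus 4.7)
The plan is to show that, once the leader's set $X \subseteq \El$ is fixed, the follower's optimization problem collapses to a standard single-source shortest $s$-$t$-path computation on $G$ with a suitably modified, nonnegative edge-weight function, which is polynomial-time solvable (e.g.\ by Dijkstra's algorithm).

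The first step is a simple minimality observation: since $d \geq 0$, adding any edge to a feasible $Y'$ can only increase $d(Y')$. Hence, if $Y'$ is an optimal follower's solution and $P \subseteq X \cup Y'$ is an $s$-$t$-path witnessing feasibility, then $Y'' := P \cap \Ef$ is also feasible (because $X \cup Y'' \supseteq P$) and satisfies $d(Y'') \leq d(Y')$. Therefore we may assume without loss of generality that $Y' = P \cap \Ef$ for some $s$-$t$-path $P$ in $G$.

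The second step is to turn this into a shortest path instance. Define the weight function $d' : E \to \R_{\geq 0}$ by $d'(e) := 0$ for $e \in \El$ and $d'(e) := d(e)$ for $e \in \Ef$. For any $s$-$t$-path $P$, the follower's cost of the associated choice $Y' = P \cap \Ef$ equals $\sum_{e \in P} d'(e)$. Thus the follower's problem amounts to computing a shortest $s$-$t$-path in $G$ under the nonnegative weights $d'$, which takes polynomial time. The optimal follower cost is $\OptFOLw(I,X) = \min_{P \in \Pst} d'(P)$, and any optimal path yields an optimal $Y'$.

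The only subtlety is the optimistic tie-breaking: among follower-optimal $Y'$, we must choose one of minimum leader cost $c(X \cup Y')$. This is easily handled by replacing $d'$ with the lexicographic weight $(d'(e), c(e))$, or, equivalently, by scaling: set $\tilde d(e) := M \cdot d'(e) + c(e)$ for $M$ large enough that the $d'$-coordinate dominates (e.g.\ $M > 1 + \sum_{e \in E} c(e)$), and compute a shortest $s$-$t$-path under $\tilde d$. The resulting path is $d'$-minimum and, among those, $c$-minimum; its follower edges form the desired $Y'$. There is no genuine obstacle here — the main thing to articulate cleanly is the reduction from "contains a path" to "equals a path" via the nonnegativity of $d$, and the lexicographic trick for the optimistic rule.
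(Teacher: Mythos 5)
Your first step (reducing ``contains a path'' to ``equals a path'' via nonnegativity of $d$) is fine, and it correctly identifies that any feasible witness path $P \subseteq X \cup Y'$ satisfies $P \cap \El \subseteq X$. But you then drop exactly this constraint when you set up the shortest-path instance: you define $d'(e) := 0$ for \emph{all} $e \in \El$ and minimize $d'(P)$ over all $s$-$t$-paths in $G$. This is a genuine gap. The follower may only use leader's edges that the leader actually selected; an edge $e \in \El \setminus X$ is unavailable, because $e \notin X$ and $e \notin \Ef \supseteq Y'$, so no feasible $X \cup Y'$ can contain it. Under your weights, however, such an edge costs $0$, so the shortest-path computation will happily route through it. Concretely, if $\set{s,t} \in \El$ but the leader chooses $X = \emptyset$, your reduction returns the one-edge path of cost $0$ and the ``optimal'' follower's solution $Y' = P \cap \Ef = \emptyset$, which is infeasible since $X \cup Y' = \emptyset$ contains no $s$-$t$-path. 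In general your reduction underestimates $\OptFOLw(I,X)$ and can output infeasible responses. The fix is exactly what the paper does: set $d'(e) := \infty$ for $e \in \El \setminus X$ (equivalently, delete those edges and run the shortest-path algorithm on the subgraph $(V, X \cup \Ef)$), keeping $d'(e) := 0$ only for $e \in X$.

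Apart from this, your treatment of the optimistic tie-breaking is essentially the paper's (the paper uses $d''(e) = d'(e) + \varepsilon c(e)$ for small $\varepsilon$, you scale the other coordinate by a large $M$; your lexicographic formulation is the cleaner of your two options, since the bound $M > 1 + \sum_e c(e)$ only separates the $d'$-levels when $d$ is integral). So the overall architecture matches the paper; the missing ingredient is the exclusion of the unselected leader's edges, which is the one place where the leader's choice $X$ actually enters the reduction.
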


\begin{proof}
  Let $(G, \El, \Ef, s, t, c, d)$ be an instance of the {\BSP} problem, where $G$ is either directed or undirected.
    Given some set $X \subseteq \El$ selected by the leader, the follower's problem in the weak path completion variant is to find a set $Y' \subseteq \Ef$ with minimal costs $d(Y')$ such that $X \cup Y'$ includes an $s$-$t$-path. 
    Since $\El$ and $\Ef$ are disjoint, this means that the follower can use all of the edges in $X$ without additional costs
    when looking for an $s$-$t$-path.
    However, all the edges in $\El \setminus X$ are forbidden to the follower. 
    Hence, if we introduce a cost vector
    \begin{align*}
        d'(e) := \begin{cases}
            d(e) & \text{if } e \in \Ef \\
            0 & \text{if } e \in X \\
            \infty & \text{if } e \in \El \setminus X,
        \end{cases}
    \end{align*}
    then the follower's problem is equivalent to finding a shortest $s$-$t$-path in the graph $G = (V, E)$ with respect to the costs $d'$. 
    Since all costs $d'(e)$ are nonnegative, in the case of directed graphs as well as in the case of undirected graphs, this problem can be solved using standard shortest path algorithms such as Dijkstra's algorithm~\cite{dijkstra}.
    
    In case multiple paths of the same minimal total cost exist, we assumed optimistically that the follower acts in favor of the leader, i.e.\ they select the path with smaller costs for the leader. This can be modeled in a standard way: Define $d''(e) :=  d'(e) + \varepsilon c(e)$ for all edges~$e \in E$ and some small enough $\varepsilon > 0$, for example $\varepsilon < (\min_{e \in E, d(e) \neq 0} d(e))/(1 + \sum_{e \in E} c(e))$. Then find a shortest path with respect to $d''$ (note that again $d'' \geq 0$).
  \end{proof}

Observe that
\cref{lem:weak-followers-problem} crucially relies on the assumption that we are in the optimistic setting (since this is necessary for $d'' \geq 0$). 
In fact, in the pessimistic setting, it can be easily seen that the follower's problem becomes NP-hard.
For example, if $d = 0$, then the follower would need to find a longest path in terms of $c$.

We next derive a structural property of optimal solutions of the {\BSP} problem, in case of weak path completion.
Informally speaking, the following \cref{lem:weak-optimal-solution-is-path} states that, in the weak path completion variant, we can w.l.o.g.\ assume that the optimal solution $X \cup Y$ is an $s$-$t$-path. 
At a first glance, this may be a bit confusing: If the defining feature of the weak path completion variant is that the follower is not restricted to form a path, why can we make such an assumption on the optimal solution? 
If the optimal solution is always a path anyway, what is the difference between the weak and strong path completion variant?
\begin{figure}
    \centering
    \includegraphics[scale=1]{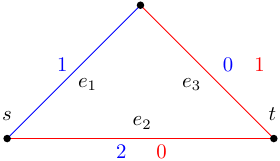}
    \caption{Instance discussed in \cref{ex:weak-vs-strong}. The leader's costs and edges are highlighted in blue, the follower's costs and edges are highlighted in red. Colored figure available online.}
    \label{fig:weak-vs-strong}
\end{figure}
The answer lies in the fact that \cref{lem:weak-optimal-solution-is-path} only holds for optimal leader's solutions, but does not necessarily hold for suboptimal leader's solutions.
This is illustrated in the following example, which shows that the weak and the strong path completion variants are indeed different.

\begin{example}
  \label{ex:weak-vs-strong}
Consider the instance depicted in \cref{fig:weak-vs-strong} for the weak path completion variant. We have $\El = \set{e_1}$ and $\Ef = \set{e_2, e_3}$. 
An optimal choice for the leader is to select $X = \emptyset$, to which the follower responds with $Y = \set{e_2}$. 
Note that $X \cup Y$ is an $s$-$t$-path in this case. 
If the leader selects the suboptimal solution $X = \set{e_1}$, then the optimal follower's solution is still $Y = \set{e_2}$. In this case, $X \cup Y$ is not an $s$-$t$-path anymore. 
Observe that, if we considered the strong instead of the weak path completion variant, the optimal leader's choice would now be $X = \set{e_1}$, to which the follower is forced to respond with $Y = \set{e_3}$.
\end{example}

The idea used for proving \cref{lem:weak-optimal-solution-is-path}
can be informally described as follows:
In case of weak path completion, given a (suboptimal) solution $X$ (such as $X = \set{e_1}$ in the example) that results in $X \cup Y$ not being an $s$-$t$-path, we can remove all edges from $X$ that the follower does not use for the path anyway. This does not change that $Y$ is an optimal follower's reaction to the modified leader's choice $X'$ (in the example $X' = \emptyset$) and does not increase the leader's costs.

\begin{lemma}
\label{lem:weak-optimal-solution-is-path}
Let $I = (G,\El,\Ef, s,t,c,d)$ be an instance of the {\BSP} problem in the weak path completion variant, where the graph $G$ is either directed or undirected (i.e.\ either {\BSPweakDir} or {\BSPweakUndir}). 
There exists at least one pair $(X, Y)$ with $X \subseteq \El$ and $Y \subseteq \Ef$ such that $X$ is an optimal choice for the leader's problem, $Y$ is an optimal choice for the follower's problem given~$X$, and $X \cup Y$ is an $s$-$t$-path (i.e.\ $X \cup Y \in \Pst$).
\end{lemma}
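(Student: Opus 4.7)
\textbf{Proof plan for \cref{lem:weak-optimal-solution-is-path}.} Starting point: take any optimistic optimal pair $(X, Y)$ with $X \subseteq \El$ optimal for the leader and $Y \subseteq \Ef$ the follower's optimistic response to $X$, so $c(X \cup Y) = \OPTw(I)$. By feasibility in the weak path completion variant, $X \cup Y$ contains at least one $s$-$t$-path $P$. Define the trimmed sets
\[
X^* := X \cap P, \qquad Y^* := Y \cap P,
\]
so that $X^* \cup Y^* = P \in \Pst$. My plan is to show that $(X^*, Y^*)$ satisfies all three conditions demanded by the lemma.

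\textbf{Follower optimality of $Y^*$ for the leader's choice $X^*$.} I will chain three inequalities. Let $Z \subseteq \Ef$ be any follower-optimal response to $X^*$, so $d(Z) \le d(Y^*)$ by feasibility of $Y^*$. On the other hand, because $X^* \subseteq X$ we have $X \cup Z \supseteq X^* \cup Z$, which still contains an $s$-$t$-path, so $Z$ is feasible for the leader's choice $X$ as well; optimality of $Y$ given $X$ then yields $d(Y) \le d(Z)$. Finally, since $Y^* \subseteq Y$ and $d \ge 0$, we have $d(Y^*) \le d(Y)$. Concatenating,
\[
d(Y) \;\le\; d(Z) \;\le\; d(Y^*) \;\le\; d(Y),
\]
so equality holds throughout and $Y^*$ is indeed an optimal follower's response to $X^*$.

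\textbf{Leader optimality of $X^*$ and the optimistic tie-break.} Let $W$ be the follower's optimistic response to $X^*$. By the previous step $d(W) = d(Y^*)$, and by optimistic tie-breaking $c(X^* \cup W) \le c(X^* \cup Y^*)$. Hence the leader's value at $X^*$ satisfies
\[
c(X^* \cup W) \;\le\; c(X^* \cup Y^*) \;=\; c(P) \;\le\; c(X \cup Y) \;=\; \OPTw(I),
\]
using nonnegativity of $c$ in the middle inequality. Thus $X^*$ is leader-optimal and $c(X^* \cup Y^*) = \OPTw(I)$, which in turn shows that $Y^*$ itself attains the optimistic leader cost and is therefore a legitimate optimistic follower response to $X^*$. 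Since $X^* \cup Y^* = P \in \Pst$, the pair $(X^*, Y^*)$ witnesses the lemma.

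\textbf{Where the subtlety lies.} Nothing in the argument is computationally hard; the only thing to be careful about is the direction of each inequality and the role of optimistic tie-breaking. The key conceptual point — and the reason the lemma does not contradict the distinction from the strong variant illustrated in \cref{ex:weak-vs-strong} — is that the trimming $X \mapsto X^*$ can strictly decrease the leader's set, which is only permissible because we are starting from a leader-optimal $X$; for suboptimal $X$ this trimming could change which leader edges are available to the follower and so would not preserve the game.
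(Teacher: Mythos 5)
Your proof is correct. It implements exactly the idea the paper sketches informally just before the lemma---trim the leader's solution to the edges actually used on the path $P$ and exploit the fact that shrinking $X$ can only shrink the follower's feasible region---but your formalization takes a different route from the paper's. The paper picks a cardinality-minimal optimal leader's solution $X$ (and then a cardinality-minimal optimistic follower's response $Y$) and derives a contradiction from the existence of an edge in $(X \cup Y) \setminus P$, removing one edge at a time; you instead trim all unused edges in one step, setting $X^* = X \cap P$ and $Y^* = Y \cap P$, and verify optimality of the trimmed pair directly via the sandwich $d(Y) \le d(Z) \le d(Y^*) \le d(Y)$. Your version has the small advantage of being constructive from an \emph{arbitrary} optimal pair (it converts any optimistic optimum into one of the required form), and you correctly handle the one point that genuinely needs care, namely that $Y^*$ remains a legitimate optimistic (not merely $d$-optimal) response to $X^*$, via $c(X^* \cup W) = c(X^* \cup Y^*) = \OPTw(I)$. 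Both arguments ultimately rest on the same monotonicity fact, that the follower's feasible set for $X^* \subseteq X$ is contained in the one for $X$ because $\El$ and $\Ef$ are disjoint.
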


\begin{proof}
    Let $X \subseteq \El$ be defined such that among all optimal leader's solutions it has minimal cardinality. 
    After that, let $Y \subseteq \Ef$ be chosen such that, given $X$, among all optimal solutions to the follower's problem $\ProblemFOLw(I, X)$, it has minimal cardinality. 
    More precisely, $Y$ is chosen from the set $\set{Y' \subseteq \Ef : X \cup Y' \text{ includes an $s$-$t$-path}}$ in such a way that it minimizes in first priority $d(Y')$, it minimizes in second priority $c(Y')$ (optimistic assumption), and it minimizes in third priority $\lvert Y' \rvert$.
    
    By definition, $X$ and $Y$ are optimal choices for the leader and for the follower, respectively. 
    By the constraints of problem~\cref{eq:BSP-weak}, there is some $P \subseteq X \cup Y$ with $P \in \Pst$.
    We first claim that $Y \subseteq P$. Indeed, if there exists some edge $e \in Y \setminus P$, then it can be removed. 
    Since the cost functions $c$ and $d$ are nonnegative, we have $d(Y \setminus \set{e}) \leq d(Y)$, $c(Y \setminus \set{e}) \leq c(Y)$, and $\lvert Y \setminus \set{e} \rvert = \lvert Y \rvert - 1$, contradicting the definition of $Y$.
    
    Similarly, we claim that $X \subseteq P$. In order to prove this, suppose that there exists some edge $e \in X \setminus P$. We show that $X \setminus \set{e}$ is still an optimal choice for the leader, which contradicts the definition of $X$. Indeed, consider the sets
    \begin{align*}
        F_1 := \set{Y' \subseteq \Ef : X \cup Y' \text{ includes an $s$-$t$-path}} \text{ and}\\
        F_2 := \set{Y' \subseteq \Ef : (X \setminus \set{e}) \cup Y' \text{ includes an $s$-$t$-path}}
    \end{align*}
    of feasible solutions for the respective follower's problems $\ProblemFOLw(I, X)$ and $\ProblemFOLw(I, X \setminus \set{e})$. 
    Observe that we have $F_2 \subseteq F_1$, i.e.\ all possibilities to extend $X \setminus \set{e}$ to an edge set including an $s$-$t$-path can also be used to extend $X$, since the set of leader's edges and the set of follower's edges are disjoint.

    Observe furthermore that both $Y \in F_1$ and $Y \in F_2$. Since $Y$ was already an optimal element of $F_1$, it remains an optimal element in the smaller set $F_2$. 
    Since $c(X \cup Y \setminus\set{e}) \leq c(X \cup Y)$, it follows that $X \setminus \set{e}$ is also an optimal choice for the leader, arriving at a contradiction.

    In conclusion, we have proven that both $X, Y \subseteq P$. Since their union includes a path, this implies that indeed $X \cup Y = P$.
  \end{proof}

  We can now make the following statements about the relation between the weak and the strong path completion variant:

  \begin{observation}
    \label{obs:BSP-weak-strong-follower}
  Let $I$ be an instance of the {\BSP} problem (in a directed or an undirected graph)
  and $X$ a feasible leader's choice.
  Denote by $F_\text{str}$ and $F_\text{weak}$ the sets of feasible responses of the follower, given the leader's choice $X$.
  We then have $F_\text{str} \subseteq F_\text{weak}$
  and therefore $\OptFOLw(I, X) \le \OptFOLs(I, X)$.
\end{observation}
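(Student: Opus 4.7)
The plan is to unfold the definitions of the two feasible sets and observe that being exactly an $s$-$t$-path is a strictly stronger condition than containing an $s$-$t$-path as a subset. More precisely, by the definitions in \cref{eq:BSP-strong} and \cref{eq:BSP-weak}, we have
\[
  F_\text{str} = \{Y' \subseteq \Ef : X \cup Y' \in \Pst\}
  \quad\text{and}\quad
  F_\text{weak} = \{Y' \subseteq \Ef : \exists P \subseteq X \cup Y' \text{ with } P \in \Pst\}.
\]
Given any $Y' \in F_\text{str}$, the set $X \cup Y'$ is itself an $s$-$t$-path, so it trivially contains an $s$-$t$-path (namely $P := X \cup Y'$), which shows $Y' \in F_\text{weak}$. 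This establishes the inclusion $F_\text{str} \subseteq F_\text{weak}$.

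For the inequality, the follower minimizes the same objective $d(Y')$ over the two feasible sets, so enlarging the feasible set can only decrease (or preserve) the optimum. Hence
\[
  \OptFOLw(I, X) = \min_{Y' \in F_\text{weak}} d(Y') \le \min_{Y' \in F_\text{str}} d(Y') = \OptFOLs(I, X).
\]
Since the argument is a direct consequence of the definitions, there is no real obstacle; the only minor care needed is the feasibility assumption on $X$, which guarantees $F_\text{str} \ne \emptyset$ and hence also $F_\text{weak} \ne \emptyset$, so both minima are well-defined.
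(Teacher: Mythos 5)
Your proof is correct and matches the intended argument: the paper states this as an observation without proof precisely because it follows immediately from the definitions, exactly as you unfold them (a set that \emph{is} an $s$-$t$-path in particular \emph{contains} one, and minimizing the same objective over a larger feasible set cannot increase the optimum). Your closing remark on feasibility is slightly stronger than needed — even if $F_\text{str}$ were empty, the inequality would hold with the convention $\min\emptyset = \infty$ — but this does not affect correctness.
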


\begin{corollary}
  \label{cor:BSP-weak-strong-leader}
  For every instance~$I$ of the {\BSP} problem (in a directed or an undirected graph),
  we have $\OPTs(I) \leq \OPTw(I)$.
\end{corollary}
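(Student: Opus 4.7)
The plan is to produce, from an optimal weak solution, a feasible leader's choice in the strong variant whose induced joint cost is at most $\OPTw(I)$. Concretely, I would first invoke \cref{lem:weak-optimal-solution-is-path} to pick a pair $(X^*, Y^*)$ that is optimal for the weak variant and satisfies $X^* \cup Y^* \in \Pst$, so that $\OPTw(I) = c(X^* \cup Y^*)$. The decisive observation is that, because $X^* \cup Y^*$ is literally an $s$-$t$-path, $Y^*$ is automatically a feasible follower's response in the strong variant at $X^*$, i.e.\ $Y^* \in F_\text{str}(X^*)$.

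Next I would argue that $Y^*$ in fact attains the strong follower's optimum value at $X^*$. Combining \cref{obs:BSP-weak-strong-follower} (which yields $\OptFOLw(I, X^*) \leq \OptFOLs(I, X^*)$) with the facts that $Y^*$ is weak-optimal at $X^*$ and that $Y^* \in F_\text{str}(X^*) \subseteq F_\text{weak}(X^*)$, one sandwiches
\begin{equation*}
    \OptFOLs(I, X^*) \leq d(Y^*) = \OptFOLw(I, X^*) \leq \OptFOLs(I, X^*),
\end{equation*}
so equality holds throughout. Hence $Y^*$ is a candidate optimal response in the strong follower's problem at $X^*$. By the optimistic tie-breaking assumption, the follower's actual choice $Y^\dagger$ minimizes $c$ over all such candidates, so $c(Y^\dagger) \leq c(Y^*)$, and therefore $\OPTs(I) \leq c(X^* \cup Y^\dagger) \leq c(X^* \cup Y^*) = \OPTw(I)$, as desired.

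The only delicate point worth flagging is the reliance on the optimistic setting: without it, the strong follower could in principle pick a response $Y^\dagger$ with the same $d$-value as $Y^*$ but strictly larger $c$-value, which would break the final inequality. Apart from this caveat, the argument is just bookkeeping that threads together \cref{lem:weak-optimal-solution-is-path} and \cref{obs:BSP-weak-strong-follower}; no additional combinatorial work is needed.
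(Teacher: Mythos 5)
Your argument is correct and follows essentially the same route as the paper's proof: take a weak-optimal pair $(X,Y)$ that forms a path via \cref{lem:weak-optimal-solution-is-path}, observe $Y$ is feasible and hence (by \cref{obs:BSP-weak-strong-follower}) optimal for the strong follower's problem at $X$, and conclude. Your explicit sandwich inequality and the remark about the optimistic tie-breaking merely spell out steps the paper leaves implicit.
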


\begin{proof}
  Consider an optimal solution $(X, Y)$ for the weak path completion variant as in \cref{lem:weak-optimal-solution-is-path}, i.e.\ with $X \cup Y$ being an $s$-$t$-path. Then $Y$ is also a feasible and therefore (by \cref{obs:BSP-weak-strong-follower}) optimal response to~$X$ in the strong path completion variant.
  Therefore, the leader's costs resulting from choosing $X$ are the same in the strong case as in the weak one. An optimal solution in the strong case might be even cheaper than that.
\end{proof}

Finally, we prove that the {\BSP} problem, in the weak path completion variant, is hard:

\begin{theorem}
\label{thm:BSP-weak-NP-complete}
    Both problems {\BSPweakDir} and {\BSPweakUndir} are NP-complete. The directed variant {\BSPweakDir} is NP-complete even on directed acyclic graphs.
\end{theorem}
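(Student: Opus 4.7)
Membership in NP is uniform across all three variants: the leader's choice $X \subseteq \El$ is a polynomial-size certificate, and given $X$, the follower's optimal response $Y$ can be computed in polynomial time by \cref{lem:weak-followers-problem}; we then verify $c(X \cup Y) \leq T$.

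For NP-hardness, my plan is to reduce from \SAT{} in two stages. First, I would give a reduction from \SAT{} to {\BSPweakDir} that already produces a DAG instance; this single reduction yields both the DAG hardness claim and (trivially, since every DAG is a directed graph) hardness of {\BSPweakDir} in general. Second, a separate but structurally similar reduction from \SAT{} to {\BSPweakUndir} is needed, since \cref{lem:undir-special-case-of-dir} only lifts undirected hardness to the directed case, not the reverse. The construction for the DAG case takes a 3-CNF formula $\varphi$ over variables $x_1, \dots, x_n$ and clauses $C_1, \dots, C_m$, and builds a layered DAG with a \emph{variable-selection} stage (for each $x_i$, two parallel leader's edges between consecutive layer nodes $v_{i-1}, v_i$, so that the leader's selection encodes a truth assignment) followed by a \emph{clause-verification} stage (for each $C_j$, a gadget admitting a cheap-in-$d$ shortcut for the follower precisely when the encoded assignment satisfies $C_j$; otherwise the follower is forced onto a detour that is also expensive in $c$ for the leader). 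Tuning $c$ and $d$ appropriately makes the leader's optimum fall below a chosen threshold $T$ iff $\varphi$ is satisfiable. The undirected reduction is analogous, with gadgets adjusted so that simple (non-backtracking) $s$-$t$-paths behave as in the layered DAG.

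\textbf{Main obstacle.} The delicate part is linking the leader's variable-stage decisions to the follower's clause-stage shortcut availability. Conceptually, each leader's edge $e_i^T$ should unlock the ``$x_i$-positive'' shortcut in every clause gadget containing $x_i$, yet a single edge cannot appear in multiple gadgets of a simple graph. The standard workaround is to introduce one dedicated leader's edge per (variable, clause) incidence, together with a cost mechanism forcing consistent opening across the copies in every leader-optimal solution. Getting this consistency right, while also controlling the optimistic tie-breaking rule (via the kind of cost perturbation used in the proof of \cref{lem:weak-followers-problem}), is what demands the most care; the remainder of the argument is then a routine verification that every satisfying assignment of $\varphi$ produces a leader's choice of cost at most $T$, while every unsatisfiable formula forces any leader's choice to incur a strictly higher cost.
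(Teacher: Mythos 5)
Your NP-membership argument is correct and is essentially the paper's: $X$ is the certificate, the follower's optimistic best response is computable in polynomial time by \cref{lem:weak-followers-problem}, and one then checks the threshold.

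The hardness part, however, has a genuine gap, and it sits exactly where you flag your ``main obstacle.'' Your plan hinges on a mechanism by which the leader's choice on the variable-selection edges controls which shortcuts are available in \emph{every} clause gadget containing that variable. As you note, a single leader's edge lives in one place in the graph, so you must duplicate it per (variable, clause) incidence and then enforce consistency among the copies --- but you never construct that enforcement gadget, and it is not routine in this setting. Two features of the weak path completion variant make it delicate: first, the follower simply takes a $d$-shortest path in $\Ef \cup X$ and is free to ignore any opened copy, so ``opening'' a copy only matters if the follower's shortest path actually runs through it; second, the leader pays $c(X \cup Y)$, i.e.\ for every opened edge whether used or not (\cref{lem:weak-optimal-solution-is-path} only lets you assume $X \cup Y$ is a path at \emph{optimal} leader's solutions), so any cost-based consistency argument must rule out the leader profitably opening an inconsistent subset of copies. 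Without an explicit gadget and cost analysis, the claim that ``the remainder is routine verification'' is not justified. The undirected case is similarly waved away: in an undirected graph the follower's simple path need not traverse your layered gadgets in the intended order, which typically requires additional structure to control.

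For comparison, the paper sidesteps the consistency problem entirely by reducing from {\IndSet} rather than {\SAT}. Each vertex $v$ of the input graph becomes a three-vertex gadget $v_1, v_2, v_3$ on a single chain from $s$ to $t$; the leader encodes a vertex set $W$ by opening \emph{both} leader's edges $\set{v_1,v_2},\set{v_2,v_3}$ exactly for $v \in W$ (saving cost $2$ versus the follower's direct edge of leader's cost $3$), and ``shortcut'' follower's edges $\set{u_2,v_2}$ of follower's cost $0$ but huge leader's cost $M$ exist precisely for edges $\set{u,v}$ of the input graph. If $W$ is not independent, the follower strictly prefers a shortcut (cost $0$ versus $\geq 1$), which ruins the leader; if $W$ is independent, every shortcut endpoint is a dead end. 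Because there is exactly one decision per vertex, made at one location on the chain, no consistency gadget is needed. If you want to salvage a {\SAT}-based route, you would need to exhibit the duplication-plus-consistency gadget explicitly and verify both directions of the equivalence under the leader's pay-for-everything objective and the follower's shortest-path behaviour; as written, the reduction is not complete.
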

\begin{proof}
  First note that the problems {\BSPweakDir} and {\BSPweakUndir} are contained in NP.
  Indeed, given a solution $(X, Y)$, one can verify it to be feasible for the leader's problem (because the follower's problem can be solved in polynomial time by \cref{lem:weak-followers-problem}) and evaluate its leader's costs in polynomial time.
  
    For the hardness proof, we reduce from the NP-complete {\IndSet} problem \cite{garey1979computers}. 
    Given a graph~$G$ and a number $k \in \N$, the question is whether $\alpha(G) \geq k$. Here, $\alpha(G)$ denotes the size of the largest independent set in $G$.
    We first argue for the undirected case and then explain how the proof can be adapted to the directed acyclic case. Let $G = (V, E)$ and $k \in \N$ describe a given instance of the {\IndSet} problem and let $n := \lvert V \rvert$. We construct an instance $I$ of {\BSPweakUndir} such that $\OPTw(I) \leq 3n - k$ if and only if $\alpha(G) \geq k$.

    \begin{figure}
        \centering
        \includegraphics[scale=1]{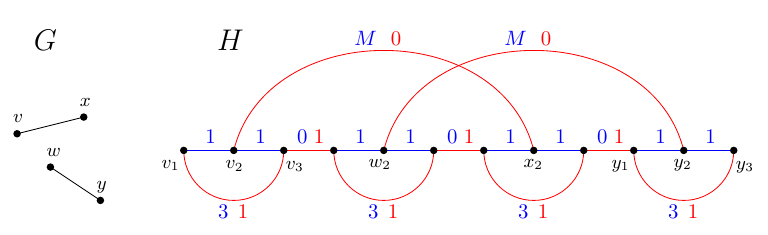}
        \caption{Construction used in the proof of \cref{thm:BSP-weak-NP-complete}. Leader's edges and costs are marked in blue, and follower's edges and costs are marked in red. Colored figure available online.}
        \label{fig:BSP-weak-NP-complete}
    \end{figure}
    The instance $I$ is sketched in \cref{fig:BSP-weak-NP-complete} and formally defined as follows.
    The underlying graph~$G'$ of the instance~$I$ has vertex set $V(G') := \bigcup_{v\in V} \set{v_1, v_2, v_3}$. The edge set $E(G') = \El \cup \Ef$ is split into four parts:
    \begin{itemize}
        \item All edges $\set{v_1,v_2}$ and $\set{v_2,v_3}$ for $v \in V$. 
        They are both leader's edges with leader's cost $c(e) = 1$ (and follower's cost $d(e) = 0$).
        \item All edges $\set{v_1,v_3}$ for $v \in V$. 
        They are follower's edges with $c(e) = 3$ and $d(e) = 1$.
        \item For some arbitrary order $v^{(1)},\dots,v^{(n)}$ of the vertices of $G$, the graph $G'$ contains all edges $\set{v_3^{(i)},v_1^{(i+1)}}$ for $i \in \fromto{1}{n-1}$. 
        They are follower's edges with $c(e) = 0$ and $d(e) = 1$.
        \item The edge set $E' := \set{ \set{u_2,v_2}$ : $\set{u,v} \in E}$. 
        These are follower's edges with $c(e) = M$ and $d(e) = 0$, where $M > 3n$ is a large constant.
        Edges from $E'$ are referred to as \emph{shortcuts}.
    \end{itemize}
    The description of the instance $I$ is completed by setting $s := v_1^{(1)}$ as start vertex and $t := v_3^{(n)}$ as end vertex. We will now prove that $\OPTw(I) \leq 3n - k$ if and only if $\alpha(G) \geq k$.

    First, assume that $\alpha(G) \geq k$. Let $W \subseteq V$ be an independent set of size at least $k$ in~$G$.
    Then the leader can choose the set $X := \bigcup_{w \in W} \set{\set{w_1,w_2}, \set{w_2,w_3}} \subseteq \El$ as a leader's solution.
    Note that, since $W$ is an independent set,
    we have for every shortcut $\set{u,v}$ that
    $\set{\set{u_1,u_2}, \set{u_2,u_3}} \in \El \setminus X$ or $\set{\set{v_1,v_2}, \set{v_2,v_3}} \in \El \setminus X$.
    Since the follower is not allowed to use edges in $\El \setminus X$, at least one of the vertices $u_2$ and $v_2$ is a dead end for the follower. Hence, they can never use any shortcut.
    Therefore, the follower goes in order from $v^{(1)}$ to $v^{(n)}$, choosing greedily for every $i \in \fromto{1}{n}$: 
    If the leader's edges that are attached to $v^{(i)}_2$ are in $X$, then the follower uses them.
    The leader pays a cost of $2$ for this pair of edges.
    Otherwise, the follower uses the direct edge $\set{v^{(i)}_1,v^{(i)}_3}$,
    causing additional costs of $3$ for the leader.
    Since $\lvert W \rvert \geq k$, the total leader's costs are at most $3n - k$, which was to show.

    For the opposite direction of the proof, assume that $\OPTw(I) \leq 3n - k$. 
    Due to \cref{lem:weak-optimal-solution-is-path}, there exist $X \subseteq \El$ and $Y \subseteq \Ef$ such that $Y$ is an optimal follower's response to $X$, the set~$X \cup Y$ forms a simple path, and $c(X \cup Y) \leq 3n - k$.
Note that $Y$ does not contain any shortcut edges due to $c(X \cup Y) \leq 3n - k$ and $M > 3n$.
    Now observe that, if there exists some $v \in V$ such that the set $X$ contains exactly one of the two edges $\set{v_1,v_2}$ and $\set{v_2,v_3}$, then the simple path $X \cup Y$ necessarily uses a shortcut edge, a contradiction.
    Hence, for each $v \in V$, the set~$X$ contains either none or both of the two edges $\set{v_1,v_2}$ and $\set{v_2,v_3}$. Let $W \subseteq V$ be those vertices where $X$ contains both. We claim that $W$ is an independent set of size at least $k$. 
    Indeed, suppose that $W$ is not independent, i.e.\ there are $u,v \in W$ with $\set{u,v} \in E$. W.l.o.g.\ we assume that $u$ comes before $v$ in the ordering of $V$ that we chose when constructing $I$. 
    Since $Y$ does not use any shortcuts, the follower goes strictly from left to right in \cref{fig:BSP-weak-NP-complete}, and has costs at least 1 to go from $u_1$ to $v_3$.
    But if the follower used a shortcut, they would have cost at most 0 to go from~
    $u_1$ to $v_3$.
    Hence, $Y$ is not an optimal follower's response to $X$, a contradiction to the choice of $(X, Y)$.
    Finally, in the joint solution~$X \cup Y$, the leader pays a cost of $3$ for every $v \in V \setminus W$ and a cost of $2$ for every $v \in W$. Thus, $c(X \cup Y) \leq 3n - k$ implies $\lvert W \rvert \geq k$. This concludes the proof for the undirected case.

    Finally, for the directed acyclic case, consider \cref{fig:BSP-weak-NP-complete} where every edge is oriented from left to right. The resulting graph is acyclic and it is easily seen that the proof of NP-completeness is analogous.
  \end{proof}

  Observe that one could also apply \cref{lem:undir-special-case-of-dir}
  in order to derive the NP-completeness of {\BSPweakDir} from the NP-completeness of {\BSPweakUndir}.
  However, the construction in the proof of \cref{lem:undir-special-case-of-dir} does not result in an directed acyclic graph,
  in contrast to the explicit construction for {\BSPweakDir} in the proof of \cref{thm:BSP-weak-NP-complete}.

\section{Strong path completion variant}
\label{sec:BSP-strong}
\subsection{Vertex fixing lemma}

In this subsection, we prove a helpful lemma in order to prepare the later hardness results of \cref{sec:BSP-strong,sec:few-leader-edges}. \cref{lem:lifting-lemma} is technical, so we first explain the rough idea behind it: 
Given some instance of {\BSPstrongUndir}, it would be sometimes helpful to be able to modify this instance, by selecting some subset $W \subseteq V$ of the vertices and \enquote{enforcing} that every optimal solution must be a path that includes all the vertices of $W$.  
More precisely, given some arbitrary subset $W \subseteq V$, we consider the following modified bilevel program:
\begin{equation} \label{eq:BSP-modified}
  \begin{aligned}
    \min\ & c(X \cup Y) \\
    \text{s.t. } & X \subseteq \El \\
    & Y \in
    \begin{aligned}[t]
      \argmin\ & d(Y') \\
      \text{s.t. } & Y' \subseteq \Ef \\
      & X \cup Y' \text{ is an $s$-$t$-path whose vertex set includes $W$}
    \end{aligned}
  \end{aligned}
\end{equation}

The following lemma shows that, for every instance of the {\BSP} problem and every vertex set $W$, the original instance can be modified such that the new instance is equivalent to problem \cref{eq:BSP-modified}. However, some special attention is required regarding the case where the newly introduced constraint is infeasible, i.e.\ the case where there is no $s$-$t$-path through the vertex set~$W$.

\begin{lemma}[Vertex fixing lemma]
\label{lem:lifting-lemma}
    Let $I = (G,\El,\Ef, s,t,c,d)$ be an instance of the {\BSP} problem in the strong path completion variant, with an undirected graph $G = (V, E)$.
    Let $W \subseteq V$ be a subset of the vertices.
    Let $\varepsilon > 0$ be a positive constant.
    One can construct a modified instance $I' = (G',E'^\ell,E'^f, s',t',c',d')$ of the same problem and some integer $M > 0$ in polynomial time, such that the following holds:
\begin{enumerate}[(i)]
    \item The new instance $I'$ has optimal value $\OPTs(I') \geq M$ if and only if there is no $s$-$t$-path in the original graph $G$ whose vertex set includes $W$.
    \item In the other case, i.e.\ $\OPTs(I') < M$, we have that the leader's optimal value $\OPTs(I')$ of the new instance is equal to the optimal value of \cref{eq:BSP-modified} for the old instance $I$ and the vertex set $W$. Moreover,
      optimal leader's solutions of the two problems can be constructed from each other in polynomial time and,
      given such corresponding optimal leader's solutions $X'$ and $X$, the follower's optimal values satisfy $\OptFOLs(I', X') = d^\star + \varepsilon$, where $d^\star$ is the follower's optimal value in \cref{eq:BSP-modified} given the instance $I$, the vertex set $W$, and the leader's solution $X$.
    \item The number of leader's edges in the new instance is at most
      $\lvert E'^\ell \rvert \leq 2 \lvert W \rvert + 4 \lvert \El \rvert$.
    \end{enumerate}
  \end{lemma}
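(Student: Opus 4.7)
The plan is to explicitly construct a modified instance $I'$ that embeds the vertex-visit constraint on $W$ into the structure of an ordinary \BSPstrongUndir{} instance, supplemented by a high-cost fallback that guarantees feasibility. Enumerating $W = \{w_1, \ldots, w_k\}$ with $k := |W|$, I introduce a new start vertex $s'$ and new end vertex $t'$, and for each $w \in W$ a small force-visit gadget containing two auxiliary vertices $a_w, b_w$ together with two new leader's edges $\{a_w, w\}$ and $\{w, b_w\}$. These gadgets are chained between $s'$ and $t'$ and interleaved with segments of the original graph, so that every $s'$-$t'$-path in the new graph traces an $s$-$t$-path of $G$ that passes through each vertex of $W$. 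To prevent the follower from exploiting the undirected structure to slip past a gadget while still satisfying strong path completion, I would replace each original leader's edge $e \in \El$ by a small sub-gadget using at most four new leader's edges. Together this yields the bound $|E'^\ell| \le 2|W| + 4|\El|$ claimed in (iii).

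For the cost assignment, $c'$ and $d'$ agree with $c$ and $d$ on edges inherited from $G$; auxiliary leader's edges in the $W$-gadgets and edge sub-gadgets receive leader cost zero; and a small follower's cost is distributed across the $W$-gadget edges so that their total equals exactly $\varepsilon$, producing the $\varepsilon$-shift in the follower's optimum required in (ii). Finally, I add a single fallback leader's edge $\{s',t'\}$ with leader cost $M$ chosen larger than the total leader cost of any simple $s$-$t$-path in $G$ (say $M := 1 + \sum_{e \in E} c(e)$ times a suitable polynomial factor from the construction); this guarantees feasibility of $I'$ and ensures that the fallback appears in the follower's response if and only if no $s$-$t$-path through $W$ exists in $G$.

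With this construction in place, (i) is immediate: if $G$ has no $s$-$t$-path through $W$, then every feasible leader's choice forces the follower to use the fallback and thus $\OPTs(I') \ge M$; in the opposite case, lifting a path through $W$ in $G$ yields an $s'$-$t'$-path in $G'$ of leader cost strictly below $M$. Property (ii) will be established by exhibiting a bijection between feasible leader's solutions $X$ of \cref{eq:BSP-modified} and feasible leader's solutions $X'$ of $I'$ of cost $< M$, and checking that it preserves leader-optimality and is compatible with the optimistic tie-breaking of the follower --- this is where the careful placement of the $\varepsilon$-costs matters, since they must be small enough not to disturb the follower's primary preferences but large enough to enforce the shift by exactly $\varepsilon$. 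Property (iii) is then a direct edge count.

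The main obstacle will be the design and verification of the $W$-gadgets, in particular ruling out "cheating" $s'$-$t'$-paths that, by cleverly combining $W$-gadget edges with subdivided leader's edges and with edges of the original $G$, avoid some $w \in W$ while still attaining leader cost below $M$. Because $G$ is undirected, there is no inherent orientation channeling paths through the gadgets in a predictable direction, so the sub-gadgets for each leader's edge and the chaining of the $W$-gadgets must jointly ensure that no such bypass exists, while simultaneously preserving the strong path completion semantics (every edge of $X'$ must lie on the final simple $s'$-$t'$-path). Once this design is pinned down, the verifications of (i)--(iii) reduce to a routine case analysis of the follower's optimal response to the leader's choice.
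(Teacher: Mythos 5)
There is a genuine gap, and it sits exactly at what you yourself call ``the main obstacle.'' Your construction tries to enforce the visit requirement \emph{structurally}: you chain one gadget per $w \in W$ in series between $s'$ and $t'$, interleaved with copies of segments of $G$, so that ``every $s'$-$t'$-path traces an $s$-$t$-path of $G$ through all of $W$.'' This cannot work as described. First, an $s$-$t$-path of $G$ that visits $W$ may visit its vertices in an arbitrary order, and each $w$ is entered and left through arbitrary original edges; a serial chain of per-vertex gadgets fixes a visiting order and therefore does not preserve the optimal value of \cref{eq:BSP-modified}. Second, if $G$ itself contains $s$-$t$-paths avoiding some $w \in W$, no polynomial-size gadgetry that keeps the original connectivity available can make \emph{every} $s'$-$t'$-path pass through all of $W$; some bypass survives, and your sketch offers no mechanism to rule it out (your fallback edge does not help here, and note also that a leader's edge $\set{s',t'}$ appears in the joint solution only if the leader selects it, so it cannot ``appear in the follower's response'' as a consequence of infeasibility in $G$). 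Since properties (i) and (ii) both hinge on this enforcement, the core of the lemma is still missing.

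The paper's proof avoids structural enforcement altogether and uses an incentive mechanism instead. Each $w \in W$ is replaced \emph{in place} by a path $P_w = w_1 w_2 w_3$ of two zero-cost leader's edges (incident original edges are duplicated onto $w_1$ and $w_3$, which is where the bound $2\lvert W\rvert + 4\lvert \El \rvert$ comes from), a new sink $t'$ is added, and ``dangerous'' follower's edges $\set{w_2,t}, \set{w_2,t'}$ with follower cost $0$ but leader cost $M$ are introduced, together with a single follower's edge $\set{t,t'}$ of follower cost $\varepsilon$. Bypassing paths still exist in $G'$, but if the leader fails to select both edges of some $P_w$, an optimal follower either is infeasible or prefers the dangerous detour through $w_2$ over the $\varepsilon$-edge, costing the leader at least $M$; hence every leader's solution of value below $M$ includes all $P_w$, which forces the final path through every $w_2$, and contracting the $P_w$'s and $\set{t,t'}$ yields the exact correspondence (with the uniform $+\varepsilon$ shift) to \cref{eq:BSP-modified}. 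If you want to salvage your write-up, you should abandon the serial chaining and the claim of structural path-forcing and instead build in the leader-penalty/follower-temptation mechanism (or an equivalent one) that makes non-compliant leader choices cost at least $M$.
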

\begin{proof}
  We can assume w.l.o.g.\ that $W \subseteq V \setminus \set{s,t}$ because every $s$-$t$-path contains the vertices $s$ and $t$. Let $M := \lceil 1 + \sum_{e \in E} c(e) \rceil$. Note that the integer $M$ is a strict upper bound on the leader's cost of every feasible solution to the old instance $I$.
We now describe the construction of the new instance $I'$ (an accompanying image is given in \cref{fig:lifting-lemma}).

\begin{figure}
    \centering
    \includegraphics[scale=0.9]{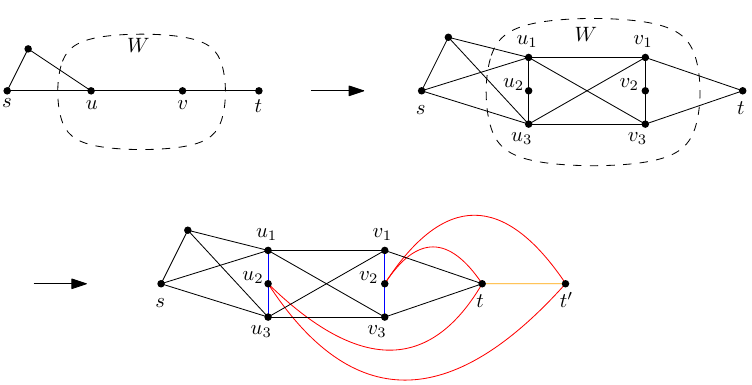}
    \vspace{\baselineskip}
    
    \begin{tabular}{c|ccc} 
        Edge type & Leader/Follower & $c'(e)$ & $d'(e)$ \\
        \hline
        $P_w$, $w \in W$ (\textcolor{blue}{blue}) & L & 0 & 0\\
        $\tilde E$ (\textcolor{red}{red}) & F & M & 0\\
        $\set{t, t'}$ (\textcolor{orange}{orange}) & F & 0 & $\varepsilon$\\
        other (black) & inherited & inherited & inherited
    \end{tabular}
    \caption{Construction used in \cref{lem:lifting-lemma}. The table explains the meaning of the different edge types. Colored figure available online.}
    \label{fig:lifting-lemma}
\end{figure}

Let the graph $G' = (V',E')$ be constructed from the old graph $G = (V, E)$ as follows. 
Every vertex $w \in W$ is removed and replaced by a path $P_w$ consisting of three vertices $w_1,w_2,w_3$ and two edges connecting them in this order.
These two edges are leader's edges (i.e.\ they belong to~$E'^\ell$) and have leader's cost $c'(e) := 0$ and follower's cost $d'(e) := 0$.

If the old graph contained some edge $\set{w, x}$ for $w \in W$ and $x \in V \setminus W$, the new graph contains the two edges $\set{w_1, x}$ and $\set{w_3, x}$. 
These two edges are leader's/follower's edges in~$I'$ if and only if the original edge $\set{w, x}$ was a leader's/follower's edge in $I$, 
and the costs are inherited from the old edge, that is $c'(\set{w_1, x}) = c'(\set{w_3, x}) := c(\set{w, x})$ and likewise $d'(\set{w_1, x}) = d'(\set{w_3, x}) := d(\set{w, x})$.

If the old graph contained some edge $\set{u, v}$, where both $u, v \in W$, then the new graph contains the
four edges $\set{u_1, v_1}$, $\set{u_1, v_3}$, $\set{u_3, v_1}$, and $\set{u_3, v_3}$.
Similarly to the above case, these four new edges are leader's/follower's edges if and only if the original edge $\set{u, v}$ was a leader's/follower's edge. 
Likewise, the leader's and follower's costs are inherited from the original edge.

Every edge $\set{x,y}$ with both $x,y \in V \setminus W$ is not modified, i.e.\ it stays a leader's/follower's edge and keeps its leader's and follower's costs. In a final step, we add one additional vertex $t'$ and the edge set
\begin{align*}
\tilde E := \bigcup_{w \in W} \set{\set{w_2, t}, \set{w_2, t'}}
\end{align*}
to the graph~$G'$. All edges $e \in \tilde E$ are follower's edges and receive leader's cost $c'(e) = M$ and follower's cost $d'(e) = 0$.

We add one more follower's edge $\set{t,t'}$ with leader's cost $c'(e) = 0$ and follower's cost $d'(e) = \varepsilon$.
The description of the instance $I'$ is completed by specifying its source $s' := s$ and sink $t'$.

We quickly explain the main idea behind this construction, before we proceed with the formal proof: The edges in $\tilde E$ are very cheap for the follower, but extremely costly for the leader. Let us call them \emph{dangerous} edges.
Hence the leader must try at all means to stop the follower from using a dangerous edge. We will show that this essentially means that the leader has to include every path $P_w$ in their solution $X'$. 
This forces the follower to walk through every path $P_w$ and makes it impossible to use a dangerous edge. 
It can then be seen that the remaining task is essentially equivalent to solving problem~\cref{eq:BSP-modified}.

We begin with a helpful claim about the new instance $I'$.

\emph{Claim.} If the leader's solution $X'$ does not include both edges of each of the paths $P_w$, i.e.\ if $\bigcup_{w \in W} P_w \subseteq X'$ does not hold, then the follower's problem becomes infeasible or the follower uses a dangerous edge.

\emph{Proof of the claim.} We distinguish the following two cases:

\emph{Case 1}: There exists some $w \in W$ with $\lvert X' \cap P_w \rvert = 0$. If the followers's problem is infeasible, we are immediately done. Otherwise, let $P'$ denote the optimal $s'$-$t'$-path chosen by the follower and consider how $P'$ arrives at the sink $t'$. 
If $P'$ uses a dangerous edge, we are immediately done. 
Otherwise, $P'$ uses the edge $\set{t, t'}$. Now, since both edges of $P_w$ are leader's edges and $X' \cap P_w = \emptyset$ (and since the follower is not allowed to use edges in $E'^\ell \setminus X'$), the path $P'$ does not visit the vertex $w_2$. 
But observe that it would have been cheaper for the follower to use the two dangerous edges $\set{t, w_2}$ and $\set{w_2, t'}$ instead of the edge $\set{t, t'}$ (cost 0 instead of cost $\varepsilon$). This is a contradiction to $P'$ being optimal for the follower and $\set{t, t'}$ being the last edge of $P'$.
We conclude that $P'$ must arrive at $t'$ with a dangerous edge, so we are done.

\emph{Case 2}: There exists some $w \in W$ with $\lvert X' \cap P_w \rvert = 1$. In this case, since the final $s'$-$t'$-path must contain one edge of $P_w$, 
but the follower is not allowed to use the other edge of $P_w$, the problem is either infeasible or the follower necessarily uses a dangerous edge in order to visit~$w_2$ on the final path. This concludes the proof of the helpful claim.

In the following, we will translate $s'$-$t'$-paths in $G'$ that do not use dangerous edges into $s$-$t$-paths in $G$ and vice versa.
In order to formally talk about this, we introduce the following notation.
Given an edge set $F' \subseteq E' \setminus \tilde E$ in $G'$ that does not contain any dangerous edges, let $\tau(F') \subseteq E$ be the result of contracting the paths $P_w$ and the edge~$\set{t, t'}$ in $F'$. In other words, $\tau(F')$ consists of the edges in $G$ from which the edges in $F' \setminus (\bigcup_{w \in W} P_w \cup \set{\set{t, t'}})$ are constructed in the definition of~$I'$.
See \cref{fig:vertex-lemma-path-contract} for an illustration.
\begin{figure}
    \centering
    \includegraphics[scale=1.0]{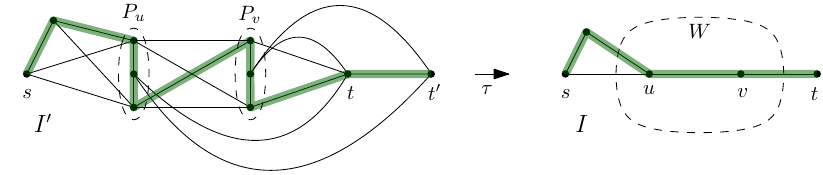}
    \caption{Example of the path contraction used in the proof of \cref{lem:lifting-lemma}.}
    \label{fig:vertex-lemma-path-contract}
\end{figure}
Accordingly, given an edge set $F \subseteq E$, denote by $\tau^{-1}(F) = \set{F' \subseteq E' \setminus \tilde E : \tau(F') = F}$ the preimage of~$F$, which is the set of all edge sets in $G'$ (without dangerous edges) whose contraction results in $F$.
Note that each edge $e \in E$ has either a single corresponding edge in $E'$ or two edges or four edges, which can occur in its preimage~$\tau^{-1}(\set{e})$.
Moreover, the leader's costs satisfy $c'(F') = c(\tau(F'))$ for all $F' \subseteq E' \setminus \tilde E$ that contain at most one of these copies for each~$e \in E$. The follower's costs satisfy $d'(F') = d(\tau(F)) + \varepsilon$ in case $F'$ contains the edge $\set{t, t'}$ or $d'(F') = d(\tau(F))$ otherwise.

We now give the formal proof of the lemma.
Item (iii) is trivial by the definition of the instance~$I'$.
We now consider item (i), which has two directions.

For the first direction, let us assume that there is no $s$-$t$-path in $G$ that includes $W$. 
We have to show that $\OPTs(I') \geq M$.
We claim that, in the new instance $I'$, no matter which solution~$X'$ the leader chooses, the follower's problem is either infeasible or the follower uses a dangerous edge. 
This suffices, since it means that the leader's total cost resulting from $X'$ is either $\infty$ (in the first case) or at least as large as $M$ (in the second case).
Indeed, let $X' \subseteq E'^\ell$ be any solution chosen by the leader. If not $\bigcup_{w \in W} P_w \subseteq X'$, then we are immediately done by using the helpful claim. So let us assume $\bigcup_{w \in W} P_w \subseteq X'$. If the follower's problem is infeasible, we are done. Otherwise, there exists an $s'$-$t'$-path $P'$ in $G'$ that includes $X'$. 
In particular, $P'$ includes all the paths $P_w$ as subpaths. 
Note that $P'$ does not contain any dangerous edge because it is a simple path. Hence, the last edge of $P'$ is the edge $\set{t,t'}$.
However, the contraction $\tau(P')$ of the path $P'$ is an $s$-$t$-path in $G$ that includes~$W$ (compare \cref{fig:vertex-lemma-path-contract}), which is a contradiction.
This finishes the proof of the first direction of item~(i).

In order to show the reverse direction, assume that there is an $s$-$t$-path $P$ in $G$ such that $P$ includes $W$. We have to show that $\OPTs(I') < M$. 
Indeed, it is easy to construct an $s'$-$t'$-path~$P' \in \tau^{-1}(P)$ in $G'$ that includes all $P_w$ as subpaths.
Then $X' := E'^\ell \cap P'$ is a feasible leader's solution. Moreover, given the leader's solution $X'$, the follower cannot use any dangerous edge, since $P_w \subseteq X'$ for all $w \in W$. 
Hence, by the definition of $M$, the leader's objective value resulting from choosing $X'$ is smaller than $M$. This concludes the proof of item~(i).

Finally, we prove item (ii).
Assume that $\OPTs(I') < M$. We have to prove that the new instance $I'$ has the same optimal value as problem~\cref{eq:BSP-modified}. 
Consider some solution $X' \subseteq E'^\ell$ chosen by the leader.
By the helpful claim, if not $\bigcup_{w \in W}P_w \subseteq X'$, then $X'$ is either infeasible or the follower uses a dangerous edge.
In both cases, $X'$ is not an optimal solution, since $\OPTs(I') < M$.
Therefore, every optimal leader's solution $X'$ has the property that $P_w \subseteq X'$ holds for all $w \in W$.
This has two consequences for the follower:
They can never use a dangerous edge, and whenever they encounter some $w_1$ ($w_3$, respectively), they have to immediately go to $w_3$ ($w_1$, respectively) using the path $P_w$.
Applying the map $\tau$ to $s'$-$t'$-paths that the follower can build from such a leader's solution~$X'$ gives corresponding $s$-$t$-paths in $G$ that include~$W$.
More precisely, $\tau$ maps feasible follower's solutions for $\ProblemFOLs(I', X')$ to feasible follower's solutions for problem~\cref{eq:BSP-modified} given the instance~$I$, the vertex set~$W$, and the leader's solution~$\tau(X')$.
On the other hand, any feasible follower's solution~$Y'$ for this instance of problem~\cref{eq:BSP-modified} has at least one corresponding feasible follower's solution for $\ProblemFOLs(I', X')$ in $\tau^{-1}(Y')$.
As observed above, the follower's costs of such corresponding follower's solutions differ by precisely $\varepsilon$, due to the edge~$\set{t, t'}$ being contained in every feasible follower's solution of $\ProblemFOLs(I', X')$.
This implies that also the optimal values of these two follower's problems differ by exactly~$\varepsilon$.
Finally, any feasible leader's solution~$X$ of \cref{eq:BSP-modified} (given the instance $I$ and the vertex set $W$) can be transformed into a feasible leader's solution~$X'$ for the instance $I'$ of our {\BSP} problem by choosing an arbitrary feasible $X' \in \tau^{-1}(X)$ that includes all edges in $\bigcup_{w \in W} P_w$.
Then,
as above, the follower's responses in both problems correspond to each other via $\tau$,
resulting in the same leader's objective value in both cases.
This gives a way to construct optimal leader's solutions for the two problems from each other, as claimed in~(ii),
as well as the fact that both problems have the same leader's optimal value.
This finishes the proof of item~(ii).
\end{proof}

\subsection{NP-completeness of the follower's problem}

In the following, we wish to prove the NP-completeness of the follower's problem in case of strong path completion. 
Note that the follower's problem receives the selected leader's solution~$X$ as part of its input,
but it is not guaranteed that $X$ is actually an optimal choice for the leader.
The strongest mathematical result would be the one proving that the follower's problem is NP-complete, 
even under the additional constraint that $X$ is optimal.
In fact, we prove such a result.
Note however, that such a property of $X$ cannot be tested in polynomial time. 
In order to avoid technical complications with the definition of NP-problems, we choose the following formulation of this fact.

\begin{theorem} \label{thm:BSP-strong-follower-NP-complete}
    For the strong path completion variant of the {\BSP} problem (both {\BSPstrongUndir} and {\BSPstrongDir}), the follower's problem is NP-complete. More precisely, there exists a polynomial-time reduction from the {\HamPath} problem to the follower's problem $\ProblemFOLs(I, X)$ with the additional property that $X$ is an optimal leader's solution.
\end{theorem}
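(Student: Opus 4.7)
The plan is to prove NP-membership (obvious: the set $Y$ serves as a polynomial-size certificate that can be checked to satisfy $X \cup Y \in \Pst$ and $d(Y) \leq T$ in polynomial time) and then reduce from \HamPath. The high-level idea is to invoke the Vertex Fixing Lemma with $W = V \setminus \set{s,t}$ so that the follower's problem for a suitable leader's solution $X$ amounts to finding an $s$-$t$-path through every remaining vertex of $V$, that is, a Hamiltonian $s$-$t$-path in the original graph.

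Given an instance $(G,s,t)$ of \HamPath{} with $G = (V,E)$, I would first augment $G$ into $G^+$ by adding a \enquote{safety path} that traverses every vertex of $V \setminus \set{s,t}$ in some fixed order, subdividing each new edge with a fresh vertex so that $G^+$ remains simple and has no parallel edges. Next I would define the trivial BSP instance $I_0 := (G^+, \emptyset, E(G^+), s, t, c_0, d_0)$ with zero leader's costs, zero follower's cost on the edges of $G$, and a large follower's cost $H$ on each safety edge. Applying \cref{lem:lifting-lemma} to $I_0$ with $W := V \setminus \set{s,t}$ then yields an instance $I'$ whose only leader's edges are the gadget paths $\bigcup_{w \in W} P_w$. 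The reduction outputs $(I', X, T)$ with $X := \bigcup_{w \in W} P_w$ and $T := \varepsilon$.

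For correctness I would argue as follows. The safety path witnesses an $s$-$t$-path through $W$ in $G^+$, so by \cref{lem:lifting-lemma}(i) we have $\OPTs(I') < M$, and item (ii) applies. Since the only leader's choice available in $I_0$ is $\emptyset$, the corresponding optimal leader's solution of $I'$ is exactly $X$; in particular $X$ is optimal, as required by the additional property. Moreover the follower's optimum for $X$ equals $d^\star + \varepsilon$, where $d^\star$ is the minimum $d_0$-cost of an $s$-$t$-path in $G^+$ that visits every vertex of $W$. Such a path has $d_0$-cost $0$ precisely when it uses only edges of $G$, which by a vertex-count argument is equivalent to $G$ admitting a Hamiltonian $s$-$t$-path; otherwise $d^\star \geq H$. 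Choosing $H$ large relative to $\varepsilon$ and setting $T = \varepsilon$ makes the follower's decision problem answer yes iff \HamPath{} does. The directed version then follows from \cref{lem:undir-special-case-of-dir}.

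The main obstacle is to ensure that the output $X$ is really an optimal leader's solution, not merely a natural candidate. A direct application of \cref{lem:lifting-lemma} to $G$ itself would leave $X$ infeasible whenever $G$ has no Hamiltonian $s$-$t$-path, which would violate the additional property required by the theorem. Adding the safety path in $G^+$ is the crucial trick: it makes \cref{eq:BSP-modified} always feasible, so \cref{lem:lifting-lemma}(ii) applies uniformly, $X$ is always optimal, and the cost gap between the free $G$-edges and the expensive safety edges cleanly encodes Hamiltonicity in the follower's optimum.
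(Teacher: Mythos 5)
Your proposal is correct and takes essentially the same route as the paper: the paper also reduces from {\HamPath} by building an auxiliary instance with $\El = \emptyset$ (so the leader's solution forced by the vertex fixing lemma, namely $\bigcup_{w \in W} P_w$ with $W = V \setminus \set{s,t}$, is trivially optimal), ensuring the follower's optimum is $\varepsilon$ if and only if $G$ has a Hamiltonian $s$-$t$-path. The only cosmetic difference is how feasibility of \cref{eq:BSP-modified} is guaranteed so that case (ii) of \cref{lem:lifting-lemma} always applies: the paper embeds $G$ into the complete graph on $V$ with follower's cost $1$ on non-edges, whereas you append a subdivided safety path with large follower's cost $H$; both serve the identical purpose.
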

\begin{proof}
  The directed case follows from the undirected case by \cref{lem:undir-special-case-of-dir}.
  For the undirected case, we present a reduction from the NP-complete {\HamPath} problem~\cite{garey1979computers}.
  Given a graph $G = (V, E)$ and vertices $s, t \in V$,
  the question is whether there is an $s$-$t$-path in $G$ that visits every vertex exactly once. 

  Let $(G, s, t)$ be an instance of the {\HamPath} problem with $G = (V, E)$.
  We construct an instance $I = (G', \El, \Ef, s, t, c, d)$ of {\BSPstrongUndir} as follows.
  Let $G'$ be the complete graph on the vertex set~$V$ of $G$, i.e.\ $G'$ contains every possible edge. 
  The start vertex $s$ and the end vertex $t$ in $I$ are the same as in the {\HamPath} instance.
  Let $\El = \emptyset$ and $\Ef = E(G')$.
  We define the leader's costs as $c(e) = 0$ for all $e \in E(G')$
  and the follower's costs as $d(e) = 0$ for all $e \in E(G) \cap E(G')$ and $d(e) = 1$ for all $e \in E(G') \setminus E(G)$.
  In this instance,
  the leader has nothing to decide
  and the follower looks for an $s$-$t$-path in $G'$ that uses a minimal number of edges that are not in $G$.
  
  We now apply the vertex fixing lemma (\cref{lem:lifting-lemma}) to the instance~$I$ and the vertex set $W = V \setminus \set{s, t}$.
  We obtain a modified instance $I'$.
  Note that, since the complete graph~$G'$ contains a Hamiltonian $s$-$t$-path, we are in the second case of the vertex fixing lemma.
  In the corresponding problem~\cref{eq:BSP-modified},
  the leader still has nothing to decide
  and the follower's task is to find a Hamiltonian $s$-$t$-path in $G'$ that uses a minimal number of edges that are not in $G$.
  In particular, if $G$ contains a Hamiltonian $s$-$t$-path, then the follower's objective value is 0, otherwise it is at least 1.
  Let $X'$ be the unique optimal leader's solution for instance $I'$,
  as given by the vertex fixing lemma (corresponding to the unique leader's solution $X = \emptyset$ in \cref{eq:BSP-modified}).
  The lemma implies that, if $G$ contains a Hamiltonian $s$-$t$-path, then the follower's optimal value $\OptFOLs(I', X')$ is~$\varepsilon$, otherwise it is at least~$1 + \epsilon$. (Note that $\varepsilon > 0$ can be set arbitrarily in \cref{lem:lifting-lemma}.)
  Thus, we have $\OptFOLs(I', X') \leq \varepsilon$ if and only if $G$ contains a Hamiltonian $s$-$t$-path.
\end{proof}

\subsection[Sigmap2-completeness of the leader's problem]{$\Sigma^p_2$-completeness of the leader's problem}
\label{subsec:sigma-2-completeness}

In this section, we prove that the {\BSP} problem in the strong path completion variant is even harder\footnote{assuming $\mathrm{NP} \neq \Sigma^p_2$} than NP-complete, namely $\Sigma^p_2$-complete. 

\begin{theorem}
\label{thm:BSP-strong-sigma-2-complete}
    Both problems {\BSPstrongUndir} and {\BSPstrongDir} are $\Sigma^p_2$-complete.
\end{theorem}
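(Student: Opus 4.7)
The plan is to prove the statement by first establishing membership in $\Sigma^p_2$ and then showing $\Sigma^p_2$-hardness for the undirected case, from which the directed case will follow via \cref{lem:undir-special-case-of-dir}. For membership, the leader's problem is naturally of the form ``$\exists X \subseteq \El$ such that $c(X \cup Y) \leq T$, where $Y$ is an optimal follower's response''. The optimality condition for $Y$ translates to a universal quantifier ``$\forall Y' \subseteq \Ef$ with $X \cup Y' \in \Pst$: $d(Y) \leq d(Y')$'', and together with the polynomial-time computability of $c(X \cup Y)$ and the polynomial-time testability of ``$X \cup Y' \in \Pst$'', this puts the problem in $\Sigma^p_2$ in the standard way (one also has to handle ties and the optimistic setting, which can be done by a small perturbation argument analogous to the proof of \cref{lem:weak-followers-problem}).

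For the hardness direction, the natural candidate source problem is the $\Sigma^p_2$-complete \MinMaxHam-style problem (or \QSAT) from \cite{GrueneWulf2024}, since the follower's problem is already shown to encode \HamPath{} in \cref{thm:BSP-strong-follower-NP-complete} through the vertex fixing lemma. The strategy is to start from a $\Sigma^p_2$-complete quantified Hamiltonicity problem of the shape ``does there exist a partial assignment (or a selection of edges) $X$ such that for every completion the resulting graph admits/avoids a Hamiltonian $s$-$t$-path with a specified cost?''. I would build a graph $G$ with vertex set $W = V \setminus \{s,t\}$ designated as ``must be visited'', so that the path constraint combined with the vertex fixing lemma forces Hamiltonicity on the follower's side, mimicking the construction of \cref{thm:BSP-strong-follower-NP-complete}. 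On top of this Hamiltonicity skeleton, I would add variable gadgets whose two ``orientations'' correspond to the two truth values of the existentially quantified variables and are controlled by $\El$, together with clause gadgets controlled by $\Ef$ whose cheapest realization depends on whether the induced assignment plus follower's choices satisfies the formula. The leader's cost function $c$ would then count the number of violated clauses (scaled appropriately so that $c$ dominates over the Hamiltonicity enforcement cost), so that $\OPTs(I) \leq T$ iff the original $\Sigma^p_2$ instance is a yes-instance.

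The key technical device throughout will be the vertex fixing lemma, which lets us pretend that we are solving a bilevel Hamiltonian path problem rather than a bilevel shortest path problem, and which is exactly why such a reduction from \cite{GrueneWulf2024} can be transported into our setting without redesigning the variable/clause gadgets from scratch. After performing the reduction in the strong path completion variant on an undirected graph, the $\Sigma^p_2$-hardness of {\BSPstrongDir} follows immediately from \cref{lem:undir-special-case-of-dir}, which provides a polynomial-time reduction from the undirected variant to the directed variant.

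The main obstacle I anticipate is calibrating the gadgets so that three objectives are simultaneously respected: (a) the leader's choice of $X$ must encode a full assignment to the $\exists$-variables and cannot be ``partially'' set (this is harder here than in pure min-max problems because the leader's edges must themselves form part of a simple $s$-$t$-path); (b) the follower, after the leader commits, must be forced to both (i) visit the vertices that encode the $\forall$-variables and clauses, so that their response encodes an adversarial completion, and (ii) choose cheapest-for-the-leader edges under the optimistic tiebreak; (c) the numerical costs for the leader must separate yes- from no-instances by a polynomial gap while remaining dominated by the penalty $M$ of \cref{lem:lifting-lemma}. These three constraints interact nontrivially, and most of the work will be in designing variable gadgets whose two ``branches'' are of equal length and equally connected to the clause gadgets, so that the follower cannot profit from a wrong branch merely by taking a numerically shorter subpath, combined with dangerous-edge tricks analogous to the proof of \cref{lem:lifting-lemma} to enforce that every feasible leader's choice fully commits to one branch per variable.
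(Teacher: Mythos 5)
Your membership argument and the overall scaffolding (reduce in the undirected case, lift to the directed case via \cref{lem:undir-special-case-of-dir}, use the vertex fixing lemma to turn the bilevel shortest path problem into a bilevel Hamiltonian path problem) match the paper. But the heart of the theorem is the $\Sigma^p_2$-hardness reduction itself, and there your proposal stops at a plan: you say you would design variable gadgets controlled by $\El$, clause gadgets controlled by $\Ef$, and a leader cost counting violated clauses, and you correctly list the obstacles (forcing the leader to commit to a full assignment inside a simple path, making the follower's response encode an adversarial completion, calibrating costs) --- yet none of these gadgets or calibrations is actually constructed, and these are precisely the points where the difficulty lies. In particular, you never explain how the follower's own objective $d$ is made to realize the universal quantifier: in the optimistic setting the follower only minimizes $d$, so the construction must make a falsifying completion \emph{strictly cheaper in $d$} whenever one exists consistently with the leader's choice, and make the leader's cost detect exactly that event, all while every consistent completion remains a Hamiltonian path of otherwise comparable follower cost. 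Your remark about relying on the optimistic tiebreak points in the wrong direction here, since the tiebreak favors the leader and cannot supply the adversarial behavior.

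The paper resolves these issues by a mechanism your sketch does not contain: it first defines an intermediate problem {\MinMaxHam} and proves it $\Sigma^p_2$-complete (\cref{lem:bilevel-ham-path}) by reducing from {\QSAT} --- rewriting $\neg\varphi$ in 3-CNF, adding Tseitin-style auxiliary variables and one variable $z$ enforcing $z \Leftrightarrow \neg\varphi(X,Y)$, and then reusing a standard {\SAT}-to-{\HamPath} reduction as a black box (\cref{obs:ham-cycle-NP-reduction}) rather than designing new variable/clause gadgets. The reduction to {\BSPstrongUndir} then needs only one tiny cost gadget at a single degree-3 vertex: the follower pays $0$ on $\tilde e = e_z$ and $1$ on the two alternative edges, while the leader pays $1$ on $\tilde e$ and $0$ elsewhere, so the follower's $d$-minimization exactly encodes ``exhibit a falsifying assignment if possible'' and the leader's value is $0$ iff the {\QSAT} instance is a yes-instance. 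Note also that the paper does not need your anticipated ``dangerous-edge tricks'' to force the leader to commit to a full assignment: if the leader's edge set does not encode an assignment, no Hamiltonian path is consistent with it, so that choice is simply infeasible by \cref{obs:ham-cycle-NP-reduction}(i). As written, your proposal identifies the right tools but leaves the decisive construction --- the encoding of the quantifier alternation into the two cost functions --- unproved, so it does not yet constitute a proof of \cref{thm:BSP-strong-sigma-2-complete}.
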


The proof of \cref{thm:BSP-strong-sigma-2-complete} consists of two ingredients. The first ingredient is the vertex fixing lemma (\cref{lem:lifting-lemma}), which roughly states that we can always modify an instance such that certain vertices have to be visited by the path. 
The second ingredient is inspired by a recent line of research showing how to obtain $\Sigma^p_2$-hardness results for many min-max optimization problems \cite{GrueneWulf2024, GoerigkLW24}, see also \cite{jackiewicz2024computational}. 
The main idea is to take some existing NP-complete problem, say, the {\HamPath} problem, examine its NP-completeness proof closely, and upgrade this existing NP-completeness 
proof to a $\Sigma^p_2$-completeness proof of some min-max variant of the {\HamPath} problem.
This is useful if the min-max variant of {\HamPath} is similar to the target problem (the {\BSP} problem)
because $\Sigma^p_2$-hardness of the target problem can then be shown by a relatively easy reduction from the min-max variant of {\HamPath}. 
Even though the setting of \cite{GrueneWulf2024, GoerigkLW24} is not immediately applicable to our bilevel setting, we show that the same idea still works. However, we require several nontrivial adaptations specific to the bilevel setting.
Formally, we consider the following min-max variant of the {\HamPath} problem.
In the following, let $\delta(v)$ denote the set of edges incident to some vertex $v$.

\begin{quote}
    \textbf{Problem} {\MinMaxHam}

    \textbf{Input:} An undirected graph $G = (V, E)$, vertices $s,t \in V$, a vertex $v \in V$ of degree~$3$, an edge $\tilde e \in \delta(v)$ incident to $v$, a subset $B \subseteq E$ of the edges with $B \cap \delta(v) = \emptyset$.
    
    \textbf{Question:} Is there a subset $B' \subseteq B$ with the following properties: There is at least one Hamiltonian $s$-$t$-path $H$ in $G$ with $H \cap B = B'$, and, for all Hamiltonian $s$-$t$-paths~$H$ in $G$, we have the implication $(H \cap B = B' \implies \tilde e \not\in H)$?
\end{quote}

The problem {\MinMaxHam} can be quickly summarized as follows: Is there a set $B' \subseteq B$ such that, by fixing $B'$ as part of a Hamiltonian path, we enforce that such a path never uses the edge $\tilde e$ (and at least one such path exists)?
Note that we make two technical assumptions in the problem definition, namely that $v$ has degree 3 and that $B \cap \delta(v) = \emptyset$ holds. These will be convenient for our proof of \cref{thm:BSP-strong-sigma-2-complete}, and they are not a restriction because they are naturally satisfied for the instance constructed in the hardness proof of \cref{lem:bilevel-ham-path} below.

The remainder of this subsection is structured as follows: First, we show that {\MinMaxHam} is $\Sigma^p_2$-complete. 
Then we show that {\MinMaxHam} can be reduced to the {\BSP} problem,
by using the vertex fixing lemma. This suffices to prove \cref{thm:BSP-strong-sigma-2-complete}.

In order to prove the $\Sigma^p_2$-completeness of {\MinMaxHam}, we reduce from the canonical problem {\QSAT}.
We introduce some notation. 
For a set $X = \fromto{x_1}{x_n}$ of Boolean variables, the corresponding set of \emph{literals} is $L_X = \fromto{x_1}{x_n} \cup \fromto{\overline x_1}{\overline x_n}$. 
A \emph{clause} is a disjunction of literals. 
A Boolean formula is in \emph{conjunctive normal form} (CNF) if it is a conjunction of clauses. It is in 3-CNF if additionally each clause consists of at most three literals.
It is in \emph{disjunctive normal form} if it is a disjunction of conjunctions of literals. It is in 3-DNF if each conjunction has at most three literals.
An \emph{assignment} of the variables is a map $\alpha : X \to \set{0,1}$. Let $\varphi(X)$ be a Boolean formula depending on the variables $X$. 
We denote by $\varphi(\alpha) \in \set{0,1}$ the evaluation of $\varphi$ under the assignment $\alpha$. For formulas whose variable set is partitioned into two disjoint parts $X$ and $Y$, we extend the above notation to $\varphi(X, Y)$ and $\varphi(\alpha, \beta)$. 
In the well-known NP-complete {\SAT} problem, we are given a formula in 3-CNF, and the question is if there is a satisfying assignment.
The canonical $\Sigma^p_2$-complete problem {\QSAT} \cite{stockmeyer1976polynomial,wrathall1976} is then defined as follows. (Note that, in contrast to {\SAT}, the input here is in DNF.)

\begin{quote}
    \textbf{Problem} {\QSAT}

    \textbf{Input:} A Boolean formula $\varphi(X, Y)$ in 3-DNF on the variable set $X \cup Y$, where $X$~and~$Y$ are disjoint.
    
    \textbf{Question:} Does there exist an assignment $\alpha : X \to \set{0,1}$ such that, for all assignments $\beta : Y \to \set{0,1}$, we have that $\varphi(\alpha, \beta) = 1$?
  \end{quote}

As stated above, our proof works by extending some existing NP-completeness proof of the {\HamPath} problem to the min-max case.
It turns out that it is not necessary to repeat all technical details of such a proof. 
All that we require is that there exists some reduction with the key properties specified in \cref{obs:ham-cycle-NP-reduction}. 
It can be easily verified that, for example, the reduction of Garey, Johnson, and Tarjan \cite{garey1976planar} has this property. (Note that theirs is a reduction to the Hamiltonian cycle problem, but can be easily modified to the {\HamPath} problem by standard arguments.)
Many other folklore reductions from {\SAT} to the {\HamPath} problem also have this property.

\begin{observation}
    \label{obs:ham-cycle-NP-reduction}
    There exists a polynomial-time reduction with the following properties: Given as input a Boolean formula $\varphi$ in CNF on the variable set $X := \set{x_1,\dots,x_n}$, it computes an undirected graph $G_\varphi = (V, E)$, together with vertices $s,t \in V$ and $2n$ distinct edges $F := \fromto{e_{x_1}}{e_{x_n}} \cup \fromto{e_{\overline x_1}}{e_{\overline x_n}} \subseteq E$ such that the following properties are satisfied:
\end{observation}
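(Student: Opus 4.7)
The plan is to invoke a classical polynomial-time reduction from {\SAT} to the {\HamPath} problem and verify that, with only cosmetic adjustments, it already exhibits the claimed structure. A convenient source is the reduction of Garey, Johnson, and Tarjan~\cite{garey1976planar}, which technically targets the Hamiltonian cycle problem but can be converted into a reduction for the Hamiltonian $s$-$t$-path problem by the standard trick of subdividing a fixed edge of the intended cycle by two new vertices $s$ and $t$ (or attaching $s,t$ as pendants of two fixed cycle vertices). Many folklore reductions from {\SAT} follow the same blueprint, and any of them would do.

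The main structural feature on which I would rely is the use of \emph{variable gadgets}: for each Boolean variable $x_i$, the reduction inserts a small subgraph containing two distinguished edges $e_{x_i}$ and $e_{\overline{x}_i}$ (the \enquote{two tracks} of the gadget), with the property that every Hamiltonian $s$-$t$-path in $G_\varphi$ traverses exactly one of them; traversing $e_{x_i}$ encodes setting $x_i := 1$ and traversing $e_{\overline{x}_i}$ encodes setting $x_i := 0$. Clause gadgets are then attached to the variable gadgets in such a way that a Hamiltonian $s$-$t$-path can be completed if and only if the induced assignment satisfies every clause of $\varphi$. From this behavior, one would read off directly the expected properties: every Hamiltonian $s$-$t$-path $H$ in $G_\varphi$ contains exactly one edge from each pair $\{e_{x_i}, e_{\overline{x}_i}\}$; the induced assignment $\alpha_H$ defined by $\alpha_H(x_i) = 1 \Leftrightarrow e_{x_i} \in H$ satisfies $\varphi$; and conversely, every satisfying assignment of $\varphi$ extends to at least one Hamiltonian $s$-$t$-path, establishing the desired correspondence between Hamiltonian $s$-$t$-paths of $G_\varphi$ and satisfying assignments of $\varphi$.

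The main obstacle is purely bookkeeping. The chosen source reduction has to employ variable gadgets in which $e_{x_i}$ and $e_{\overline{x}_i}$ can be selected as \emph{single} edges rather than longer subpaths; if a particular reference uses a \enquote{two-subpath} variant, I would simply pick any one edge on each subpath as the representative, since every Hamiltonian $s$-$t$-path that enters such a subpath is forced to traverse every one of its edges. The $2n$ edges $e_{x_1},\dots,e_{x_n},e_{\overline{x}_1},\dots,e_{\overline{x}_n}$ are then automatically distinct because the variable gadgets are constructed vertex-disjointly from one another. Since identifying these $2n$ edges requires only a linear pass over the output of the source reduction, the whole procedure stays polynomial-time, completing the verification.
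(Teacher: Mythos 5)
Your overall strategy coincides with the paper's: the paper also treats \cref{obs:ham-cycle-NP-reduction} as a statement to be read off from a classical reduction (Garey, Johnson, and Tarjan, or any folklore \SAT-to-\HamPath reduction, with the standard cycle-to-path conversion), and your description of the variable gadgets correctly accounts for properties (i) and (ii): every Hamiltonian $s$-$t$-path uses exactly one of $e_{x_i}$, $e_{\overline{x}_i}$, and a choice of $F$-edges extends to a Hamiltonian $s$-$t$-path exactly when the induced assignment satisfies $\varphi$.

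However, your proposal never addresses property (iii): the edges $e_{x_n}$ and $e_{\overline{x}_n}$ must meet at a common vertex $v$ of degree exactly $3$ whose third incident edge lies outside $F$. This is not a cosmetic add-on; it is exactly the feature that the downstream arguments lean on. In \cref{lem:bilevel-ham-path}, $v$ is taken to be the common endpoint of $e_z$ and $e_{\overline z}$, and (iii) is what guarantees $B \cap \delta(v) = \emptyset$; in the proof of \cref{thm:BSP-strong-sigma-2-complete}, the three edges of $\delta(v)$ are precisely the edges that carry the nonzero leader's and follower's costs, so both the adjacency and the degree bound matter. Worse, your fallback of \enquote{picking any one edge on each forced subpath as the representative} can actively violate (iii): arbitrarily chosen representatives of the two tracks of the $x_n$-gadget need not share an endpoint at all, let alone one of degree $3$ whose remaining edge avoids $F$. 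The repair is short but must be stated: for the last variable, choose $e_{x_n}$ and $e_{\overline{x}_n}$ to be the two track edges incident to a common endpoint of the gadget, check (or arrange by a local modification) that this endpoint is attached to the rest of the graph by a single edge not in $F$ (in the cubic graphs produced by Garey--Johnson--Tarjan the degree-$3$ condition is automatic), and carry out the Hamiltonian-cycle-to-$s$-$t$-path conversion away from this vertex so that its degree is not disturbed.
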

\begin{enumerate}[(i)]
    \item For every $i \in \fromto{1}{n}$, every Hamiltonian $s$-$t$-path $H$ in $G_\varphi$ uses exactly one of the two edges $e_{x_i}$ and $e_{\overline x_i}$.
    \item There is a direct correspondence between subsets of the edge set $F$ that can be part of a Hamiltonian $s$-$t$-path and satisfying assignments for the formula $\varphi$.
    Formally: For an assignment $\alpha : X \to \set{0,1}$, let $F_\alpha$ denote the corresponding set of edges defined by $F_\alpha = \set{e_{x_i} : \alpha(x_i) = 1} \cup \set{e_{\overline x_i} : \alpha(x_i) = 0}$. Given an assignment $\alpha$, there exists a Hamiltonian $s$-$t$-path $H$ in $G$ with $H \cap F = F_\alpha$ if and only if $\varphi(\alpha) = 1$.
    \item The edges $e_{x_n}$ and $e_{\overline x_n}$ meet at the same vertex $v \in V$, and $v$ has degree 3 in $G_\varphi$, where the third edge incident to $v$ is not contained in~$F$.
\end{enumerate}

\begin{lemma}
\label{lem:bilevel-ham-path}
    The problem {\MinMaxHam} is $\Sigma^p_2$-complete.
\end{lemma}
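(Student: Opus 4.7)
The plan is to establish both membership in $\Sigma^p_2$ and $\Sigma^p_2$-hardness. Membership is immediate: {\MinMaxHam} asks whether there exists $B' \subseteq B$ such that some Hamiltonian $s$-$t$-path $H^*$ satisfies $H^* \cap B = B'$ and every Hamiltonian $s$-$t$-path $H$ with $H \cap B = B'$ avoids $\tilde e$. The pair $(B', H^*)$ is a polynomial-size existential certificate, and the remaining condition is the coNP predicate \enquote{no Hamiltonian $s$-$t$-path $H$ with $H \cap B = B'$ contains $\tilde e$}, so the problem lies in $\Sigma^p_2$.

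For hardness, I would reduce from {\QSAT}. Given a 3-DNF formula $\varphi(X, Y)$, let $\psi := \neg\varphi$, which is in 3-CNF; then $\exists \alpha \forall \beta\,\varphi(\alpha,\beta) = 1$ holds iff there exists $\alpha$ such that $\psi(\alpha, \cdot)$ is unsatisfiable. Apply \cref{obs:ham-cycle-NP-reduction} to $\psi$ to obtain a graph $G_\psi$ with choice-edge pairs $F = F_X \cup F_Y$ for the $X$- and $Y$-variables. Take $B := F_X$, so a set $B' \subseteq B$ encodes a leader's assignment $\alpha$ to $X$ precisely when $B' = F_\alpha|_X$. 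Ordering the variables so that the last one, $x_n$, lies in $Y$, item~(iii) of the observation furnishes a degree-$3$ vertex $v$ incident to $e_{x_n}, e_{\bar x_n}$ and a third edge $\tilde e \notin F$; all side conditions of {\MinMaxHam} ($v$ of degree $3$, $\tilde e \in \delta(v)$, $B \cap \delta(v) = \emptyset$) are then automatically satisfied.

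The central step is to modify $G_\psi$ by attaching an \emph{escape gadget} so that the resulting graph $G$ has the following two properties: (i) for every $\alpha$, there is at least one Hamiltonian $s$-$t$-path $H$ in $G$ with $H \cap B = F_\alpha|_X$ and $\tilde e \notin H$, namely a \enquote{dummy} path that bypasses clause satisfaction via the escape structure while visiting every vertex; and (ii) every Hamiltonian $s$-$t$-path $H$ in $G$ containing $\tilde e$ traverses the clause gadgets of $G_\psi$ in the standard satisfying manner, so $H$ encodes a satisfying assignment of $\psi$ extending the $\alpha$ read off from $H \cap B$. Granting (i) and (ii), $B' = F_\alpha|_X$ is a valid choice for {\MinMaxHam} iff no Hamiltonian $s$-$t$-path extending $B'$ uses $\tilde e$, iff $\psi(\alpha, \cdot)$ is unsatisfiable, iff $\forall \beta\,\varphi(\alpha,\beta) = 1$; this yields the equivalence with {\QSAT}.

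The main obstacle will be constructing the escape gadget under the strong constraint that $v$ has degree~$3$ in $G$. Because any Hamiltonian $s$-$t$-path must use exactly two of the three edges at $v$ (assuming $v \notin \set{s,t}$), a dummy path avoiding $\tilde e$ is forced to use \emph{both} $e_{x_n}$ and $e_{\bar x_n}$ --- something no Hamiltonian path in $G_\psi$ can do, by item~(i) of the observation. I would therefore design the escape gadget to wrap around the endpoints of each $Y$-variable's choice edges and of each clause gadget, providing an alternative Hamiltonian traversal that simultaneously uses both choice edges at every $Y$-variable gadget (and in particular at $v$, thereby avoiding $\tilde e$), while leaving the $X$-variable gadgets untouched so that $H \cap B$ still pins down a unique $\alpha$. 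The non-dummy Hamiltonian paths are then constrained to use exactly one choice edge at $v$ and therefore also $\tilde e$, so they coincide with the Hamiltonian paths of $G_\psi$ and with the satisfying assignments of $\psi$. Verifying that no hybrid path can both use $\tilde e$ and avoid the standard clause-satisfying traversal --- and that the escape gadget remains Hamiltonian-traversable for every $\alpha$ --- is the most delicate part, for which I would adapt the per-gadget bypass techniques of~\cite{GrueneWulf2024}.
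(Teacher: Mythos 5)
Your membership argument and the overall shape of your reduction (negate the 3-DNF to obtain a 3-CNF $\psi$, run a Hamiltonian-path reduction, let $B$ be the $X$-choice edges, and detect \enquote{$\exists\beta:\psi(\alpha,\beta)=1$} via a marked edge at a degree-3 vertex) are in the spirit of the paper, and you correctly identify the crux: if you apply \cref{obs:ham-cycle-NP-reduction} to $\psi$ directly, then for exactly those $\alpha$ that should be witnesses (i.e.\ $\psi(\alpha,\cdot)$ unsatisfiable) there is \emph{no} Hamiltonian $s$-$t$-path with $H\cap B=F_\alpha$, so the existential requirement of {\MinMaxHam} fails. But your fix --- an \enquote{escape gadget} that, for every $\alpha$, supplies a dummy Hamiltonian traversal using \emph{both} choice edges at every $Y$-gadget (in particular both $e_{x_n}$ and $e_{\overline{x}_n}$ at $v$, avoiding $\tilde e$), while guaranteeing that every $\tilde e$-containing Hamiltonian path still encodes a satisfying assignment of $\psi$ extending $\alpha$ --- is never constructed, and it is precisely the hard part of the proof. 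Once the graph is modified so that some Hamiltonian paths use both choice edges of a variable gadget and bypass clause coverage, properties (i) and (ii) of \cref{obs:ham-cycle-NP-reduction} no longer apply to the modified graph; the correspondence between $\tilde e$-paths and satisfying assignments, the exclusion of the \enquote{hybrid} paths you mention, and the degree-3 condition at $v$ would all have to be re-established for a concrete gadget you have not specified. Note also that with your choice of $\tilde e$ as the third (non-choice) edge at $v$, \emph{every} Hamiltonian path of the unmodified $G_\psi$ contains $\tilde e$, so without this missing construction the reduction cannot work at all. As written, this is a genuine gap, not a deferred routine detail.

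The paper resolves the same difficulty at the formula level rather than the graph level, which is why it needs no new gadget. From $\varphi_1=\neg\varphi$ it builds a 3-CNF $\varphi_2(X,Y,A,z)$ that encodes \enquote{$z\Leftrightarrow\varphi_1(X,Y)$} using auxiliary variables $A$; since $\varphi_2$ is satisfiable under \emph{every} assignment of $(X,Y)$ (set $z$ and $A$ accordingly), \cref{obs:ham-cycle-NP-reduction} applied to $\varphi_2$ yields a Hamiltonian path with $H\cap B=F_\alpha$ for every $\alpha$, so the existence requirement is automatic. The marked edge is then the choice edge $e_z$ of the indicator variable (item (iii) applied with $z$ as the last variable), so a Hamiltonian path with $H\cap B=F_\alpha$ uses $\tilde e=e_z$ exactly when it encodes some $\beta$ with $\varphi(\alpha,\beta)=0$. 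To complete your proof you would either have to carry out and verify the escape-gadget construction in full (per variable and clause gadget, including a hybrid-path analysis), or switch to this indicator-variable trick.
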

\begin{proof}
    We have to show containment in $\Sigma^p_2$ as well as hardness. For the containment, consider an instance $(G, s, t, v, \tilde e, B)$ of {\MinMaxHam}, and let $\mathcal{H}$ denote the set of all Hamiltonian $s$-$t$-paths in $G$. Then the problem can be rewritten in the following way:
\begin{align*}
    \exists B' \subseteq B \ \ \exists H_1 \subseteq E \ \ \forall H_2 \subseteq E : \ &(H_1 \in \mathcal{H}) \land (H_1 \cap B =  B') \\
    &\land [((H_2 \in \mathcal{H})  \land (H_2 \cap B = B')) \implies \tilde e \not \in H_2]
\end{align*}
Since the statement on the right can be checked in polynomial time given $B', H_1, H_2$, we conclude that  {\MinMaxHam} is contained in the class $\Sigma^p_2$.

For the hardness proof, we need some preparation. Let an instance $\exists X \forall Y \varphi(X,Y)$ of {\QSAT} be given, 
where $\varphi$ is a Boolean formula in 3-DNF and $X = \fromto{x_1}{x_n}$ and $Y = \fromto{y_1}{y_n}$ are its variables. 
Note that we can w.l.o.g.\ assume that $\lvert X \rvert = \lvert Y \rvert$. 
We consider a sequence of transformations of $\varphi =: \varphi_0$. 
First, let 
\begin{align*}
    \varphi_1 := \neg \varphi.
\end{align*}
Note that $\varphi_1$ is in 3-CNF by De Morgan's laws. 
Hence, we can write it as $\varphi_1 = (p_{11} \lor p_{12} \lor p_{13}) \land \dots \land (p_{k1} \lor p_{12} \lor p_{k3})$,
where $p_{ij}$ is a literal over $X \cup Y$ for $i \in \fromto{1}{k}$ and $j \in \set{1,2,3}$. 
We introduce a new variable $z$. 
Our goal is to define a formula $\varphi_2$ with the intended meaning \enquote{$z \Leftrightarrow \varphi_1(X, Y)$}, i.e.\ the variable $z$ encodes the truth value of $\varphi_1$.
The definition of $\varphi_2$ in 3-CNF can be obtained in the following standard way:
We introduce $3k$ new auxiliary variables $A = \fromto{a_1}{a_k} \cup \fromto{a'_1}{a'_k} \cup \fromto{a''_1}{a''_k}$.
The variable $a_i$ has the intended meaning \enquote{clause~$i$ of $\varphi_1$ is satisfied}, and the variable $a'_i$ has the intended meaning \enquote{all the clauses $1, \dots, i$ of $\varphi_1$ are satisfied}.
(The meaning of the variable $a''_i$ is explained below.)
Now $\varphi_2$ is obtained from $\varphi_1$ in the following way:
\begin{itemize}
\item For each $i \in \fromto{1}{k}$, we add the five clauses $(\neg a_i \lor p_{i1} \lor a''_i), (\neg a''_i \lor p_{i2} \lor p_{i3}), (\neg p_{i1} \lor a_i), (\neg p_{i2} \lor a_i), (\neg p_{i3} \lor a_i)$.
  Note that it is possible to satisfy these five clauses with an appropriate choice of $a''_i$ if and only if the remaining variables meet the following condition:
    \begin{align*}
        a_i \Leftrightarrow ( p_{i1} \lor p_{i2} \lor p_{i3})
    \end{align*}
    \item For each $i \in \fromto{2}{k}$, we add the three clauses $(\neg a_i \lor \neg a'_{i-1} \lor a'_i), (\neg  a'_i \lor a_i), (\neg a'_i \lor a'_{i-1})$. Note that these three clauses are equivalent to the formula 
    \begin{align*}
    a'_i \Leftrightarrow (a_i \land a'_{i-1}).
    \end{align*}
    \item We add the two clauses $(z \lor \neg a'_k), (\neg z \lor a'_k)$. Note that these are equivalent to the formula $z \Leftrightarrow a'_k$.
\end{itemize}
This completes our description of the formula $\varphi_2$. Note that $\varphi_2 = \varphi_2(X,Y,A,z)$ is in 3-CNF and depends on the variables $X,Y,A,z$. 
Furthermore, it follows from the arguments above that an assignment of these variables satisfies $\varphi_2$ if and only if the assignment of $z$ is the same value as the evaluation of $\varphi_1(X,Y)$, and the auxiliary variables $A$ are set accordingly.

It remains to show hardness of {\MinMaxHam}. So let an instance $\exists X \forall Y \varphi(X,Y)$ of {\QSAT} be given. 
We define an instance $(G, s, t, v, \tilde e, B)$ of {\MinMaxHam} in the following way: We let $\varphi_2$ be the formula derived from $\varphi$ as described above.
For the graph $G$ and the vertices $s$ and $t$, we consider the transformation from \cref{obs:ham-cycle-NP-reduction} applied to the formula $\varphi_2$, that is $G := G_{\varphi_2}$.
The set $B$ is defined by $B := \bigcup_{x \in X} \set{e_x, e_{\overline x}}$, where $X = \fromto{x_1}{x_n}$ and where $e_{x_i}$ and $e_{\overline x_i}$ are defined like in \cref{obs:ham-cycle-NP-reduction}.
Note that $B$ contains the edges corresponding to variables in $X$, but not those edges corresponding to variables in $Y \cup A \cup \set{z}$.
We let vertex $v$ be the common endpoint of the edges $e_z$ and $e_{\overline z}$, which exists and has degree 3 by \cref{obs:ham-cycle-NP-reduction}~(iii).
Note that \cref{obs:ham-cycle-NP-reduction}~(iii) also implies that $B \cap \delta(v) = \emptyset$ is satisfied.
Finally, we let $\tilde e := e_z$. This completes the description of the instance $I = (G, s, t, v, \tilde e, B)$.

We claim that $I$ is a yes-instance of {\MinMaxHam} if and only if $\varphi$ is a yes-instance of {\QSAT}.

Indeed, for the \enquote{if} direction, assume that $\varphi$ is a yes-instance of {\QSAT}.
Hence, there is an assignment
$\alpha : X \to \set{0,1}$ such that for all assignments $\beta : Y \to \set{0,1}$ we have $\varphi(\alpha, \beta) = 1$. 
Let $B_\alpha \subseteq B$ be the subset of $B$ which corresponds to the assignment $\alpha$
(see \cref{obs:ham-cycle-NP-reduction}~(ii)), and consider $B' = B_\alpha$. 
We have to show that there exists at least one Hamiltonian $s$-$t$-path~$H$ in $G$ with $H \cap B = B_\alpha$ and that no such Hamiltonian path contains the edge $e_z$. 
First, we prove that at least one such path $H$ exists. 
Choose an arbitrary assignment $\beta : Y \to \set{0,1}$, 
and then choose an assignment of the auxiliary variables $A$ and of the variable $z$ accordingly so that all clauses of~$\varphi_2$ are trivially satisfied. This extends the partial assignment $\alpha$ to a satisfying assignment of~$\varphi_2$.
By \cref{obs:ham-cycle-NP-reduction} (ii), we conclude that there exists a Hamiltonian $s$-$t$-path~$H$ with $H \cap B = B_\alpha$.
Secondly, for every assignment $\beta : Y \to \set{0,1}$, we have that $\varphi(\alpha, \beta) = 1$, by the choice of $\alpha$. 
Every Hamiltonian $s$-$t$-path $H$ with $H \cap B = B_\alpha$ \enquote{satisfies} the formula~$\varphi_2$ (again by \cref{obs:ham-cycle-NP-reduction}), and $\varphi_2$ is equivalent to $z \Leftrightarrow \varphi_1(X,Y)$, which is equivalent to $z \Leftrightarrow \neg \varphi(X,Y)$. 
Therefore, we have that any Hamiltonian $s$-$t$-path $H$ with $H \cap B = B_\alpha$ always uses the edge $e_{\overline z}$ and never uses the edge $e_z$. In conclusion, we have shown that $I$ is a yes-instance of {\MinMaxHam}.

For the \enquote{only if} direction, assume that $\varphi$ is a no-instance. 
Hence, for all assignments $\alpha : X \to \set{0,1}$, there exists an assignment $\beta : Y \to \set{0,1}$ with $\varphi(\alpha, \beta) = 0$.
We claim that no set $B' \subseteq B$ has the properties desired in the definition of {\MinMaxHam}, and hence $I$ is a no-instance.
Indeed, recall that $B = \bigcup_{x \in X} \set{e_x, e_{\overline x}}$. Now, if $B'$ contains both edges $e_{x_i}$ and $e_{\overline x_i}$ for some $i \in \fromto{1}{n}$ or none of the two, then, by \cref{obs:ham-cycle-NP-reduction} (i), the set $B'$ cannot be extended to a Hamiltonian $s$-$t$-path in $G$. 
Hence, $B'$ does not meet the first requirement for a yes-solution of {\MinMaxHam}.
On the other hand, if $B'$ contains exactly one of the edges $e_{x_i}$ and $e_{\overline x_i}$ for all $i \in \fromto{1}{n}$, then $B'$ naturally encodes some assignment $\alpha : X \to \set{0,1}$.
By assumption, there is some assignment $\beta : Y \to \set{0,1}$ with $\varphi(\alpha, \beta) = 0$ and hence $\varphi_1(\alpha, \beta) = 1$. 
Since $\varphi_2$ is equivalent to $z \Leftrightarrow \varphi_1(X,Y)$, there is a way to extend the assignment $\alpha$ to a satisfying assignment of $\varphi_2$ such that variable $z$ is assigned \enquote{1}. By \cref{obs:ham-cycle-NP-reduction}~(ii), this shows that there is a Hamiltonian $s$-$t$-path $H$ with $B \cap H = B'$ and $e_z = \tilde e \in H$.
Hence, $B'$ does not meet the second requirement for a yes-solution of {\MinMaxHam}. This completes the argument and we have shown that $I$ is a no-instance of {\MinMaxHam}.

In conclusion, we have shown that $\varphi$ is a yes-instance of {\QSAT} if and only if $I$ is a yes-instance of {\MinMaxHam}. Thus, {\MinMaxHam} is $\Sigma^p_2$-complete.
\end{proof}

We can now combine our tools and prove that the {\BSP} problem is $\Sigma^p_2$-complete.

\begin{proof}[Proof of \cref{thm:BSP-strong-sigma-2-complete}]
    We have to show containment in $\Sigma^p_2$ as well as hardness. 
    For the containment, both in the directed and in the undirected case, consider an instance $I = (G, \El, \Ef, s,t,c,d)$ of one of the decision problems {\BSPstrongDir} and {\BSPstrongUndir}. 
    The question is, for a given threshold $T \in \R_{\geq 0}$, whether $\OPTs(I) \leq T$.
    This question can be rewritten as
    \begin{align*}
        \exists X \subseteq \El \ \ \exists Y \subseteq \Ef \ \ \forall Y' \subseteq \Ef : \ &(c(X \cup Y) \leq T) \land (X \cup Y \in \Pst) \\
        &\land \ [(X \cup Y' \in \Pst) \implies d(Y) \leq d(Y')].
    \end{align*}
    We claim that this expression is true if and only if $\OPTs(I) \leq T$, where the optimum is defined like in \cref{eq:BSP-strong}. 
    Indeed, this expression is true if and only if, for some leader's choice $X$,
    there exists at least one follower's solution~$Y$ that is feasible (i.e.\ $X \cup Y \in \Pst$) and also optimal for the follower's problem (i.e.\ it minimizes $d(Y)$)
    and satisfies $c(X \cup Y) \leq T$.
    Since we consider the optimistic setting,
    this is equivalent to $\OPTs(I) \leq T$. 
    This shows that {\MinMaxHam} is contained in the class $\Sigma^p_2$. 

    For the hardness, it suffices to show that the undirected case {\BSPstrongUndir} is $\Sigma^p_2$-hard. Due to \cref{lem:undir-special-case-of-dir}, this also proves that {\BSPstrongDir} is $\Sigma^p_2$-hard.
    We reduce from the $\Sigma^p_2$-complete problem {\MinMaxHam} (see \cref{lem:bilevel-ham-path}).

    Given an instance $I = (G, s,t, v, \tilde e, B)$ of {\MinMaxHam}, we create an instance $I'$ of {\BSPstrongUndir} in the following way:
    The graph $G = (V,E)$ as well as $s$ and $t$ remain the same. We let $\El := B$ and $\Ef := E \setminus B$. All edges $e \in E \setminus \delta(v)$ that are not incident to $v$ have leader's and follower's cost $c(e) = d(e) = 0$. 
    The three edges~$e_1$, $e_2$, and $\tilde e$ that are incident to $v$ have costs $c(e_1) = c(e_2) = 0$, $d(e_1) = d(e_2) = 1$, $c(\tilde e) = 1$, and $d(\tilde e) = 0$.
    Note that these three edges are follower's edges because $B \cap \delta(v) = \emptyset$ by the definition of {\MinMaxHam}.
    This completes the description of instance $I'$.
    We now create a new instance~$I''$ out of the instance $I'$ by applying the vertex fixing lemma (\cref{lem:lifting-lemma}) to instance $I'$ and the vertex set $W:= V$.
    We claim that the new instance $I''$ satisfies $\OPTs(I'') = 0$ if and only if $I$~is a yes-instance of {\MinMaxHam} (and $\OPTs(I'') \geq 1$ otherwise). 

    For the \enquote{if} direction, assume that $I$ is a yes-instance of {\MinMaxHam}. Then, in particular, there exists a Hamiltonian $s$-$t$-path in $G$. 
    Hence, case (i) of \cref{lem:lifting-lemma} does not apply and we must be in case (ii).  
    By the lemma, this means that $\OPTs(I'')$ is equal to the optimal value of the instance $I'$ in the modified version of {\BSPstrongUndir} where the leader's solution~$X$ and the follower's solution $Y$ have to form a Hamiltonian $s$-$t$-path (instead of any $s$-$t$-path) in~$G$.
    Since all three edges $e_1$, $e_2$, and $\tilde e$ incident to $v$ are follower's edges, 
    the follower therefore uses exactly two of these three edges, in any feasible solution.
    Since the follower's costs are equal to 0 on all other edges, the follower has total costs 2 if and only if they use both of the edges $e_1$ and~$e_2$, and total costs 1 if and only if they use the edge $\tilde e$. 
    In the first case, the leader has total costs 0, and in the second case, the leader has total costs 1.
    Since $I$ is a yes-instance of {\MinMaxHam}, there exists a set $B' \subseteq B = \El$ such that every Hamiltonian $s$-$t$-path $H$ with $H \cap \El = B'$ satisfies $\tilde e \not\in H$ and there exists at least one such a path. 
    This implies that, if the leader selects the solution $X = B'$, then the follower's problem is feasible and the leader's total costs are 0. Hence, $\OPTs(I'') = 0$.

    For the \enquote{only if} direction, assume that $\OPTs(I'') = 0$. This means that we are in case (ii) of \cref{lem:lifting-lemma} (since $M > 0$, where $M$ is defined as in the lemma). 
    Hence, $\OPTs(I'') = 0$ is equal to the optimal value of the instance $I'$ in the modified version of {\BSPstrongUndir} where the leader's solution~$X$ and the follower's solution $Y$ have to form a Hamiltonian $s$-$t$-path in~$G$.
    Therefore, there must exist a subset $X \subseteq \El$ such that the follower's problem is feasible, but the follower cannot complete $X$ to a Hamiltonian $s$-$t$-path $H$ that uses the edge $\tilde e$ (since such a path is always the cheapest for the follower, but always has nonzero leader's costs).
    Hence, by setting $B' := X$, we see that $I$ is a yes-instance of {\MinMaxHam}.
    This shows that $I$ is a yes-instance of {\MinMaxHam} if and only if $\OPTs(I'') = 0$, and completes the proof.
  \end{proof}

\subsection{Polynomial-time algorithms for directed acyclic graphs}
While the previous subsections contained many hardness results, we complement them with a positive result in this subsection.
We show that the {\BSP} problem with strong path completion can be solved in polynomial time in the special case of directed acyclic graphs.
This fact may be considered surprising because, by \cref{thm:BSP-strong-sigma-2-complete}, the same problem is $\Sigma^p_2$-complete on general directed graphs. 
Hence, the restriction to acyclic graphs causes a significant drop in complexity, by two stages in the polynomial hierarchy. 
Furthermore, in case of directed acyclic graphs, the weak path completion variant of the problem is NP-complete by \cref{thm:BSP-weak-NP-complete}. Hence, in this special case, the strong path completion variant is easier than the weak path completion variant, in contrast to the general setting.
The reason for this behavior is essentially that, in this special case, the space of solutions to the {\BSP} problem is well-structured and this can be exploited by using a dynamic programming approach.

In order to explain our solution strategy, we introduce some simple concepts.
Given an directed acyclic graph $G = (V,E)$ with $E = \El \cup \Ef$, and two distinct edges $e,e' \in E$, we write $e \prec e'$ if there is a directed path $P \subseteq \El \cup \Ef$ whose first edge is $e$ and whose last edge is $e'$.
It is easy to see that, for any two distinct edges $e,e' \in E$, exactly one of the following three is true: 
Either $e \prec e'$ or $e' \prec e$, or there is no directed path in $G$ from the head of $e$ to the tail of $e'$ or vice versa.
We denote the latter case by $e \parallel e'$. Since $G$ is acyclic, $e_1 \prec e_2$ and $e_2 \prec e_3$ imply $e_1 \prec e_3$.
Given a set $E' \subseteq E$, 
we say that $E'$ is a \emph{chain} if, for $k := \lvert E' \rvert$, the elements of $E'$ can be ordered such that $E'  = \fromto{e_1}{e_k}$ and $e_1 \prec e_2 \prec \dots \prec e_k$.
If $E'$ is not a chain, then there are $e, e' \in E'$ with $e \parallel e'$.
For two distinct edges $e = (u,v), e' = (u', v') \in E$ with $e \prec e'$, let 
$\mathcal{P}^f_{ee'} := \set{P : P \subseteq \Ef \text{ is a $v$-$u'$-path using only follower's edges}}$. 
Note that it is possible that $\mathcal{P}^f_{ee'} = \emptyset$ even though $e \prec e'$, since the definition of the relation~$\prec$ allows to use leader's and follower's edges on the path, but in $\mathcal{P}^f_{ee'}$, we are only allowed to use follower's edges.

For the remainder of this subsection, we make the following assumption about the given input instance $(G, \El, \Ef, s,t, c, d)$: 
We assume that the source vertex $s$ has only a single outgoing edge (which we denote by $e_s$) and the sink vertex $t$ has only a single incoming edge (which we denote by $e_t$). Hence, every $s$-$t$-path starts with edge $e_s$ and ends with edge $e_t$. Furthermore, we assume that $e_s,e_t \in \El$ and $c(e_s) = d(e_s) = c(e_t) = d(e_t) = 0$.
These assumptions will simplify the notation. Note that they can be made w.l.o.g., since a new source vertex and a new sink vertex together with the edges $e_s$ and $e_t$ can be added artificially if necessary. 

\begin{lemma}
\label{lem:BSP-strong-follower-acyclic}
    The follower's problem of {\BSPstrongDir} can be solved in polynomial time on directed acyclic graphs.
\end{lemma}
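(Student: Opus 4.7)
The plan is to reduce the follower's problem on a DAG to $|X|+1$ independent shortest path computations on the follower's subgraph, using the DAG structure to avoid any interaction between the pieces.

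First, I would observe the following structural fact: if $X \cup Y$ is a simple $s$-$t$-path in a DAG, then the edges of $X$ (together with $e_s$ and $e_t$) must form a chain in the partial order $\prec$. Indeed, the edges of the path inherit a natural linear order from walking along the path, and this order is consistent with $\prec$. Hence, as a first step of the algorithm, extend $X$ by $e_s$ and $e_t$, check whether the resulting set is a chain, and declare the instance infeasible if it is not. If it is a chain, order it as $e_s = e_0 \prec e_1 \prec \dots \prec e_k \prec e_{k+1} = e_t$; this is a polynomial-time check (use any topological ordering and test pairwise reachability, or sort and verify).

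Write $e_i = (u_i, v_i)$. The task is then to find, for each $i \in \{0,\dots,k\}$, a path $P_i$ from $v_i$ to $u_{i+1}$ using only follower's edges, such that the union $X \cup P_0 \cup \dots \cup P_k$ is a simple $s$-$t$-path and such that $\sum_i d(P_i)$ is minimum. The key observation is that in a DAG, the vertices lying on any $v_i$-$u_{i+1}$-path are strictly between $v_i$ and $u_{i+1}$ in any topological ordering; since the chain property forces these ``positional intervals'' to be disjoint across different $i$, any choices of $P_i$ are automatically pairwise internally vertex-disjoint and also avoid all other endpoints $u_j, v_j$ with $j \notin \{i, i+1\}$. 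Consequently, the global problem decouples into $k+1$ independent shortest path problems in the subgraph induced by $\Ef$, which can each be solved by Dijkstra's algorithm or a standard DAG shortest path routine with respect to $d$. If any of them is infeasible, the whole follower's problem is infeasible; otherwise their union with $X$ is the optimal solution.

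To handle the optimistic assumption, replace $d$ by $d''(e) := d(e) + \varepsilon c(e)$ for a sufficiently small $\varepsilon > 0$ before running the shortest path computations, exactly as in the proof of \cref{lem:weak-followers-problem}; this selects, among follower-optimal solutions, one of minimum leader cost. I expect the main (and only nontrivial) obstacle to be the vertex-disjointness argument linking ``$X$ is a chain'' with ``all reconnecting paths are automatically internally disjoint''; once this topological separation is established, the rest is standard polynomial-time shortest path machinery.
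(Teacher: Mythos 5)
Your proposal is correct and follows essentially the same route as the paper's proof: test whether $X$ (augmented by the source/sink edges) forms a chain under $\prec$, decompose the follower's task into independent shortest-path computations over follower-only edges between consecutive chain edges, and handle the optimistic tie-breaking via the perturbed costs $d + \varepsilon c$. Your explicit topological-interval argument for why the segment paths are automatically vertex-disjoint is exactly the justification behind the step the paper dismisses as ``easily seen.''
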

\begin{proof}
  Consider an instance $I = (G, \El, \Ef, s, t, c, d)$ of the {\BSP} problem with an directed acyclic graph $G = (V,E)$,
  and let a leader's solution $X \subseteq \El$ be given.  
  We distinguish two cases.
  If $X$ is not a chain, then there are distinct edges $e, e' \in X$ with $e \parallel e'$.
  However, in this case, the follower's problem is infeasible, since the strong path completion forces the follower to use both $e$ and $e'$ on a path, which is impossible.
  If $X$ is a chain, then we have $e_1 \prec e_2 \prec \dots \prec e_k$ for $X = \fromto{e_1}{e_k}$, where $k := \lvert X \rvert$.

  Observe that one can detect whether $X$ is a chain, and simultaneously determine the ordering $e_1, \dots, e_k$ in case $X$ is a chain, as follows:
  Compute a topological vertex ordering of the directed acyclic graph $G$.
  Let $e_1 = (u_1, v_1), \dots, e_k = (u_k, v_k)$ be the ordering of $X$ for which $u_1, \dots, u_k$ appear in the topological ordering in this order.
  Now, for every $i \in \fromto{1}{k-1}$, check whether there is a directed path from $v_i$ to $u_{i+1}$ in $G$.
  If this is the case for all $i \in \fromto{1}{k-1}$, then the order $e_1, \dots, e_k$ satisfies $e_1 \prec \dots \prec e_k$.
  Otherwise, for some $i \in \fromto{1}{k-1}$, there is neither a directed path from $v_i$ to the $u_{i+1}$ nor a directed path from $v_{i+1}$ to $u_i$ (because of the topological ordering). Therefore, $e_i \parallel e_{i+1}$ and $X$ is not a chain.

  From now on, assume that $X$ is a chain and that the corresponding ordering $e_1, \dots, e_k$ is known.
  Note that $X$ can only be feasible if $e_1 = e_s$ and $e_k = e_t$, by our assumption about the input instance.
  The follower has to traverse the leader's edges $e_1, \dots, e_k$ in this order.
  Between two leader's edges $e_i$ and $e_{i+1}$, the follower is only allowed to use follower's edges, i.e.\ they need to use a path from $\mathcal{P}^f_{e_ie_{i+1}}$. 
  If, for some $i \in \fromto{1}{k-1}$, we have $\mathcal{P}^f_{e_ie_{i+1}} = \emptyset$, then the follower's problem is infeasible. 
    
    In the case where the follower's problem is feasible, define $\alpha_i := \min \set{d(P) : P \in \mathcal{P}^f_{e_i e_{i+1}}}$ for all $i \in \fromto{1}{k-1}$. 
    It is easily seen that the value of the follower's problem is given by
    $\OptFOLs(I, X) = \sum_{i=1}^{k-1} \alpha_i$.
    (Note that $d(e_i) = 0$ for all $i \in \fromto{1}{k}$, as $e_i \in \El$.)
    Using a standard shortest path algorithm, 
    it can be decided in polynomial time whether $\mathcal{P}^f_{e_ie_{i+1}} = \emptyset$ and the values $\alpha_i$ (and the corresponding paths) can be computed.
    
    Finally note that, if there are multiple optimal paths in $\mathcal{P}^f_{e_i e_{i+1}}$ with the same costs $\alpha_i$, then, due to the optimistic assumption, 
    the follower chooses among them a path minimizing $c(P)$. 
    Such a path can be determined by finding a path which minimizes the objective $d'(P) = d(P) + \varepsilon c(P)$ for some small enough $\varepsilon$, for example $\varepsilon < (\min_{e \in E, d(e) \neq 0} d(e)) / (1 + \sum_{e \in E}c(e))$.
\end{proof}

\begin{theorem} \label{thm:BSP-strong-acyclic-P}
    The problem {\BSPstrongDir} can be solved in polynomial time on directed acyclic graphs.
\end{theorem}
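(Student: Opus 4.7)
My plan is to leverage \cref{lem:BSP-strong-follower-acyclic} together with its proof: any feasible leader's choice is precisely a chain $e_s = e_1 \prec e_2 \prec \dots \prec e_k = e_t$ of leader's edges with $\mathcal{P}^f_{e_i e_{i+1}} \neq \emptyset$ for every consecutive pair, and the follower's optimal cost for such a choice decomposes as the sum $\sum_{i=1}^{k-1} \alpha_i$ of segment costs that can be optimized independently. This structural decomposition will turn the leader's problem into a shortest-path problem in an auxiliary DAG whose vertices are the leader's edges.

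Concretely, the first step is to precompute, for every ordered pair $(e, e') \in \El^2$ with $e \prec e'$ and $\mathcal{P}^f_{ee'} \neq \emptyset$, the follower's optimistic segment response between them, namely the path from the head of $e$ to the tail of $e'$ in $(V, \Ef)$ that minimizes $d$ with $c$ as tiebreaker. Exactly as in the proof of \cref{lem:BSP-strong-follower-acyclic}, this can be obtained from a single shortest-path computation with weights $d(f) + \varepsilon c(f)$ for sufficiently small $\varepsilon > 0$. Denote the resulting leader's cost of that path by $\gamma(e, e')$. Since both $e \prec e'$ and the nonemptiness of $\mathcal{P}^f_{ee'}$ are decidable in polynomial time (by reachability in $(V, E)$ and in $(V, \Ef)$, respectively), this preprocessing requires only $O(\lvert \El \rvert^2)$ shortest-path computations.

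Next I build the auxiliary DAG $H$ with vertex set $\El$ and an arc from $e$ to $e'$ of weight $c(e') + \gamma(e, e')$ whenever $e \prec e'$ and $\mathcal{P}^f_{ee'} \neq \emptyset$. Since $\prec$ is transitive and antisymmetric by the acyclicity of $G$, $H$ is itself acyclic. Directed $e_s$-to-$e_t$ paths in $H$ correspond bijectively to feasible leader's chains $X \subseteq \El$ (with $e_s, e_t \in X$), and, using $c(e_s) = 0$, the length of such a path in $H$ equals precisely $c(X \cup Y)$, where $Y$ is the induced optimistic follower response assembled from the precomputed segment paths. Hence the leader's problem reduces to computing a shortest $e_s$-to-$e_t$ path in $H$, which is doable in time polynomial in $\lvert H \rvert$ by dynamic programming along a topological order.

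The main subtlety I expect is to justify that the segment-wise optimal follower paths actually compose into a \emph{simple} $s$-$t$-path in $G$---a fact tacitly used when summing $\sum \alpha_i$ across all segments in \cref{lem:BSP-strong-follower-acyclic}, and that my reduction inherits. But this is forced by the acyclicity of $G$: any shared vertex $w$ between two segments, or between a segment and a non-adjacent leader's edge on the chain, can be combined with the directed paths witnessing $\prec$ and with the intervening leader's edge(s) to yield a directed cycle in $G$, a contradiction. This confirms that the proposed construction is correct, gives a polynomial-time algorithm, and allows us to recover an optimal pair $(X, Y)$ from the shortest $H$-path together with the cached segment paths.
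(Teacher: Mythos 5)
Your proposal is correct and follows essentially the same route as the paper: precompute, for each pair $e \prec e'$ of leader's edges, the follower's optimistic segment response via a shortest-path computation with weights $d + \varepsilon c$, and then solve the leader's problem by a shortest-path/dynamic-programming computation over the leader's edges ordered by $\prec$ (the paper's recursion $S[\cdot]$ is exactly your auxiliary DAG $H$). If anything you are slightly more careful than the paper's write-up in two respects: your arc weights $c(e') + \gamma(e,e')$ explicitly charge the leader for their own edges, which the paper's displayed formula leaves implicit, and you spell out via acyclicity why the segment-wise optimal follower paths concatenate into a simple $s$-$t$-path, a point the paper treats as evident.
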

\begin{proof}
    Consider an instance $I = (G, \El, \Ef, s, t, c, d)$ of the {\BSP} problem with an directed acyclic graph $G = (V,E)$.
    Given two edges $e$ and $e'$ with $e \prec e'$, let $\alpha(e,e') := \min\set{d(P) : P \in \mathcal{P}^f_{ee'}}$ if $\mathcal{P}^f_{ee'} \neq \emptyset$ and 
    $\alpha(e,e') := \infty$ otherwise.
    Furthermore, let $\beta(e, e') := \min\set{c(P) : P \in \mathcal{P}^f_{ee'}, d(P) = \alpha(e,e')}$ if $\mathcal{P}^f_{ee'} \neq \emptyset$ and 
    $\beta(e,e') := \infty$ otherwise.
    These numbers can be interpreted as follows: The value $\alpha(e,e')$ is the follower's cost for going from $e$ to $e'$ (using only follower's edges) and $\beta(e,e')$ is the corresponding cost that is caused for the leader by this follower's subpath.
    If it is impossible to go from $e$ to $e'$ using only follower's edges, i.e.\ if $\mathcal{P}^f_{ee'} = \emptyset$, then both numbers are infinite.
    Note that both $\alpha(e,e')$ and $\beta(e,e')$ can be computed in polynomial time, for example by minimizing the expression $d(P) + \varepsilon c(P)$ over all paths $P \in \mathcal{P}^f_{e e'}$.
    
    Like in the proof of \cref{lem:BSP-strong-follower-acyclic}, feasible leader's solutions are chains $X = \fromto{e_1}{e_k}$ with $e_s = e_1 \prec e_2 \prec \dots \prec e_k = e_t$ and $\mathcal{P}^f_{e_i e_{i+1}} \neq \emptyset$ for all $i \in \fromto{1}{k-1}$, and optimal follower's solutions are given by connecting the edges in $X$ by follower's subpaths from the sets $\mathcal{P}^f_{e_i e_{i+1}}$ with costs $\alpha(e_i, e_{i+1})$.
    Accordingly, the leader's optimization problem can be written as
    \begin{align}
      \label{eq:BSP-strong-acyclic}
    \OPTs(I) = \min\set{ \ \sum_{i=1}^{k-1} \beta(e_i, e_{i+1}) : k \in \N, e_1,\dots, e_k \in \El, e_s = e_1 \prec e_2 \prec \dots \prec e_k = e_t}.
    \end{align}
    In the following, we derive a recursive formula for determining the optimal value.
    Let $m_\ell = \lvert \El \rvert$ and order the edge set $\El = \fromto{e_1}{e_{m_\ell}}$ such that we have $i < j$ for all pairs $e_i, e_j \in \El$ with $e_i \prec e_j$.
    Such an order always exists, and it can be derived from a topological vertex ordering in $G$,
    like in the proof of \cref{lem:BSP-strong-follower-acyclic}.
    Note that $e_s \prec e_t$ holds because we assume that the problem is feasible.
    Moreover, we may assume that, for all edges $e \in \El \setminus \set{e_s, e_t}$, we have $e_s \prec e \prec e_t$.
    Indeed, edges $e$ for which this is not satisfied can be neglected because they do not appear in any feasible leader's solution.
    Hence, we may assume that $e_1 = e_s$ and $e_{m_\ell} = e_t$.
    
    Now the following is a valid recursive definition that can be evaluated for all $e \in \El$ in the order $e_1, \dots, e_{m_\ell}$:
    \begin{align*}
        S[e_s] &= 0 \\
        S[e'] &= \min\set{S[e] + \beta(e,e') : e \in \El, e \neq e', e \prec e'} &\forall e' \in \El \setminus \set{e_s}
    \end{align*}
    It can be shown by induction that $\OPTs(I) = S[e_t]$, corresponding to the formulation~\cref{eq:BSP-strong-acyclic}.
    Furthermore, since all values $\beta(e,e')$, as well as the ordering $e_1,\dots,e_{m_\ell}$
    and the evaluation of the recursive formula, can be computed efficiently, we conclude that $\OPTs(I)$ can be computed in polynomial time.
    By keeping track of the selected $e \prec e'$ when evaluating $S[e']$, an optimal leader's solution can be constructed explicitly as well.
\end{proof}

\section{Case of few leader's edges}
\label{sec:few-leader-edges}

Since the previous sections revealed many intractability results, it is natural to consider restrictions of the {\BSP} problem that render it tractable. In this section, we consider the restriction where the input instance has only a constant amount of leader's edges, i.e.\ $\lvert \El \rvert = O(1)$.

We show that, in this case, the weak path completion variant ({\BSPweakUndir} and {\BSPweakDir}) can be solved in deterministic polynomial time. 
Furthermore, the strong path completion variant in undirected graphs ({\BSPstrongUndir}) can be solved in randomized polynomial time, assuming polynomially bounded leader's and follower's edge costs. (The case of superpolynomial edge costs remains an open problem.)
The latter result is obtained by showing that the problem is equivalent to a known problem, the so-called \kcycle{k} problem. 
The algorithm can be derandomized if and only if the \kcycle{k} problem can be solved in deterministic polynomial time.
Finally, in case of strong path completion and directed graphs ({\BSPstrongDir}), we show that the restriction $\lvert \El \rvert = O(1)$ does not help, by showing that the problem is NP-complete already for $\lvert \El \rvert = 1$.

We now present the results for {\BSPweakUndir}, {\BSPweakDir}, and {\BSPstrongDir}.
The connection between {\BSPstrongUndir} and the \kcycle{k} problem is then developed in \cref{sec:k-cycle}.

\begin{observation}
\label{obs:few-leader-edges-weak}
The problems {\BSPweakUndir} and {\BSPweakDir} can be solved in $O(2^{\lvert \El \rvert} \cdot \mathrm{poly}(\lvert V \rvert))$ time. In particular, if $\lvert \El \rvert = O(1)$, they can be solved in polynomial time.
\end{observation}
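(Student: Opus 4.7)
The plan is a straightforward brute-force enumeration over all possible leader's solutions, combined with the polynomial-time solvability of the follower's problem established in \cref{lem:weak-followers-problem}. Concretely, for each of the $2^{\lvert \El \rvert}$ subsets $X \subseteq \El$, I would compute the follower's optimal response $Y$ under the weak path completion, then evaluate the leader's cost $c(X \cup Y)$, and finally output the cheapest $X$.

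The core step is the evaluation of the follower's response for a fixed $X$. By \cref{lem:weak-followers-problem}, this reduces to a single shortest-path computation in $(V, E)$ under the modified cost function that assigns $d(e)$ to follower's edges, cost $0$ to edges in $X$, and cost $\infty$ to edges in $\El \setminus X$; moreover, the optimistic tie-breaking is handled by the small perturbation $d''(e) = d'(e) + \varepsilon c(e)$ as in the proof of \cref{lem:weak-followers-problem}. This runs in polynomial time via Dijkstra's algorithm. If the resulting shortest path has cost $\infty$, then $X$ makes the follower's problem infeasible and is therefore an infeasible leader's choice, so it is simply discarded. Otherwise, the returned path $P$ yields the follower's optimistic optimal response $Y := P \cap \Ef$, and we record $c(X \cup Y)$.

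Taking the minimum of $c(X \cup Y)$ over all $2^{\lvert \El \rvert}$ feasible choices of $X$ gives $\OPTw(I)$. By our assumption $\Pst \neq \emptyset$, at least one feasible $X$ exists (for instance, $X = \emptyset$), so the minimum is well-defined. The total running time is $O(2^{\lvert \El \rvert} \cdot \mathrm{poly}(\lvert V \rvert))$, which is polynomial whenever $\lvert \El \rvert = O(1)$. There is no real obstacle here: since both the directed and undirected cases are handled uniformly by \cref{lem:weak-followers-problem}, the same argument applies to both {\BSPweakUndir} and {\BSPweakDir}.
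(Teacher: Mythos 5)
Your proposal is correct and follows essentially the same route as the paper: enumerate all $2^{\lvert \El \rvert}$ leader's choices, use \cref{lem:weak-followers-problem} to compute each follower's (optimistic) response by a shortest-path computation, discard infeasible choices, and take the cheapest resulting $c(X \cup Y)$. The paper's proof is just a terser statement of this same argument.
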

\begin{proof}
    For both problems, the follower's problem can be solved in polynomial time by \cref{lem:weak-followers-problem}.
    Hence, the leader can enumerate all $2^{\lvert \El \rvert}$ choices of leader's solutions, predict the corresponding follower's responses, and choose the best among them.
  \end{proof}

\begin{theorem}
  \label{thm:few-leader-edges-strong-NP-complete}
    The problem {\BSPstrongDir} is NP-complete even if $\lvert \El \rvert = 1$.
\end{theorem}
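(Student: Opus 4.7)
The plan is to prove NP-completeness by (i) noting that $|\El|=1$ leaves the leader with only the two strategies $X=\emptyset$ and $X=\{e^\ell\}$, which together with the NP-membership of the follower's problem (\cref{thm:BSP-strong-follower-NP-complete}) gives containment in NP via nondeterministically guessing $X$ together with a follower's response and a short witness of optimality, and (ii) reducing from the directed 2-vertex-disjoint paths problem (2-VDP), shown NP-hard by Fortune, Hopcroft, and Wyllie. Given a 2-VDP instance $(H,s_1,t_1,s_2,t_2)$ with four distinct terminals, I would first delete the edges $(s_1,t_2)$ and $(t_1,s_2)$ from $H$ if present, since no 2-VDP solution can use either (using them would violate vertex-disjointness). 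I then build $G$ on $V(H)\cup\{s,t\}$ (with $s,t$ fresh vertices) by keeping every edge of $H$ as a follower's edge with $c=0,d=1$, adding follower's edges $(s,s_1)$ and $(t_2,t)$ with $c=0,d=1$, introducing the single leader's edge $e^\ell=(t_1,s_2)$ with $c=0,d=0$, and finally attaching a ``penalty'' direct follower's edge $(s,t)$ with $c=1,d=2$. The resulting instance $I$ uses $\El=\{e^\ell\}$, and I would show that $\OPTs(I)\le 0$ iff 2-VDP is a yes-instance.

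The analysis distinguishes the two leader's strategies. For $X=\emptyset$, the follower's cheapest option is the direct edge with $d=2$, since any detour $s\to s_1\rightsquigarrow t_2\to t$ must traverse at least two edges inside $H$ (because $(s_1,t_2)$ was deleted), incurring $d\ge 4$; the follower thus takes the direct edge and the leader pays $c=1>0$. For $X=\{e^\ell\}$, the follower must form a simple $s$-$t$ path containing $(t_1,s_2)$, which decomposes as $s\to s_1\rightsquigarrow t_1\to s_2\rightsquigarrow t_2\to t$ with vertex-disjoint $s_1\rightsquigarrow t_1$ and $s_2\rightsquigarrow t_2$ subpaths in $H$, i.e., exactly a 2-VDP witness. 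If 2-VDP is feasible, every such response gives leader's cost $0$; otherwise the strategy is infeasible. Hence $\OPTs(I)=0$ iff 2-VDP is a yes-instance, establishing NP-hardness.

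The main obstacle is precisely that $|\El|=1$ collapses the leader's decision to a single bit, so the bilevel structure nearly disappears and the NP-hard content must be encoded inside the follower's problem triggered by one specific choice of $X$. The reduction therefore has to engineer that this single bit encodes the NP-hard feasibility question while ensuring the complementary strategy $X=\emptyset$ is strictly worse. The penalty edge plays exactly this role: its small $d$ hijacks the follower in the $X=\emptyset$ case, producing a positive leader's cost, while its high $c$ guarantees that the leader can reach threshold $0$ only by activating $e^\ell$, which is feasible if and only if a 2-VDP solution exists.
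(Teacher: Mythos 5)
Your NP-hardness argument is essentially the paper's: it reduces from the Fortune--Hopcroft--Wyllie directed {\VDP} problem, uses the single leader's edge $(t_1,s_2)$ with zero costs, makes all original edges follower's edges that are free for the leader, and adds a source-to-sink shortcut that is cheap for the follower but costly for the leader, so that $\OPTs(I)=0$ if and only if the {\VDP} instance is a yes-instance; your cosmetic deviations (fresh terminals $s,t$ with connector edges, shortcut costs $d=2$ against detour cost at least $4$, instead of the paper's shortcut $(s_1,t_2)$ with $d=0$) are correctly analyzed and change nothing essential, and your explicit justification for deleting $(s_1,t_2)$ and $(t_1,s_2)$ matches the paper's w.l.o.g.\ assumption. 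The one caveat is the containment half: NP-membership of the follower's \emph{decision} problem (\cref{thm:BSP-strong-follower-NP-complete}) yields short certificates for statements of the form \enquote{there is a feasible response of follower's cost at most $D$}, not a \enquote{short witness of optimality} of a given response -- certifying optimality of an NP-hard minimization problem is a coNP-type task -- so that step does not follow as you state it; the paper itself disposes of containment equally tersely, by appealing to enumeration over the $2^{\lvert\El\rvert}=2$ leader choices as in \cref{obs:few-leader-edges-weak}, but you should not attribute the optimality certificate to \cref{thm:BSP-strong-follower-NP-complete}.
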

\begin{proof}
    By an enumeration argument similar to \cref{obs:few-leader-edges-weak}, the problem is contained in the class NP whenever $\lvert \El \rvert = O(1)$.
    For the NP-hardness proof, we consider the NP-complete {\VDP} problem \cite{fortune1980directed}. 
    In this problem, we are given a directed graph $G = (V, E)$ and four distinct vertices $s_1,t_1,s_2,t_2 \in V$. 
    The question is if there are an $s_1$-$t_1$-path~$P_1$ and an $s_2$-$t_2$-path $P_2$ in $G$ such that $P_1$ and $P_2$ are vertex-disjoint.
    
    Let an instance $(G, s_1,t_1,s_2,t_2)$ of the {\VDP} problem be given.
    We can w.l.o.g.\ assume that $G$ contains neither of the two edges $(s_1, t_2)$ and $(t_1, s_2)$. 
    We define an instance of {\BSPstrongDir} as follows. 
    All edges $e$ of $G$ become follower's edges with $c(e) = 0$ and $d(e) = 1$. 
    We also insert an additional follower's edge $e = (s_1, t_2)$ with leader's cost $c(e) = 1$ and follower's cost $d(e) = 0$.
    Finally, we insert a single leader's edge $e = (t_1, s_2)$ with leader's cost $c(e) = 0$ and follower's cost $d(e) = 0$.
    We define the source vertex as $s := s_1$ and the sink vertex as $t := t_2$.
    This completes the description of the instance $I$.
    
    We claim that $\OPTs(I) = 0$ if and only if there are an $s_1$-$t_1$-path and an $s_2$-$t_2$-path in $G$ that are vertex-disjoint.
    Indeed, since $\lvert \El \rvert = 1$, the leader has only two choices: 
    If the leader's solution is $X = \emptyset$, then the follower uses the edge $(s_1, t_2)$ to go directly from the source to the sink, since this is the only path of follower's cost 0. This causes cost 1 for the leader.
    If the leader's solution is $X = \set{(t_1, s_2)}$, there are two cases: 
    If there are no vertex-disjoint $s_1$-$t_1$- and $s_2$-$t_2$-paths in $G$, then the follower's problem is infeasible. Hence, this choice for the leader is also infeasible. 
    If there are disjoint paths, then the follower is forced to use the edge $(t_1, s_2)$ and connect to it using an $s_1$-$t_1$-path and an $s_2$-$t_2$-path that are vertex-disjoint. Hence, they cannot use the direct edge $(s_1, t_2)$, which is the only edge of nonzero leader's cost. Hence, the leader has cost 0 in this case.
    In summary, we have shown that $\OPTs(I) = 0$ if and only if the given instance of the {\VDP} problem is a yes-instance.
  \end{proof}

  Observe that the proof of \cref{thm:few-leader-edges-strong-NP-complete} implicitly contains an NP-hardness proof for the follower's problem of {\BSPstrongDir} as well.
  In fact, when considering the instance constructed in this proof and assuming that the leader chooses the solution $X = \El = \set{(t_1, s_2)}$, the task of deciding whether two vertex-disjoint paths exist is to be solved by the follower.
  More precisely, the follower's problem $\ProblemFOLs(I, \El)$ is feasible if and only if the given instance of the {\VDP} problem is a yes-instance.
  This gives the following result:

  \begin{corollary}
    The follower's problem of {\BSPstrongDir} is NP-complete even if $\lvert \El \rvert = 1$.
  \end{corollary}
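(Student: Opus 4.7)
The plan is to observe that the reduction already carried out in the proof of \cref{thm:few-leader-edges-strong-NP-complete} does the work for us; only a small reinterpretation is needed. Concretely, I would reuse the same construction that transforms an instance $(G,s_1,t_1,s_2,t_2)$ of {\VDP} into an instance $I$ of {\BSPstrongDir} with $\El = \set{(t_1,s_2)}$, and then fix the leader's choice to be $X := \El$. Given $X$, a feasible follower's response must be a set $Y \subseteq \Ef$ with $X \cup Y$ an $s_1$-$t_2$-path; since $X$ contains the edge $(t_1,s_2)$, such a path necessarily decomposes into an $s_1$-$t_1$-subpath and an $s_2$-$t_2$-subpath that together with $(t_1,s_2)$ form a simple path, hence are vertex-disjoint. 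Conversely, two vertex-disjoint such subpaths in $G$ combined with $(t_1,s_2)$ give a feasible $Y$. Therefore $\ProblemFOLs(I,\El)$ is feasible if and only if $(G,s_1,t_1,s_2,t_2)$ is a yes-instance of {\VDP}.

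For the decision problem, I would phrase the question as asking whether $\OptFOLs(I,X) \le T$ for a given threshold $T$, adopting the convention $\OptFOLs(I,X) = \infty$ when the follower's problem is infeasible. Choosing any finite $T$ (e.g.\ $T := \lvert E \rvert \cdot \max_{e \in E} d(e)$, which trivially upper-bounds any feasible follower's cost) turns the feasibility question above into a decision question, so {\VDP} reduces in polynomial time to the follower's decision problem of {\BSPstrongDir} restricted to $\lvert \El \rvert = 1$.

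For containment in NP, a certificate is simply a follower's solution $Y \subseteq \Ef$; one can check in polynomial time that $X \cup Y$ is an $s$-$t$-path and that $d(Y) \le T$. Optimality of $Y$ (i.e.\ the constraint $Y \in \argmin \dots$ in \cref{eq:BSP-strong}) need not be part of the certificate, because the decision version only asks for the existence of some feasible $Y$ of cost at most $T$; if such a $Y$ exists, then the optimum is also at most $T$, and conversely any optimum of value at most $T$ is itself such a certificate.

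I expect no real obstacle here: the main content is already contained in the proof of \cref{thm:few-leader-edges-strong-NP-complete}, and the only new observation is that the construction there implicitly reduces {\VDP} to the follower's problem alone, by taking the unique nonempty leader's choice. The statement then follows by combining this reduction with the straightforward NP-membership certificate.
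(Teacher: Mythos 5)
Your proposal is correct and follows essentially the same route as the paper: it reuses the construction from \cref{thm:few-leader-edges-strong-NP-complete}, fixes the leader's choice $X = \El = \set{(t_1,s_2)}$, and observes that feasibility of $\ProblemFOLs(I,\El)$ is equivalent to the {\VDP} instance being a yes-instance. The extra details you add (the finite threshold turning feasibility into a decision question, and the NP certificate being just a feasible $Y$ of cost at most $T$) are correct and merely make explicit what the paper leaves implicit.
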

  
  Note that this statement strengthens the NP-completeness result of \cref{thm:BSP-strong-follower-NP-complete}.
  However, the second part of \cref{thm:BSP-strong-follower-NP-complete} does not apply here.
  Indeed, under the assumption that the given leader's solution $X$ is optimal, informally speaking, the leader has already solved the NP-complete {\VDP} problem and the follower does not have to do so again.
  However, as explained before \cref{thm:BSP-strong-follower-NP-complete}, this assumption cannot be tested in polynomial time and therefore does not lead to a well-defined problem (in NP).

  \subsection[Equivalence to the shortest-k-cycle problem]{Equivalence to the shortest-$k$-cycle problem}
  \label{sec:k-cycle}

  In this subsection, we turn our attention to the problem variant {\BSPstrongUndir}.
  Interestingly, it turns out that this variant of our problem is equivalent to the so-called \kcycle{k} problem, whose complexity status is still not entirely understood by the research community.

\begin{quote}
    \textbf{Problem} \kcycle{k}

    \textbf{Input:} An undirected graph $G = (V, E)$, a vertex subset $K \subseteq V$ of size $\lvert K \rvert = k$, edge weights $w : E \to \R_{\geq 0}$, and a threshold $T \in \R_{\geq 0}$.
    
    \textbf{Question:} Is there a cycle $C$ that includes all vertices in $K$ and has weight $w(C) \leq T$?
\end{quote}

Note that we introduced the \kcycle{k} problem in its decision version (involving the threshold parameter $T$), since we also did so for {\BSPstrongUndir}. For $k = \Omega(n)$, it is easily seen that the traveling salesperson problem reduces to \kcycle{k}, and so the latter is NP-hard. However, for $k = O(1)$, the \kcycle{k} problem has a very interesting, partially unresolved complexity status.
\begin{itemize}
\item The \emph{feasibility} question, which simply asks whether a cycle $C$ including all vertices in~$K$ exists, can be solved in deterministic polynomial time $2^{O(k^c)}n^2$
  for some large constant~$c$~\cite{kawarabayashi2008improved}. However, a speedup to $2^kn^{O(1)}$ is possible using randomization \cite{bjorklund2012shortest}.
    \item The \emph{optimization} variant with \emph{polynomially bounded} integer weights has been shown to admit a polynomial-time randomized algorithm \cite{bjorklund2012shortest} if $k$ is constant. (Note that \cite{bjorklund2012shortest} deals only with unit weights, but by a standard edge subdivision process, this result can be lifted to polynomially bounded weights.) It is currently an open question whether a derandomization is possible.
    \item The most general question, solving the \kcycle{k} problem with arbitrary weights is neither known to be NP-hard, nor to admit a polynomial-time (even randomized) algorithm.
    \item In the two special cases where $k = 1$ or $k=2$,
      the \kcycle{k} problem in its most general form
      admits a deterministic polynomial-time algorithm \cite{bjorklund2012shortest}.
      We note that \cite{bjorklund2012shortest} claims that \cite{nealOpen} contains an argument for $k=3$ as well, but in our opinion \cite{nealOpen} does not support such a claim.
\end{itemize}

We now obtain the following equivalence result between the two stated problems. (We remark that we make the assumption $k' \geq 3$, but this is not a significant restriction because, for $k' \in \set{1, 2}$, a polynomial-time deterministic algorithm for \kcycle{k'} is known anyway.)

\begin{theorem}
    \label{thm:equivalence-k-cycle}
     Let $k,k' \in \N$, $k' \geq 3$. The problem {\BSPstrongUndir} with $\lvert \El \rvert = k$ is equivalent to the \kcycle{k'} problem in the precise following way:
    \begin{itemize}
        \item A given instance $I$ of {\BSPstrongUndir} with $\lvert \El \rvert = k$ can be solved in polynomial deterministic time and $O(2^k\log \lvert I \rvert )$ oracle calls to \kcycle{(k+1)}.
        \item A given instance of \kcycle{k'} can be solved in polynomial deterministic time and a single oracle call to {\BSPstrongUndir} with $\lvert \El \rvert = 2k' - 2$.
    \end{itemize}
    For both of the above reductions, if additionally the input instance has polynomially bounded integer weights, then the instances produced by the oracle calls have the same property.
  \end{theorem}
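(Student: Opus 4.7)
The plan is to prove the two directions separately, using complementary reductions between the problems.

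\textbf{First direction (\BSPstrongUndir{} to \kcycle{(k+1)}).} Since $\lvert \El \rvert = k$, there are only $2^k$ candidate leader's choices $X \subseteq \El$, so I would enumerate all of them and, for each, compute $\OptFOLs(I,X)$ together with the associated leader's cost under the optimistic tie-breaking rule. For a fixed $X$, the follower's task is to find an $s$-$t$-path in $G \setminus (\El \setminus X)$ that contains every edge of $X$ and minimises $d$, breaking ties by $c$. To reduce this to \kcycle{(k+1)}, I would first delete the forbidden edges $\El \setminus X$; then subdivide each edge $e \in X$ by a new vertex $v_e$, obtaining the set $K_0 := \set{v_e : e \in X}$ of size at most $k$; and finally introduce an auxiliary vertex $s^*$ adjacent only to $s$ and $t$ through two zero-weight edges. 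Taking $K := K_0 \cup \set{s^*}$ gives $\lvert K \rvert \le k+1$, and because $s^*$ has degree two, every simple cycle through $K$ decomposes into the two auxiliary edges plus a simple $s$-$t$-path through $K_0$. To encode optimistic tie-breaking in a single objective, I would use edge weights $w_{\text{mod}}(e) := N\,d(e) + c(e)$ with $N > \sum_{e \in E} c(e)$; then the minimum $w_{\text{mod}}$-value $w^\star$ encodes both $d^\star = \lfloor w^\star/N \rfloor$ and $c^\star = w^\star \bmod N$, so the leader's anticipated cost for this $X$ equals $c(X) + c^\star$. Solving the optimisation version via binary search on the threshold contributes the $\log\lvert I \rvert$ factor, for a total of $O(2^k \log\lvert I \rvert)$ decision oracle calls.

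\textbf{Second direction (\kcycle{k'} to \BSPstrongUndir{}).} Given an instance $(G,K,w,T)$ of \kcycle{k'} with $k'\ge 3$, I would first convert the cycle problem to an $s$-$t$-path problem through a specified vertex set by fixing an arbitrary $v^* \in K$ and splitting it into two copies $s$ and $t$, each inheriting all edges originally incident to $v^*$. Let $G'$ be the resulting graph and $W := K \setminus \set{v^*}$; then a simple cycle through $K$ in $G$ of weight at most $T$ corresponds exactly to a simple $s$-$t$-path through $W$ in $G'$ of weight at most $T$. Next, consider the trivial bilevel instance $I := (G', \emptyset, E(G'), s, t, c, d)$ with $c = d = w$, and apply the vertex fixing lemma (\cref{lem:lifting-lemma}) with the vertex set $W$. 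Since $I$ has no leader's edges, the bound $\lvert E'^\ell \rvert \le 2\lvert W \rvert + 4\lvert \El \rvert$ gives exactly $2(k'-1) = 2k'-2$ leader's edges in the modified instance $I''$, matching the theorem. By item (ii) of the lemma, $\OPTs(I'')$ (whenever below the threshold $M$ from the lemma) equals the shortest weight of an $s$-$t$-path through $W$ in $G'$ up to an additive $\varepsilon$, so a single call asking whether $\OPTs(I'') \le T + \varepsilon$ (with $\varepsilon$ chosen small enough) settles the decision version of the \kcycle{k'} instance.

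\textbf{Weight preservation and main obstacles.} In both directions, the auxiliary constants ($N$ in the first direction; $M$ and $\varepsilon$ in the second, as produced by the vertex fixing lemma) are polynomially bounded in the input size, so if the original weights are polynomially bounded integers then, after suitable scaling, the oracle instances also have polynomially bounded integer weights. The main obstacle in the first direction is faithfully simulating the optimistic tie-breaking with a single-objective oracle; the weighting trick $N\,d(e) + c(e)$ resolves this because $w^\star \bmod N$ uniquely recovers the leader's cost component. The main obstacle in the second direction is controlling the leader's-edge count in the reduction, which works out cleanly precisely because we start from an instance with $\El = \emptyset$, so that the $4\lvert \El \rvert$ term in the vertex fixing lemma vanishes and we obtain exactly $2k'-2$ leader's edges.
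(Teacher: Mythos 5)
Your proposal follows essentially the same route as the paper in both directions: enumerating the $2^k$ leader's choices, forcing the selected leader's edges and the $s$-$t$-connection through the prescribed set $K$ of a \kcycle{(k+1)} instance via a degree-two dummy vertex, collapsing the two cost functions into a single weight $N\,d(e) + c(e)$ with $N$ large, and binary searching over the threshold; and, conversely, splitting a vertex of $K$ into $s$ and $t$, taking $\El = \emptyset$ with $c = d = w$, and invoking the vertex fixing lemma, whose bound $2\lvert W \rvert + 4\lvert \El \rvert$ yields exactly $2k'-2$ leader's edges. The one slip is your reading of \cref{lem:lifting-lemma}(ii): the additive $\varepsilon$ appears only in the follower's optimal value $\OptFOLs(I', X')$, while the leader's optimum of the modified instance equals the optimum of the vertex-constrained problem \emph{exactly}; hence the single oracle query should be $\OPTs(I'') \le T$ rather than $\OPTs(I'') \le T + \varepsilon$ (the latter could wrongly accept a no-instance whose optimum lies in $(T, T+\varepsilon]$ when weights are not integral, and for arbitrary real weights there is no a priori \enquote{small enough} $\varepsilon$). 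With that threshold corrected, your argument matches the paper's proof, including the use of $k' \ge 3$ to rule out the degenerate two-edge path when contracting $s$ and $t$ back into a cycle and the choice $\varepsilon = 1$ (or your scaling) to preserve polynomially bounded integer weights.
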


In particular, \cref{thm:equivalence-k-cycle} has the following implications: For polynomially bounded edge weights, the problem {\BSPstrongUndir} with $\lvert \El \rvert = O(1)$ has a polynomial-time randomized algorithm, 
and this algorithm can be derandomized if and only if the \kcycle{k} problem can be solved in deterministic polynomial time (for $k = O(1)$ and polynomially bounded edge weights).
Furthermore, for general edge weights and $k,\lvert \El \rvert = O(1)$, {\BSPstrongUndir} has an efficient algorithm if and only if \kcycle{k} has one.
In particular, a deterministic polynomial-time algorithm for {\BSPstrongUndir} with $\lvert \El \rvert = 1$ and general edge weights results from the deterministic polynomial-time algorithm for the \kcycle{k} problem with $k = 2$.

\begin{proof}[Proof of \cref{thm:equivalence-k-cycle}]

    We first show the reduction {\BSPstrongUndir} $\to$ \kcycle{k}. 
    Given an instance $I = (G, \El, \Ef, s, t, c, d, T)$ of {\BSPstrongUndir} with $G = (V,E), \lvert \El \rvert = k$ and threshold $T \in \R_{\geq 0}$, we want to decide whether $\OPTs(I) \leq T$. This can be achieved in the following way: 
    We first insert a dummy vertex $v_d$ and edges $\set{v_d, s}$ and $\set{v_d, t}$ into the graph. 
    This means that any cycle that includes $v_d$ also traverses the vertices $s$ and $t$, using the two new edges.
    Now we iterate over all of the $2^k$ leader's choices $X \subseteq \El$, and we predict how much leader's cost the follower causes in response. This can be done as follows:
    We define the graph $G'$ with vertex set $V(G') :=  V \cup \set{v_d}$, and edge set $E(G') := \Ef \cup X \cup \set{\set{v_d, s}, \set{v_d, t}}$.
    We choose a large integer constant $M = 1 + \left\lceil \sum_{e\in E}c(e) / \min_{e \in E, d(e) \neq 0} d(e) \right\rceil$. We define edge weights on $G'$ as $w(e) = Md(e) + c(e)$
    for $e \in \Ef \cup X$ and $w(e) = 0$ for $e \in \set{\set{v_d, s}, \set{v_d, t}}$. We now consider the \kcycle{(k+1)} problem on $G'$ for the set $K := \set{v_d} \cup X$ of vertices and edges.\footnote{Note that, in the definition of the \kcycle{k} problem, the set $K \subseteq V$ is a set of vertices. However, we can w.l.o.g.\ assume that $K \subseteq V \cup E$. Indeed, for some edge $e \in E \cap K$, we can subdivide $e$ into two new edges $e_1$ and $e_2$, introducing a new vertex $v$ incident to them. We can then replace $e$ in $K$ by $v$ and modify the weights such that $w(e_1) + w(e_2) = w(e)$. Note that this does not change $\lvert K \rvert$.}
    We now invoke an oracle for solving this instance of the \kcycle{(k+1)} problem.
    Using a binary search with at most $O(\log \lvert I \rvert)$ iterations,
     we can find the minimal value~$W^\star$ of $w(C)$ over all cycles $C$ in $G'$ that include $K$. 
    Note that, due to the definition of $G'$, any such cycle consists of the dummy vertex~$v_d$, its incident edges $\set{t, v_d}$ and $\set{v_d, s}$, and an $s$-$t$-path that traverses all of $X$, but no edge of $\El \setminus X$. 
    In fact, it can be seen that there is a one-to-one correspondence between such cycles and feasible solutions to the follower's problem $\ProblemFOLs(I, X)$.
    Moreover, due to the choice of $M$, we have $W^\star = M \OptFOLs(I) + c^\star$, where $c^\star$ is the leader's cost caused by the optimal follower's solution. 
    Hence, we can determine the latter as $c^\star = (W^\star \bmod M)$. By iterating over all of the $2^k$ choices of $X \subseteq \El$, we can find the minimal possible value of $c(X) + c^\star$. This gives the optimal leader's cost and thus solves the problem {\BSPstrongUndir}. Finally note that, if the input instance has only polynomially bounded integer costs $c,d$, the new instance also has this property by our definition of the weights $w$.

    We now proceed with the other direction \kcycle{k} $\to$ {\BSPstrongUndir}. This reduction basically follows from the vertex fixing lemma (\cref{lem:lifting-lemma}). 
    Given an instance $(G, K, w, T)$ of \kcycle{k'}, with a graph $G = (V, E)$, a subset $K \subseteq V$ with $\lvert K \rvert = k'$, weights $w : E \to \R_{\geq 0}$, and a threshold~$T \in \R_{\geq 0}$, the question is to decide whether there is a cycle $C$ that includes $K$ and has weight $w(C) \leq T$. 
    We modify the graph $G$ in the following way: We choose an arbitrary vertex $v \in K$. 
    We create a twin $v'$ of $v$, i.e.\ we introduce a new vertex $v'$ that has the same neighborhood as $v$ (but $v$ and $v'$ are not connected). Let $G'$ be the resulting graph.
    We define edge costs $c$ in $G'$ by $c(e) := w(e)$ for $e \in E(G) \cap E(G')$, i.e.\ for all edges that are not incident to the new vertex $v'$, and by letting $c(\set{v', u}) := w(\set{v, u})$ for the remaining edges $\set{v', u}$.
    Moreover, we define $s := v$ and $t := v'$. 
    Let $W := K \setminus \set{v}$.
    We claim that cycles $C$ in the old graph $G$ that include $K$ correspond exactly to  $s$-$t$-paths $P$ in the new graph $G'$ that include $W$. 
    Indeed, any such cycle $C$ in the old graph can be transformed into a corresponding path $P$ by introducing the twin vertices again.
    Any such path $P$ can be transformed into a cycle~$C$ by contracting $v$ and $v'$ into a single vertex. 
    Note that, in the special case $\lvert E(P) \rvert = 2$, this contraction does not create a simple cycle. 
    However, the case $\lvert E(P) \rvert = 2$ is excluded due to $k' \geq 3$, which implies $\lvert W \rvert \geq 2$.
    Furthermore, the weight $w(C)$ of such a cycle is equal to $c(P)$ for the corresponding path.
    We turn $G'$ into an instance $I = (G', \El, \Ef s,t, c, d)$ of {\BSPstrongUndir} by letting $\El = \emptyset$, $\Ef = E(G')$, $s,t,c$ as defined above, and $d := c$.
    Finally, we apply the vertex fixing lemma (\cref{lem:lifting-lemma}) to the instance $I$ and the vertex set $W$ in order to obtain a new instance $I'$.
    Then there are the following two cases, as given in the vertex fixing lemma.
    In the first case,
    we have $\OPTs(I') \geq M$ for some large number $M$.
    This means that there is no $s$-$t$-path in $G'$ that includes $W$ and, accordingly, also
    the given instance of the \kcycle{k} problem is infeasible.
    In the second case, we have  $\OPTs(I') < M$ and, by the vertex fixing lemma, the value $\OPTs(I')$ is equal to the optimal value of problem~(\ref{eq:BSP-modified}). Due to $\El = \emptyset$ and $c = d$, this means that
    \begin{align*}
        \OPTs(I') &= \min \set{c(P) : P \text{ is an $s$-$t$-path in $G'$ that includes $W$}}\\
                &= \min \set{w(C) : C \text{ is a cycle in $G$ that includes $K$}}.
    \end{align*}
    Therefore, we can solve the \kcycle{k'} problem with a single oracle call to {\BSPstrongUndir},
    in which we use the same threshold~$T$ as the desired one for the given \kcycle{k'} instance.
    Finally, we remark that, in the instance $I'$, we have at most $2\lvert W \rvert + 4 \lvert \El \rvert = 2k' - 2$
    leader's edges due to \cref{lem:lifting-lemma}.
    Furthermore, it follows from the construction presented in \cref{lem:lifting-lemma}
    that, if the weights $w$ are integral and polynomially bounded,
    so are the costs in the new instance $I'$.
    (Note that the parameter $\varepsilon$ in the proof of \cref{lem:lifting-lemma} can be set as $\varepsilon = 1$ here.)
\end{proof}

\section{Inapproximability}
\label{sec:inapproximability}
In this section, we strengthen the previous hardness results obtained in \cref{thm:BSP-strong-follower-NP-complete,thm:BSP-strong-sigma-2-complete,thm:few-leader-edges-strong-NP-complete} 
by showing that all these problems are not only NP-hard ($\Sigma^p_2$-hard), but they are even inapproximable.
By \emph{inapproximable}, we mean that there is no polynomial-time approximation algorithm with a subexponential approximation factor, unless $\mathrm{P} = \mathrm{NP}$.

\begin{theorem}
  The following problems are all inapproximable:
  \begin{itemize}
  \item the strong path completion variant of the {\BSP} problem, both in directed and in undirected graphs (\cref{thm:BSP-strong-sigma-2-complete})
  \item the follower's problem of the strong path completion variant of the {\BSP} problem, both in directed and in undirected graphs (\cref{thm:BSP-strong-follower-NP-complete})
  \end{itemize}
  For the leader's problem in directed graphs, this holds even in the special case $\lvert \El \rvert = 1$ (\cref{thm:few-leader-edges-strong-NP-complete}).
\end{theorem}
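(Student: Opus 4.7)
The plan is to revisit the three reductions cited in the theorem statement and verify that each of them already creates (or can easily be modified to create) a gap between yes- and no-instances large enough to rule out any polynomial-time approximation with a subexponential factor. The key general observation I would use is the following: for a minimization problem, an $\alpha$-approximation algorithm must return a feasible solution of value at most $\alpha \cdot \mathrm{OPT}$; in particular, whenever $\mathrm{OPT} = 0$ on yes-instances and $\mathrm{OPT} \geq 1$ on no-instances, an $\alpha$-approximation for \emph{any} finite $\alpha$ would already decide the underlying problem. Hence a $0$-vs-$\geq 1$ gap immediately yields full inapproximability.

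For the leader's problem in the strong variant (\cref{thm:BSP-strong-sigma-2-complete}) and for the special case $\lvert \El \rvert = 1$ in directed graphs (\cref{thm:few-leader-edges-strong-NP-complete}), I would simply inspect the constructions and observe that they already produce such a $0$-vs-$\geq 1$ gap. Indeed, in the proof of \cref{thm:BSP-strong-sigma-2-complete}, the instance $I''$ satisfies $\OPTs(I'') = 0$ in the yes-case and $\OPTs(I'') \geq 1$ in the no-case, since the only edge incident to $v$ with nonzero leader cost is $\tilde e$ with $c(\tilde e) = 1$. Similarly, in the proof of \cref{thm:few-leader-edges-strong-NP-complete}, $\OPTs(I) = 0$ for yes-instances of {\VDP} and $\OPTs(I) = 1$ for no-instances. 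By the general observation above, no polynomial-time approximation algorithm with any finite ratio can exist for these problems, unless $\mathrm{NP} = \Sigma^p_2$ (respectively $\mathrm{P} = \mathrm{NP}$).

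For the follower's problem (\cref{thm:BSP-strong-follower-NP-complete}), the reduction as stated produces only a gap of $\varepsilon$ versus $1 + \varepsilon$, so a small modification is needed to amplify it. I would scale up the cost associated with edges not present in the original {\HamPath} instance: in the construction from the proof, set $d(e) := W$ for every $e \in E(G') \setminus E(G)$, where $W := 2^{\lvert V \rvert}$, and keep $d(e) := 0$ for $e \in E(G) \cap E(G')$. In the yes-case, a Hamiltonian $s$-$t$-path of $G$ still yields a follower cost of $\varepsilon$ (coming only from the edge $\{t, t'\}$ added by the vertex fixing lemma), while in the no-case every Hamiltonian $s$-$t$-path of $G'$ must use at least one edge in $E(G') \setminus E(G)$, so the follower cost is at least $W + \varepsilon$. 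Fixing $\varepsilon := 1$, the multiplicative gap is of order $2^{\lvert V \rvert}$, so any $2^{\lvert V \rvert^{1-\delta}}$-approximation (for fixed $\delta > 0$) would decide {\HamPath}. I do not expect a significant obstacle here: the vertex fixing lemma (\cref{lem:lifting-lemma}) does not depend on the magnitudes of the input costs, so its internal constants (in particular the threshold $M$) can simply be reselected large enough relative to the scaled $d$-values, still within polynomial encoding length.
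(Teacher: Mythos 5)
Your proposal is correct and follows essentially the same route as the paper: for the two leader's problems it uses exactly the $0$-versus-$\geq 1$ gap already present in the cited reductions, and for the follower's problem it creates an exponential multiplicative gap by cost scaling. The only (equivalent) difference is that the paper keeps the original construction and lets $\varepsilon$ be exponentially small, whereas you fix $\varepsilon = 1$ and blow up the follower's costs of the edges outside $G$ to $2^{\lvert V \rvert}$ — both work, since the construction of \cref{lem:lifting-lemma} is insensitive to the magnitudes of the follower's costs and the scaled weights still have polynomial encoding length.
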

\begin{proof}
    In each of the \cref{thm:BSP-strong-sigma-2-complete,thm:few-leader-edges-strong-NP-complete},
    the instances that result from the hardness proof have the property that their optimal value is either at least 1 or equal to 0.
    Therefore, it is NP-hard to distinguish between such instances of optimal value 0 and instances of optimal value at least~1, which implies that the problems cannot be approximated at all.
    Note that, in the case of \cref{thm:BSP-strong-sigma-2-complete}, this distinction is even $\Sigma^p_2$-hard, which is at least as hard as NP-hard.

    In \cref{thm:BSP-strong-follower-NP-complete}, the situation is very similar. Here, the instance constructed in the hardness proof has the property that its optimal value is either $\varepsilon$ or at least $1 + \varepsilon$, where $\varepsilon > 0$ is an arbitrarily small constant (i.e.\ exponentially small if $\varepsilon$ is encoded in the input).
    This implies that an approximation algorithm whose approximation factor is subexponential in the input size is impossible, unless $\mathrm{P} = \mathrm{NP}$.
\end{proof}

Finally, for the weak path completion variant of the {\BSP} problem (which has been shown to be NP-hard in \cref{thm:BSP-weak-NP-complete}), we have no inapproximability conclusion, since, in this case, our reduction does not result in instances with optimal values 0 and at least 1. It remains an open question whether an approximation of any kind is possible for the weak path completion variant of the {\BSP} problem.

\section{Conclusion}
\label{sec:conclusion}

We introduced the partitioned-items {\BSP} problem, where some edges of a graph are controlled by the leader and the others by the follower. We considered the leader's problem and the follower's problem, each in the strong and in the weak path completion variant, and each restricted to undirected, directed, or directed acyclic graphs. 
We completely characterized each of the corresponding (decision) problems as either solvable in polynomial time, NP-complete, or even $\Sigma^p_2$-complete. 
In particular, we proved that the (undirected or directed) {\BSP} problem in the strong path completion variant is $\Sigma^p_2$-complete, based on a newly introduced min-max version of the {\HamPath} problem.
All of our hardness results for the strong path completion variant also imply inapproximability results. In case of the weak path completion variant, it remains an open question whether efficient approximation is possible.

On the positive side, using a dynamic programming approach, it is possible to solve the {\BSP} problem in the strong path completion variant in polynomial time, when restricted to directed acyclic graphs.
Furthermore, in the special case of $\lvert \El \rvert = O(1)$ and polynomially bounded edge weights, 
we showed that an equivalence to the \kcycle{k} problem can be utilized to solve the undirected {\BSP} problem with strong path completion in randomized polynomial time.
It remains an open question whether a deterministic polynomial-time algorithm exists in this case. It is also an open question what happens in the same setting, 
but when arbitrary edge weights are permitted. By our equivalence result, 
this question is equivalent to the analogous one about the \kcycle{k} problem.
One could also study our problem for fixed small numbers $\lvert \El \rvert \geq 2$,
which could help understanding the \kcycle{k} problem for fixed small $k$.
Finally, we remark that our equivalence of the \kcycle{k} problem and our problem for $k, \lvert \El \rvert = O(1)$ (\cref{thm:equivalence-k-cycle}) introduces an additional factor in both directions (one direction performs $2^k$ oracle calls, 
the other roughly doubles the parameter $k$).
It is unknown whether a more efficient reduction exists, which would show the equivalence between these two problems in an even stricter fashion. 

\printbibliography

@article{stockmeyer1976polynomial,
  title={The polynomial-time hierarchy},
  author={Stockmeyer, Larry J.},
  journal={Theoretical Computer Science},
  volume=3,
  number=1,
  pages={1--22},
  year=1976,
    doi = {10.1016/0304-3975(76)90061-X},
}

@article{wrathall1976,
  author = {Wrathall, Celia},
  title = {Complete sets and the polynomial-time hierarchy},
  journal = {Theoretical Computer Science},
  year = 1976,
  volume = 3,
  number = 1,
  pages = {23--33},
  doi = {10.1016/0304-3975(76)90062-1},
}

@article{GoerigkLW24,
  author       = {Marc Goerigk and
                  Stefan Lendl and
                  Lasse Wulf},
  title        = {On the complexity of robust multi-stage problems with discrete recourse},
  journal      = {Discrete Applied Mathematics},
  volume       = {343},
  pages        = {355--370},
  year         = {2024},
  doi          = {10.1016/J.DAM.2023.10.018},
}

@misc{GrueneWulf2024,
  author = {Grüne, Christoph and Wulf, Lasse},
  title = {Completeness in the polynomial hierarchy for many natural problems in bilevel and robust optimization},
  year = 2024,
  eprinttype = {arXiv},
  eprintclass = {cs.CC},
  eprint = {2311.10540},
}

@misc{jackiewicz2024computational,
  title={Computational complexity of the recoverable robust shortest path problem with discrete recourse},
  author={Jackiewicz, Marcel and Kasperski, Adam and Zieli{\'n}ski, Pawe{\l}},
  year=2024,
  eprinttype = {arXiv},
  eprintclass = {cs.CC},
  eprint = {2403.20000},
}

@article{garey1976planar,
  title={The planar {Hamiltonian} circuit problem is {NP}-complete},
  author={Garey, Michael R. and Johnson, David S. and Tarjan, R. Endre},
  journal={SIAM Journal on Computing},
  volume={5},
  number={4},
  pages={704--714},
  year={1976},
  doi={10.1016/0304-3975(80)90009-2},
}

@article{dijkstra,
author = {Dijkstra, Edsger Wybe},
title = {A note on two problems in connexion with graphs},
journal = {Numerische Mathematik},
volume = {1},
number = {1},
year = {1959},
pages = {269--271},
doi = {10.1007/BF01386390},
}

@book{garey1979computers,
  title={Computers and intractability},
  author={Garey, Michael R. and Johnson, David S.},
  year={1979},
  publisher={W. H. Freeman and Company}
}

@article{fortune1980directed,
  title={The directed subgraph homeomorphism problem},
  author={Fortune, Steven and Hopcroft, John and Wyllie, James},
  journal={Theoretical Computer Science},
  volume={10},
  number={2},
  pages={111--121},
  year={1980},
  doi={10.1016/0304-3975(80)90009-2},
}

@article{ColsonMarcotteSavard2007,
  author = {Colson, Benoît and Marcotte, Patrice and Savard, Gilles},
  title = {An overview of bilevel optimization},
  journal = {Annals of Operations Research},
  volume = 153,
  number = 1,
  year = 2007,
  pages = {235--256},
  doi = {10.1007/s10479-007-0176-2},
}

@book{DempeEtAl2015,
  author = {Dempe, Stephan and Kalashnikov, Vyacheslav and Pérez-Valdés, Gerardo A. and Kalashnykova, Nataliya},
  title = {Bilevel programming problems},
  subtitle = {Theory, Algorithms and Applications to Energy Networks},
  publisher = {Springer},
  location = {Berlin, Heidelberg},
  year = 2015,
  doi = {10.1007/978-3-662-45827-3},
}

@incollection{Dempe2020,
  author = {Dempe, Stephan},
  title = {Bilevel Optimization: {T}heory, Algorithms, Applications and a Bibliography},
  editor = {Dempe, Stephan and Zemkoho, Alain},
  booktitle = {Bilevel Optimization},
  booksubtitle = {Advances and Next Challenges},
  publisher = {Springer},
  location = {Cham},
  year = 2020,
  series = {Springer Optimization and Its Applications},
  number = 161,
  pages = {581--672},
  doi = {10.1007/978-3-030-52119-6_20},
}

@misc{BeckSchmidt2021,
  author = {Beck, Yasmine and Schmidt, Martin},
  title = {A Gentle and Incomplete Introduction to Bilevel Optimization},
  year = 2021,
  howpublished = {\url{https://optimization-online.org/?p=17182}},
}

@article{Jeroslow1985,
  author = {Jeroslow, Robert G.},
  title = {The polynomial hierarchy and a simple model for competitive analysis},
  journal = {Mathematical Programming},
  year = 1985,
  volume = 32,
  number = 2,
  pages = {146--164},
  doi = {10.1007/BF01586088},
}

@article{HansenJaumardSavard1992,
  author = {Hansen, Pierre and Jaumard, Brigitte and Savard, Gilles},
  title = {New branch-and-bound rules for linear bilevel programming},
  journal = {SIAM Journal on Scientific and Statistical Computing},
  volume = 13,
  number = 5,
  year = 1992,
  pages = {1194--1217},
  doi = {10.1137/0913069},
}

@article{Woeginger2021,
  author = {Woeginger, Gerhard J.},
  title = {The trouble with the second quantifier},
  journal = {4OR -- A Quarterly Journal of Operations Research},
  year = 2021,
  volume = 19,
  number = 2,
  pages = {157--181},
  doi = {10.1007/s10288-021-00477-y},
}

@article{LabbeMarcotteSavard1998,
  author = {Labbé, Martine and Marcotte, Patrice and Savard, Gilles},
  title = {A bilevel model of taxation and its application to optimal highway pricing},
  journal = {Management Science},
  year = 1998,
  volume = 44,
  number = {12-part-1},
  pages = {1608--1622},
  doi = {10.1287/mnsc.44.12.1608},
}

@incollection{MarcotteSavard2005,
  author = {Marcotte, Patrice and Savard, Gilles},
  title = {Bilevel programming: {A} combinatorial perspective},
  editor = {Avis, David and Hertz, Alain and Marcotte, Odile},
  booktitle = {Graph Theory and Combinatorial Optimization},
  publisher = {Springer},
  location = {Boston},
  year = 2005,
  pages = {191--217},
  doi = {10.1007/0-387-25592-3_7},
}

@article{vanHoesel2008,
  author = {van Hoesel, Stan},
  title = {An overview of {S}tackelberg pricing in networks},
  journal = {European Journal of Operational Research},
  year = 2008,
  volume = 189,
  number = 3,
  pages = {1393--1402},
  doi = {10.1016/j.ejor.2006.08.064},
}

@article{LabbeViolin2013,
  author = {Labbé, Martine and Violin, Alessia},
  title = {Bilevel programming and price setting problems},
  journal = {4OR -- A Quarterly Journal of Operations Research},
  year = 2013,
  volume = 11,
  number = 1,
  pages = {1--30},
  doi = {10.1007/s10479-015-2016-0},
}

@article{AhujaOrlin2001,
  author = {Ahuja, Ravindra K. and Orlin, James B.},
  title = {Inverse optimization},
  journal = {Operations Research},
  year = 2001,
  volume = 49,
  number = 5,
  pages = {771--783},
  doi = {10.1287/opre.49.5.771.10607},
}

@misc{LeyMerkert2024,
  author = {Ley, Eva and Merkert, Maximilian},
  title = {Solution methods for partial inverse combinatorial optimization problems in which weights can only be increased},
  howpublished = {\url{https://optimization-online.org/?p=25609}},
  year = {2024},
}

@article{BallGoldenVohra1989,
  author = {Ball, Michael O. and Golden, Bruce L. and Vohra, Rakesh V.},
  title = {Finding the most vital arcs in a network},
  journal = {Operations Research Letters},
  year = 1989,
  volume = 8,
  number = 2,
  pages = {73--76},
  doi = {10.1016/0167-6377(89)90003-5},
}

@article{MalikMittalGupta1989,
  author = {Malik, Kavindra and Mittal, Ashok K. and Gupta, Santosh K.},
  title = {The $k$ most vital arcs in the shortest path problem},
  journal = {Operations Research Letters},
  year = 1989,
  volume = 8,
  number = 4,
  pages = {223--227},
  doi = {10.1016/0167-6377(89)90065-5},
}

@inproceedings{kawarabayashi2008improved,
  title={An improved algorithm for finding cycles through elements},
  author={Kawarabayashi, Ken-ichi},
  booktitle={Integer Programming and Combinatorial Optimization},
  pages={374--384},
  year={2008},
  organization={Springer},
  doi={10.1007/978-3-540-68891-4_26},
}

@article{IsraeliWood2002,
  author = {Israeli, Eitan and Wood, R. Kevin},
  title = {Shortest-path network interdiction},
  journal = {Networks},
  year = 2002,
  volume = 40,
  number = 2,
  pages = {97--111},
  doi = {10.1002/net.10039},
}

@misc{Henke2024,
  author = {Henke, Dorothee},
  title = {The robust bilevel selection problem},
  year = 2024,
  eprinttype = {arXiv},
  eprintclass = {math.OC},
  eprint = {2401.03951},
}

@article{MansiAlvesCarvalhoHanafi2012,
  author = {Mansi, Raïd and Alves, Cláudio and Carvalho, José Manuel Valério de and Hanafi, Saïd},
  title = {An exact algorithm for bilevel 0-1 knapsack problems},
  journal = {Mathematical Problems in Engineering},
  year = 2012,
  eid = 504713,
  doi = {10.1155/2012/504713},
}

@article{CapraraCarvalhoLodiWoeginger2014,
  author = {Caprara, Alberto and Carvalho, Margarida and Lodi, Andrea and Woeginger, Gerhard J.},
  title = {A study on the computational complexity of the bilevel knapsack problem},
  journal = {SIAM Journal on Optimization},
  year = 2014,
  volume = 24,
  number = 2,
  pages = {823--838},
  doi = {10.1137/130906593},
}

@article{BuchheimHenkeHommelsheim2022,
  author = {Buchheim, Christoph and Henke, Dorothee and Hommelsheim, Felix},
  title = {On the complexity of the bilevel minimum spanning tree problem},
  journal = {Networks},
  year = 2022,
  volume = 80,
  number = 3,
  pages = {338--355},
  doi = {10.1002/net.22111},
}

@mastersthesis{Gassner2002,
  author = {Gassner, Elisabeth},
  title = {Maximal spannende {B}aumprobleme mit einer {H}ierarchie von zwei {E}ntscheidungsträgern},
  school = {Graz University of Technology},
  type = {Diploma thesis},
  year = 2002,
}

@article{ShiZengProkopyev2019,
  author = {Shi, Xueyu and Zeng, Bo and Prokopyev, Oleg A.},
  title = {On bilevel minimum and bottleneck spanning tree problems},
  journal = {Networks},
  year = 2019,
  volume = 74,
  number = 3,
  pages = {251--273},
  doi = {10.1002/net.21881},
}

@article{ShiProkopyevRalphs2023,
  author = {Shi, Xueyu and Prokopyev, Oleg A. and Ralphs, Ted K.},
  title = {Mixed integer bilevel optimization with $k$-optimal follower: a hierarchy of bounds},
  journal = {Mathematical Programming Computation},
  year = 2023,
  volume = 15,
  number = 1,
  pages = {1--51},
  doi = {10.1007/s12532-022-00227-z},
}

@article{GassnerKlinz2009,
  author = {Gassner, Elisabeth and Klinz, Bettina},
  title = {The computational complexity of bilevel assignment problems},
  journal = {4OR -- A Quarterly Journal of Operations Research},
  year = 2009,
  volume = 7,
  number = 4,
  pages = {379--394},
  doi = {10.1007/s10288-009-0098-8},
}

@article{FischerMulukWoeginger2022,
  author = {Fischer, Dennis and Muluk, Komal and Woeginger, Gerhard J.},
  title = {A note on the complexity of the bilevel bottleneck assignment problem},
  journal = {4OR -- A Quarterly Journal of Operations Research},
  year = 2022,
  volume = 20,
  number = 4,
  pages = {713--718},
  doi = {10.1007/s10288-021-00499-6},
}

@inproceedings{bjorklund2012shortest,
  title={Shortest cycle through specified elements},
  author={Bj{\"o}rklund, Andreas and Husfeldt, Thore and Taslaman, Nina},
  booktitle={Proceedings of the 2012 Annual ACM-SIAM Symposium on Discrete Algorithms (SODA)},
  pages={1747--1753},
  year={2012},
  organization={SIAM},
  doi = {10.1137/1.9781611973099.139},
}

@inproceedings{nealOpen,
  title         =   {Open Problems},
  author        = {Nathaniel Dean},
  editor       = {Neil Robertson and
                  Paul D. Seymour},
  booktitle        = {Graph Structure Theory, Proceedings of a {AMS-IMS-SIAM} Joint Summer
                  Research Conference on Graph Minors held June 22 to July 5, 1991,
                  at the University of Washington, Seattle, {USA}},
  series       = {Contemporary Mathematics},
  volume       = {147},
  publisher    = {American Mathematical Society},
  year         = {1993},
  doi          = {10.1090/CONM/147},
}

\end{document}